\algnewcommand\Input{\textbf{Input: }}
\algnewcommand\Output{\textbf{Output: }}
\setlist{nosep} % remove extra spacing from lists 
\newtheorem{theorem}{Theorem}
\newtheorem{proposition}[theorem]{Proposition}
\newtheorem{lemma}[theorem]{Lemma}
\newtheorem{corollary}[theorem]{Corollary}
\newtheorem{definition}[theorem]{Definition}
\newtheorem{problem}[theorem]{Problem}
\theoremstyle{remark}
\newcommand*{\q}{\mathbb{Q}}
\newcommand{\ket}[1]{|#1\rangle}
\newcommand{\U}{\mathrm{U}}
\renewcommand{\O}{\mathcal{O}}
\renewcommand{\U}{\mathfrak{U}}
\newcommand{\Oe}{\mathcal{O}_E}
\newcommand{\R}{\mathcal{R}}
\newcommand{\p}{\mathfrak{p}}
\newcommand{\n}{\mathfrak{n}}
\newcommand{\s}{\mathfrak{s}}
\renewcommand{\a}{\mathfrak{a}}
\newcommand{\nuv}{\Vec{\nu}}
\newcommand{\Bi}{B_{\mathrm{in}}}
\newcommand{\Bo}{B_{\mathrm{out}}}
\newcommand{\inn}{{\mathrm{inn}}}
\newcommand{\out}{{\mathrm{out}}}
\newcommand{\Op}{\mathcal{O}_\p}
\newcommand{\wmaj}{{\preceq_w}\,}
\newcommand{\lmaj}{{\prec_w}\,}
\newcommand{\bv}{\mathbf{v}}
\newcommand{\lU}{\U}
\newcommand{\lD}{\mathfrak{D}}
\title{Multi-qubit circuit synthesis and Hermitian lattices}
\author{Vadym Kliuchnikov, Sebastian Sch{\"o}nnenbeck }
\date{May 2024}
\begin{document}

\maketitle

\begin{abstract}
We present new optimal and heuristic algorithms for exact synthesis of multi-qubit unitaries and isometries.
For example, our algorithms find Clifford and T circuits for unitaries with entries in $\mathbb{Z}[i,1/\sqrt{2}]$.
The optimal algorithms are the A* search instantiated with a new data structure for graph vertices and new consistent heuristic functions. 
We also prove that for some gate sets, best-first search synthesis relying on the same heuristic is efficient.
For example, for two-qubit Clifford and T circuits, our best-first search runtime is proportional to the T-count of the unitary. 
Our algorithms rely on Hermite and Smith Normal Forms of matrices with entries in a ring of integers of a number field,
and we leverage the theory of and algorithms for Hermitian lattices over number fields to prove efficiency. 
These new techniques are of independent interest for future work on multi-qubit exact circuit synthesis and related questions.
\end{abstract}

\newpage

\setcounter{tocdepth}{3}
\tableofcontents

\newpage

\section{Introduction}

Our goal is provide new algorithms for exact synthesis problems. 
One of the most well-studied examples of an exact synthesis problem is the following.
Given an isometry $U$ from $n'$ to $n$-qubits with entries in $\mathbb{Z}[i,1/\sqrt{2}]$,
find an $n$-qubit Clifford and T circuit that implements this isometry. 
More formally, find a sequence of $n$-qubit Clifford unitaries $C_1,\ldots,C_{m+1}$ and integers $j_1,\ldots,j_m \in [n]$
such that 
$$
 U = e^{i \phi} \left( C_1 T_{j_1 } C_2 T_{j_2} \ldots C_{m} T_{j_m} C_{m+1} \right) I_{2^n \times 2^{n'}},
$$
where $T_j$ is a $T$ gate acting on qubit $j$ and 
\begin{equation}
\label{eq:id-isometry}
    I_{2^n \times 2^{n'}} 
    = \left( \begin{array}{c}
         I_{2^{n'} \times 2^{n'}} \\
         \hline
         \textbf{0}_{(2^n -2^{n'}) \times 2^{n'}  } 
    \end{array}\right) 
\end{equation}
When $n'=n$, isometry $U$ is a unitary and above problem is the exact unitary synthesis problem.
When $n'=0$, isometry $U$ is a state and above problem is the exact state preparation problem.

It is also common to assign cost to Clifford and T circuits and consider a problem of finding the cheapest 
circuit that implements given isometry. Typically, all Clifford unitaries cost zero and T gates cost one.
Cost of the circuit is the sum of the costs of the gates that form the circuit.
We will say that a synthesis algorithm is optimal if it finds a cheapest circuit.

We consider several directions of the generalizations of the above example of an exact synthesis problem. 
The first direction is to use isometries with entries in different rings, such as $\mathbb{Z}[i,1/2]$,
$\mathbb{Z}[\sqrt{2},1/2]$ and $\mathbb{Z}[e^{2i\pi/3},1/3]$.
The second directions is to replace the Clifford group with its real sub-group, the real Clifford group, or qutrit Clifford group (using qutrits instead of qubits).
The third direction is to use different non-Clifford gates with different costs instead of the T gate.
In all the generalizations we assign cost zero to the real, qutrit or usual Clifford unitaries.
Our main contribution is to provide a common optimal and heuristic exact synthesis algorithm for these problems.

We formulate exact synthesis problem as a graph search problem.
We show that for a wide range of exact synthesis problems we can use informed search,
such as A* search to find an optimal solution. 
A* search is an algorithm for finding a path from a target vertex of a weighted graph to some fixed source vertex.
A* search relies on function $h(v)$, called heuristic, that provides an estimate of the cost 
of the path between vertex $v$ and the source vertex to improve search efficiency in comparison to an uninformed search, 
such as breadth-first or depth-first search.
When function $h$ satisfies certain properties, it is called a consistent heuristic.
One of our main technical contributions is several constructions of consistent heuristic functions for 
exact synthesis problems.
A* search with consistent heuristic finds an optimal solution, and is also provably the most efficient search algorithm 
amongst all the search algorithms that rely on the same consistent heuristic.
The construction of the heuristic functions relies on relatively simple number-theoretic methods, 
such as Hermite and Smith Normal Forms (\cref{sec:hermite-normal-form}, \cref{sec:snf}).

To instantiate A* search and other similar informed graph search algorithm we need to perform the following steps:

\begin{itemize}
    \item Define a weighted graph, its vertices, edges and edge weights~(\cref{sec:problem-graph})
    \item Define a data-structure for graph vertices~(\cref{sec:graph-vertices})
    \item Define a procedure for computing neighbours of a given vertex~(\cref{sec:vertex-neighbors})
    \item Define a heuristic function $h$ and prove that $h$ is consistent~(\cref{sec:advanced-heuristics})
\end{itemize}

Framing exact synthesis problem as an informed graph search problem opens up opportunities to explore the use of many 
informed search algorithms. Informed search algorithms are a vast research area. For example see Ref.~\cite{Edelkamp2010}, an almost nine-hundred-page book.

We review and use the mathematical and computational tools from the theory of Hermitian lattices in \cref{sec:advanced-preliminaries}, \cref{sec:provable} 
to show that the consistent heuristic functions we construct lead to efficient best-first search \cref{alg:best-first-search} for some of the exact unitary synthesis problems (\cref{thm:best-first-search}).
In practice, we interpolate between A* search and best-first search by replacing the consistent heuristic $h(v)$ with $C h(v)$ for some positive constant $C$.
Our proof of the best-first search efficiency leads to an algorithm for enumeration of all unitaries with given values of the heuristic function, in \cref{sec:lattice-graph},
which might be of the independent interest.
We expect that tools from \cref{sec:advanced-preliminaries}, \cref{sec:provable} will find other application in the study of multi-qubit synthesis 
and other questions related to non-Clifford gates.
We list some of the future work directions and open questions in \cref{sec:conclusion}.

The exact synthesis problem and related questions have been active areas of research in quantum computing since 2012.
Generators of groups of unitary matrices with entries in rings similar to $\mathbb{Z}[i,1/\sqrt{2}]$ have been studied in the context of qubits \cite{Kliuchnikov2013}
\cite{Kliuchnikov2012,Forest2015,Kliuchnikov2015,Parzanchevski2017,Amy2019,Amy2023,Giles2012} and qudits 
\cite{Evra2018,Evra2024,Glaudell2018,Glaudell2022,Glaudell2024,Kalra2023,Kalra2024,Jain2020}. 
In some cases, in addition to generators, complete sets of relations have also been found for these groups 
\cite{Selinger2013,Bian2021,Li2021,Makary2021,Bian2022,Bian2023}.
Multi-qubit and multi-qudit exact synthesis algorithms, both optimal and heuristic, are also an active area of research~\cite{Amy2023b,Glaudell2021}, with many recent results
\cite{Mosca2020,Gheorghiu2021,Gheorghiu2021b,Mukhopadhyay2024,Mukhopadhyay2024b}
relying on the channel representation~\cite{Gosset2014}. 

We introduce new mathematical and computational tools as alternatives to the channel representation and demonstrate their first applications to the exact synthesis problem. 
These tools can also be useful for computing generators and relations of groups of unitary matrices with entries in rings similar to $\mathbb{Z}[i,1/\sqrt{2}]$, 
as well as providing new perspectives on past results.

% We also discuss how the theory of Bruhat-Tits buildings for $S$-arithmetic groups can be used to construct 
% more efficient exact synthesis algorithms that avoid computing all of the vertex neighbours when performing graph search.
% We have implemented our algorithms in Julia and report their practical performance on a set of common and novel benchmarks.

% review previous results and compare to previous work

\section{Preliminaries}

\label{sec:preliminaries}

\subsection{Exact synthesis via graph search}
\label{sec:problem-graph}

In this subsection, we start with a more formal version of an 
exact synthesis problem that introduces some necessary notation. 
We use this notation to define a graph related to this problem, that we call a problem graph,
and show that the exact synthesis problem is equivalent to a graph search problem. 
This subsection is a slight generalization of the graph construction in \cite{Gosset2014}.

\begin{problem}[Exact synthesis]
\label{prob:exact-synthesis}
Let $\R$ be a sub-ring of complex numbers and let $\mathcal{C}$ be a finite group 
of $n$-qudit unitaries with entries in $\R$.
Let $G$ be a finite set on $n$-qudit unitaries with entries in $\R$ that are not in $\mathcal{C}$.
Given an isometry $U$ from $n'$ to $n$ qudits, find a sequence $g_1,\ldots,g_m$ of elements of $G$ 
and sequence $C_1,\ldots,C_{m+1}$ of elements of $\mathcal{C}$ such that 
\begin{equation}
\label{eq:iso}
U = \left( C_1 g_1 C_2 g_2 \ldots C_{m} g_m C_{m+1} \right) \left( \ket{0}^{\otimes (n-n')} \otimes I_{n'} \right),
\end{equation}
where $I_{n'}$ is $n'$-qudit identity and $\ket{0}$ is one-qudit zero state.
\end{problem}
When discussing above problem we refer to group $\mathcal{C}$ as \textbf{the cost-zero group}, to ring $\R$ as \textbf{the base ring}, 
to set $G$ as \textbf{the generators}.
It is also common to define \textbf{a cost function} 
$c$ from $G$ into $\mathbb{R}^+$ and associate cost $\sum_{j=1}^m c(g_k)$
with the solution~\cref{eq:iso} to the exact synthesis problem. 
We call the smallest cost the \textbf{optimal cost} of $U$ and a solutions that achieves the smallest cost an optimal solution.
We require that the cost function $c$ is well-behaved, that is for $g_1, g_2$ from $G$ equal up to left and right multiplication by elements of 
$\mathcal{C}$ the cost is the same. 
We also assume that there are finitely many elements $e^{i \phi}$ in the ring $\R$ and include global phase unitaries $e^{i \phi} I_{n}$ into the group $\mathcal{C}$.

To define the graph corresponding to the problem, we recall that part of \cref{eq:iso} can be simplified: 
$$
 C_1 g_1 C_2 g_2 \ldots C_{m} g_m C_{m+1} = (\tilde C_1 g_1 \tilde C^{-1}_1)(\tilde C_2 g_2 \tilde C^{-1}_2) \ldots (\tilde C_m g_m \tilde C_m^{-1}) \tilde C_{m+1}
$$
for appropriately chosen $\tilde C_j$ from $\mathcal{C}$.
For this reason, we can always adjust the generating set $G$ such that \cref{eq:iso} becomes: 
\begin{equation}
\label{eq:iso2}
U = \left( \tilde g_1 \tilde g_2 \ldots \tilde g_m \tilde C_{m+1} \right) \left( \ket{0}^{\otimes (n-n')} \otimes I_{n'} \right),
\end{equation}
where each $\tilde g_j$ is from the new generating set 
$
\tilde G = \left\{ C g C^{-1} : g \in G, C \in \mathcal{C} \right\}
$. We extend the cost function $c$ to $\tilde G$ via equality $c(C g C^{-1}) = c(g)$.

To compress the number of vertices in the graph corresponding to the exact synthesis problem 
we define an $n'$-qudit finite unitary group $\mathcal{C}^{(n')}$ with entries in $R$ 
\begin{equation}
\label{eq:small-subgroup}
 \mathcal{C}^{(n')} = \left\{ U : I_{n'-n} \otimes U  \in \mathcal{C} \right\},
\end{equation}
that we call \textbf{restricted cost-zero group}.  
For example, $\mathcal{C}^{(n)} = \mathcal{C}$ and  $\mathcal{C}^{(0)}$ are phases $e^{i \phi}$
such that $e^{i \phi} I_n$ is in $\mathcal{C}$.
By definition, $I_{n-n'} \otimes \mathcal{C}^{(n')}$ is a sub-group of $\mathcal{C}$.

Now we can define the infinite un-directed \textbf{problem graph} $\mathcal{G}$ corresponding to the exact synthesis problem as:
\begin{enumerate}
    \item graph vertices are sets
    \begin{equation}
    \label{eq:isometry-vertex}
        v_U = \left\{ U\left( \ket{0}^{\otimes (n-n')} \otimes C' \right) : C' \in \mathcal{C}^{(n')} \right\} \text{ for } U \text{ with entries in }\R
    \end{equation} 
    \item Edges are $(v_U,v_{g U})$ for isometries $U$, $g$ from $G$; the edge weights are $c(g)$.
\end{enumerate}
When there are several vertices $v_C$ for $C$ from $\mathcal{C}$, that is when $|\mathcal{C}|/|\mathcal{C}^{(n')}| > 1$, we introduce additional vertex $s$, weight-zero edges $(s,v_C)$ and call $s$ the source of $\mathcal{G}$.
When there is only one vertex $v_C$, we set $s = v_C$ and call it the source vertex of $\mathcal{G}$.

% Every isometry $U$ with entries in $R$ corresponds to a graph vertex 
% \begin{equation}
% \label{eq:isomerty-vertex}
%      v_U =  \left\{ U  C'  : C' \in \mathcal{C}^{(n')} \right\}
% \end{equation}
% because we consider exact synthesis problems that always have a solution.
Every solution to the exact synthesis problem with cost $c$ corresponds to a path of weight $c$ in the graph $\mathcal{G}$ 
between $v_U$ and source $s$. 
The shortest path from $v_U$ to the source vertex $s$ correspond to a cheapest circuit that implements $U$.
There is a simple correspondence between any circuit for isometry $U$ and isometry $C_1 U C_2$ for $C_1 \in \mathcal{C}$
and  $C_2 \in \mathcal{C}^{(n')}$. 
We call isometries $C\left( \ket{0}^{\otimes (n-n')} \otimes I_n \right)$ where $C \in \mathcal{C}$ \textbf{cost-zero isometries}.

To instantiate graph-search algorithms we need a compact data-structure for graph vertices.
One example of such a compact data-structure for Clifford and T exact synthesis was introduced in \cite{Gosset2014} and is based on the channel representation of unitaries.
We will introduces a new, alternative even more compact data-structure for graph vertices of graph $\mathcal{G}$ in \cref{sec:graph-vertices}
that relies on some additional property of the finite group $\mathcal{C}$ discussed in \cref{sec:basis-change-property}.
Next we briefly discuss a simple approach to ensure that degree of vertices of $\mathcal{G}$ is small.
% The next subsection describes the key property of the finite group $\mathcal{C}$ required for our construction of the compact data-structure for graph vertices.

\subsubsection{Normalized gate set}
Generators $\tilde G$ used in \cref{eq:iso2} can be somewhat redundant because it is possible that $\tilde g_1,\tilde g_2$
from $\tilde G$ are related by a cost-zero unitary as $\tilde g_2  = \tilde g_1 C$.
We can get rid of generator $\tilde g_2$ by propagating $C$ to the end of \cref{eq:iso2}.
This motivates the following definition: 
\begin{definition}[Normalized gate set]
\label{def:normalized-gate-set}
A set of unitaries $\hat G$ is a normalized gate set with respect to cost-zero group $\mathcal{C}$
if the following holds:
\begin{itemize}
    \item for any $g$ from $G$, $C g C^{-1} = g' C'$ for $g'$ from $\hat G$ and $C'$ from $\mathcal{C}$,
    \item there are no $g_1 \ne g_2$ in $G$ such that $g_1 = g_2 C$ for $C$ from $\mathcal{C}$
\end{itemize} 
\end{definition}

Pauli exponents $e^{i \pi(I-P)/8}$ introduced in~\cite{Gosset2014} are an example of normalized generators for Clifford and T exact synthesis problems.
Using normalized generators $\hat G$ instead of $\tilde G$ when defining edges reduces vertex degree of the problem graph $\mathcal{G}$.
Later, in \cref{sec:normalized-gate-sets-algo} we describe a simple breadth-first search algorithm to construct a normalized gate set starting from any gate set,
when cost-zero group $\mathcal C$ has additional structure described in the next subsection.
 
\subsection{The basis change property}
\label{sec:basis-change-property}
Our exact synthesis methods apply to exact synthesis \cref{prob:exact-synthesis} for which ring $\R$, finite groups $\mathcal{C}, \mathcal{C}^{(n')}$
have additional number-theoretic structure.
In this sub-section we use $n,n',\mathcal{C}$ as defined in \cref{prob:exact-synthesis} and $\mathcal{C}^{(n')}$ as defined in \cref{eq:small-subgroup}.

Let us first describe the structure of rings $\R$.
\begin{definition}[$\xi$-ring]
\label{def:xi-ring}
Let $\O$ be a principal ideal domain, such that its field of fraction has an involution\footnote{involution is a degree-two field automorphsim} $^\ast$.
Let $\xi$ from $\O$ generate a prime ideal $\p = \xi \O$ such that $\p^d = p  \O$ for some integer prime $p$ and positive integer $d$.
We call ring $\R_\xi = \{ z / \xi^k : z\in  \O, k \in \mathbb{N} \}$ 
a $\xi$\textbf{-ring}. 
When $\O$ is a ring of integers of a number field, $\xi$-ring is a \textbf{global $\xi$-ring}.
\end{definition}
We refer to $\xi$ as \textbf{the denominator of} $\R$, to $\O$ as the \textbf{ring of integers of }$\R$, to $^\ast$ as the \textbf{complex conjugation} of $\R$
and to field of fraction of $\O$ as the \textbf{field of fraction} of $\R$.
For the first part of the paper, we will be interested in global $\xi$-ring where field of fraction $E$ is a CM-field or a totally-real field and 
$\O = \Oe$ is the ring of integers of $E$. 
In this case, $\R$ is an example of a ring of $S$-units of a number field $E$ with $S = \{ \p \}$.
We will need more general $\xi$-rings in \cref{sec:provable}.

% We restrict our attention to the case when $\Oe$ ... .

A simple example of a global $\xi$-ring $\R$ is $\mathbb{Z}[i,1/2]$.
Indeed, $\mathbb{Z}[i,1/2]$ is a sub-ring of number field
$$
\q(i) = \{ q_1 + i q_2 : q_1,q_2 \in \q \},
$$
the ring of integers of $\q(i)$ is 
$$
\mathbb{Z}[i] = \{ z_1 + i z_2 : z_1,z_2 \in \mathbb{Z} \},
$$
and $1+i$ generates a totally ramified ideal $(1+i)^2 \mathbb{Z}[i]  = 2  \mathbb{Z}[i]$.
Ring of integers $\mathbb{Z}[i]$ is a principal ideal domain because there exist Euclidean algorithm in $\mathbb{Z}[i]$.
Up to a choice of global phases, all Clifford group unitaries can be written as unitaries with entries in $\mathbb{Z}[i,1/2]$.
The other examples of global $\xi$-rings $\R$ are summarized in \cref{tab:s-integers-examples}. 

\begin{table}[t]
    \centering
    \begin{tabular}{|c||c|c|c|c|}
    \hline 
    $\R_\xi$ & $E$ & $\Oe$ & $\xi$ & Degree of $E$\tabularnewline
    \hline 
    \hline 
    $\mathbb{Z}[i,1/2]$ & $\mathbb{Q}(i)$ & $\mathbb{Z}[i]$ & $1+i$ & $2$\tabularnewline
    \hline 
    $\mathbb{Z}[\sqrt{2},1/2]$ & $\mathbb{Q}(\sqrt{2})$ & $\mathbb{Z}[\sqrt{2}]$ & $\sqrt{2}$ & $2$\tabularnewline
    \hline 
    $\mathbb{Z}[i,1/\sqrt{2}]$ & $\mathbb{Q}(\zeta_8)$ & $\mathbb{Z}[\zeta_8]$ & $1+\zeta_8$ & $4$\tabularnewline
    \hline 
    $\mathbb{Z}[\zeta_{16},1/(1+\zeta_{16})]$ & $\mathbb{Q}(\zeta_{16})$ & $\mathbb{Z}[\zeta_{16}]$ & $1+\zeta_{16}$ & $8$\tabularnewline
    \hline 
    \hline 
    $\mathbb{Z}[\zeta_3,1/\sqrt{-3}]$ & $\mathbb{Q}(\zeta_3)$ & $\mathbb{Z}[\zeta_3]$ & $\sqrt{-3}$ & $2$\tabularnewline
    \hline 
    \end{tabular}
    \caption{Examples of global $\xi$-rings $\R_\xi$~(\cref{def:xi-ring}) commonly used to formulate exact synthesis problems,
    $\zeta_n$ is the $n$-th root of unity $e^{2\pi i/n}$. 
    Ideal $\xi \Oe$ is a totally ramified prime ideal,
    $\Oe$ is a principal ideal domain.}
    \label{tab:s-integers-examples} 
\end{table}

We also require that cost-zero group $\mathcal{C}$ and related groups have a simple number-theoretic description.
We denote by $\mathcal{U}_{\R,n}$ the set of all $n$-qudit unitaries with entries in $\xi$-ring $\R$~(\cref{def:xi-ring}), that is 
square $d^n \times d^n$ matrices with entries in $\R$ such that $U^\dagger U$ is the identity matrix ($d$ is the qudit dimension).
Note that $U^\dagger = (U^\ast)^T$ with $U^\ast$ being element-wise complex conjugate of $U$ using the complex-conjugation map $^\ast$ of $\R$.
Similar $\mathcal{U}_{\R,n' \rightarrow n}$ are all isometries from $n'$ qudits to $n$ qudits with entries in $\R$.

We require that there exist an invertible matrix $\Bo$ an an invertible matrix $\Bi$ such the following 
equations hold: 
\begin{equation}
\label{eq:basis-change-property}
\begin{array}{rcl}
 \mathcal{C} & = & \left\{  U \in \mathcal{U}_{\R,n} : \Bo^{-1}~U \Bo \text{ has entries in } \Oe \right\}  \\ 
 \mathcal{C}^{(n')} =  \left\{ U  \in \mathcal{U}_{\R,n'} : I_{n'-n} \otimes U  \in \mathcal{C} \right\} & = & \left\{  U \in \mathcal{U}_{\R,n'} : \Bi^{-1}~U \Bi \text{ has entries in } \Oe \right\}  \\
 \left\{ C(\ket{0}^{\otimes (n-n')} \otimes I_{n'}) : C \in \mathcal{C} \right\} & = &  \left\{  U \in \mathcal{U}_{\R,n' \rightarrow n} : \Bo^{-1}~U \Bi \text{ has entries in } \Oe \right\} 
 \end{array}
\end{equation}
We say that a finite $n$-qudit unitary group  $\mathcal{C}$ has \textbf{the basis change property} when it consist of matrices with entries $\xi$-ring $\R_\xi$~(\cref{def:xi-ring}) and it satisfies above equations~\cref{eq:basis-change-property} for all non-negative integers $n' \le n$.
Matrices $\Bi$ and $\Bo$ are the \textbf{input and output basis change matrices}.
Informally, the basis change property states that cost-zero group, cost-zero restricted group and cost-zero isometries become matrices with entries in $\Oe$ when considered in an appropriately chosen bases.
        
Groups $\mathcal{C}$ that satisfy above properties include $n$-qubit Clifford group, $n$-qubit real Clifford group for certain $n,n'$, and one qutrit Clifford group. 
We use the matrices below to express $\Bo, \Bi$ given in \cref{tab:basis-changes} and summarize some of the recent results from~\cite{k2024}.
\begin{equation}
\label{eq:basis-changes}
B_{\mathbb{C}} = 
\left(
\begin{array}{cc}
\frac{1}{1+i} & 0 \\
\frac{1}{1+i} & 1 \\
\end{array}
\right), \quad
B_{\mathbb{R}} = 
\left(
\begin{array}{cc}
\frac{1}{\sqrt{2}} & 0 \\
\frac{1}{\sqrt{2}} & 1 \\
\end{array}
\right), \quad
B_{(3)} = 
\left(
\begin{array}{ccc}
1/\sqrt{-3} & 0 & 0 \\
1/\sqrt{-3} & 1 & 0 \\
1/\sqrt{-3} & 0 & 1 \\
\end{array}
\right)
\end{equation}
\begin{equation}
\label{eq:rational-basis}
B_{\q,n} =
\left(
\begin{array}{c|c}
     \mathrm{Re}(B_{\mathbb{C}}^{\otimes (n-1)}) & -\mathrm{Im}(B_{\mathbb{C}}^{\otimes (n-1)}) \\
     \hline 
     \mathrm{Im}(B_{\mathbb{C}}^{\otimes (n-1)}) &  \mathrm{Re}(B_{\mathbb{C}}^{\otimes (n-1)}).
\end{array}
\right).
\end{equation}
In recent work~\cite{k2024} we have shown that $n$-qubit Clifford group, generated by phase-adjusted Hadamard 
$\tilde H = \frac{1}{1+i}\left( \begin{array}{cc}
    1 & 1 \\
    1 & -1
\end{array} \right)$, 
S and CNOT gates and appropriately chosen global phases satisfies~\cref{eq:basis-change-property} for a wide range of base rings $R$.
We also numerically checked in~\cite{k2024} that the real sub-group of the $n$-qubit Clifford group, that we call real Clifford group, 
also satisfies~\cref{eq:basis-change-property}  for $n=1,2,3$ with $n'=n, 0$.
Additionally, we numerically checked in~\cite{k2024} that one qutrit Clifford group satisfies~\cref{eq:basis-change-property}
with the base ring in \cref{tab:basis-changes}. Additional examples of the groups and corresponding base-rings that satisfy~\cref{eq:basis-change-property} 
can be found in appendices in~\cite{k2024}. We conjecture that basis change property holds for $n$-qubit real Clifford groups and 
$n$-qutrit Clifford groups. 
% \textbf{}We sketch a partial argument towards this conjecture in the appendix. The argument relies on the fact that Clifford groups are maximal finite subgroups of the related unitary group.

\begin{table}[t]
    \centering
\begin{tabular}{|cc|c|c|c|c|}
\hline 
\multicolumn{2}{|c|}{Group} & Qudit & Base & \multirow{2}{*}{$\Bo$} & \multirow{2}{*}{$\Bi$}\tabularnewline
\multicolumn{2}{|c|}{$\mathcal{C}$} & dimension & $\xi$-ring $\R_\xi$ &  & \tabularnewline
\hline 
\hline 
Clifford group & $\mathcal{C}_{\mathbb{C},n}$ & 2 & $\mathbb{Z}[i,1/2]$ & {$B_{\mathbb{C}}^{\otimes n}$}, eq.(\ref{eq:basis-changes}) & $B_{\mathbb{C}}^{\otimes n'}$\tabularnewline
\hline 
Clifford group & $\langle\mathcal{C}_{\mathbb{C},n},\zeta_8 I_{n}\rangle$ & 2 & $\mathbb{Z}[i,1/\sqrt{2}]$ & {$B_{\mathbb{C}}^{\otimes n}$} & $B_{\mathbb{C}}^{\otimes n'}$\tabularnewline
\hline 
Real Clifford  & $\mathcal{C}_{\mathbb{R},n}$ & 2 & $\mathbb{Z}[\sqrt{2},1/2]$ & $B_{\mathbb{R}}^{\otimes n}$, eq.(\ref{eq:basis-changes}) & $B_{\mathbb{R}}^{\otimes n'}$\tabularnewline
group &  &  &  & $n=1,2,3$ & $n'=n,0$ \tabularnewline
\hline 
\hline 
Qutrit Clifford  & $\mathcal{C}_{(3)}$ & 3 & $\mathbb{Z}[\zeta_3,1/\sqrt{-3}]$ & {$B_{(3)}$, $n=1$} & $B_{(3)}$,$n'=1$\tabularnewline
group &  &  &  & eq.(\ref{eq:basis-changes})   & \textbf{$1$$,n'=0$}\tabularnewline
\hline 
\end{tabular}
    \caption{Examples of finite groups that admit a simple number-theoretic description together with related subgroups and isometries via a basis change 
    as described by \cref{eq:basis-change-property}~\cite{k2024}.
    Note that $n$-qubit Clifford group $\mathcal{C}_{\mathbb{C},n}$ only includes a global-phase adjusted Hadamard to ensure that all unitaries in it have entries in $\mathbb{Z}[i,1/2]$.
    Similarly, $\mathcal{C}_{(3)}$ contains only global-phase adjusted qutrit Hadamard so that all unitaries in $\mathcal{C}_{(3)}$ have entries in  $\mathbb{Z}[\zeta_3,1/\sqrt{-3}]$.
    We allow $n'=0$, use convention $B^{\otimes 0} = 1$, and treat scalar $1$ as one by one matrix. 
    For the details of $\R_\xi$ see \cref{tab:s-integers-examples}.
    }
    \label{tab:basis-changes}
\end{table}

\subsubsection{Modular basis change property}
\label{sec:modular-basis-change}

For constructions of advanced heuristic functions in~\cref{sec:advanced-heuristics} we require that input and output basis change matrices have an additional property,
that holds for the basis change matrices related to Clifford groups.

\begin{definition}[Unimodular and $\xi$-modular matrices]\label{def:xi-modular}
Matrix $A$ entries in principal ideal domain $\O$ is $\O$\textbf{-unimodular} when it has an inverse with entries in $\O$.
Matrix $A'$ with entries in $\xi$-ring $\R$ with ring of integer $\O$ is $\xi$\textbf{-modular} if there exist integer $k$ such that $A' = \xi^{k} A$ 
for some $\O$-unimodular matrix $A$.
\end{definition}

If a finite $n$-qudit group has a basis change property with the basis change matrices $B$ such that $B^\dagger B$ is $\xi$-modular, we say that the group has a \textbf{modular basis change property}.
For example, this is the case for qubit Clifford groups and real Clifford groups in \cref{tab:basis-changes}.

\subsection{Common unitary matrices, isometries, states and gate sets}
\label{sec:common-matrices}

Here we briefly recall some common matrices and gate sets used in quantum computing. 
\begin{equation}
\label{eq:common-complex}
\text{Z}=\left(\begin{array}{cc}
1 & 0\\
0 & -1
\end{array}\right),\,\text{X}=\left(\begin{array}{cc}
0 & 1\\
1 & 0
\end{array}\right),\,\text{S}=\left(\begin{array}{cc}
1 & 0\\
0 & i
\end{array}\right),\,\text{T}=\left(\begin{array}{cc}
1 & 0\\
0 & \zeta_{8}
\end{array}\right),\,\sqrt{\text{T}}=\left(\begin{array}{cc}
1 & 0\\
0 & \zeta_{16}
\end{array}\right)
\end{equation}

\begin{equation}
\label{eq:common-real}
\text{S}_{y} = \left(\begin{array}{cc}
\cos(\frac{\pi}{4}) & -\sin(\frac{\pi}{4})\\
\sin(\frac{\pi}{4}) & \cos(\frac{\pi}{4})
\end{array}\right),\,\text{T}_{y}=\left(\begin{array}{cc}
\cos(\frac{\pi}{8}) & -\sin(\frac{\pi}{8})\\
\sin(\frac{\pi}{8}) & \cos(\frac{\pi}{8})
\end{array}\right),\,\text{H}=\frac{1}{\sqrt{2}}\left(\begin{array}{cc}
1 & 1\\
1 & -1
\end{array}\right)
\end{equation}

For any $n$-qubit unitary $\text{U}$ we define $(n+1)$-qubit controlled-$U$
as 

\[
\text{CU}=\frac{I_1+Z}{2}\otimes I_n +\frac{I-Z}{2}\otimes U
\]
For example, we use matrices $\text{CT}$,$\text{CS}$, $\text{CH}$,
$\text{CT}_{y}$,$\text{CS}_{y}$ when defining gate sets. 

We list come common gate sets and their names in \cref{tab:common-gate-sets}.
Clifford and $\text{T}$, $\sqrt{\text{T}}$ is commonly referred to as simply Clifford and $\sqrt{\text{T}}$.
Clifford and $\text{T},\text{T}^{\otimes2},\ldots,\text{T}^{\otimes n}$ is used to find T depth optimal circuits.
When using synthesis algorithm described in this paper we derive a normalized gate set for each of this common gate sets
using the algorithm from \cref{sec:normalized-gate-sets-algo}.
In relation to \cref{prob:exact-synthesis}, the gate set names start with the cost-zero group and then list 
the generators with the non-zero cost.

\begin{table}[ht]
    \centering
\begin{tabular}{|c|c|c|c|c|}
\hline 
Gate set & Basis \textbf{$B$} & Field $E$ & Denominator $\xi$ & Minimal $n$\tabularnewline
\hline 
\hline 
Clifford and $\text{T}$ & $B_{\mathbb{C}}^{\otimes n}$ & $\mathbb{Q}(\zeta_{8})$ & $1+\zeta_{8}$ & $1$\tabularnewline
\hline 
Clifford and $\text{T}$, $\sqrt{\text{T}}$ & $B_{\mathbb{C}}^{\otimes n}$ & $\mathbb{Q}(\zeta_{16})$ & $1+\zeta_{16}$ & $1$\tabularnewline
\hline 
\hline 
Clifford and $\text{CS}$ & $B_{\mathbb{C}}^{\otimes n}$ & $\mathbb{Q}(i)$ & $1+i$ & $2$\tabularnewline
\hline 
Clifford and $\text{CH}$ & $B_{\mathbb{R}}^{\otimes n}$ & $\mathbb{Q}(\sqrt{2})$ & $\sqrt{2}$ & $2$\tabularnewline
\hline 
Clifford and $\text{T},\text{T}\otimes\text{T},\text{CT}$ & $B_{\mathbb{C}}^{\otimes n}$ & $\mathbb{Q}(\zeta_{8})$ & $1+\zeta_{8}$ & $2$\tabularnewline
\hline 
Clifford and $\text{T}_{y},\text{CS}_{y},\text{CT}_{y}$ & $B_{\mathbb{R}}^{\otimes n}$ & $\mathbb{Q}(\cos(\frac{\pi}{8}))$ & $2+2\cos(\frac{\pi}{8})$ & $2$\tabularnewline
\hline 
Clifford and $\text{T},\text{T}^{\otimes2},\ldots,\text{T}^{\otimes n}$ & $B_{\mathbb{C}}^{\otimes n}$ & $\mathbb{Q}(\zeta_{8})$ & $1+\zeta_{8}$ & $2$\tabularnewline
\hline 
\hline 
Clifford and $\text{CS}$, $\text{CCZ}$, $\text{CCS}$ & $B_{\mathbb{C}}^{\otimes n}$ & $\mathbb{Q}(i)$ & $1+i$ & $3$\tabularnewline
\hline 
Clifford and $\text{CS}_{y}$, $\text{CCZ}$, $\text{CCS}_{y}$ & $B_{\mathbb{R}}^{\otimes n}$ & $\mathbb{Q}(\sqrt{2})$ & $\sqrt{2}$ & $3$\tabularnewline
\hline 
\end{tabular}
    \caption{Common $n$-qubit gate sets used in quantum computing. See \cref{eq:basis-changes} for the definition of $B_{\mathbb{C}}$, $B_{\mathbb{R}}$.
    Unitaries from the gate set have entries in a global $\xi$-ring $\R$~(\cref{def:xi-ring}) with field of fractions $E$ and denominator $\xi$. 
    The cost-zero group has a modular basis change property with matrix $B$~(\cref{tab:basis-changes}). 
    We omit word ``real'' in front of the ``Clifford'' in the gate set name, as it is implied by the corresponding field $E$ being totally real.
    When $n$-qubit gate set includes $k$-qubit gate $U$ we pad it with identity $U \otimes I_{n-k}$ to obtain an $n$-qubit unitary.}
    \label{tab:common-gate-sets}
\end{table}

When discussing states, we use the following notation:
$$
|0\rangle=\left(\begin{array}{c}
1\\
0
\end{array}\right),|1\rangle=\left(\begin{array}{c}
0\\
1
\end{array}\right),|+\rangle=\frac{1}{\sqrt{2}}\left(\begin{array}{c}
1\\
1
\end{array}\right),|x,y\rangle=|x\rangle\otimes|y\rangle,\,|x,y,z\rangle=|x\rangle\otimes|y\rangle\otimes|z\rangle
$$
For $n$-qubit diagonal unitary D, corresponding state is $|\text{D}\rangle = \text{D} |+\rangle^{\otimes n}$. 
Computational basis states on $n$-qubits are all possible tensor products of $|0\rangle$, $|1\rangle$.
They are sometimes labelled by $n$-bit integers, for example $|7\rangle = |1,1,1\rangle$.
It is sometimes convenient to specify unitaries by their action on computational basis states. 
For example, SWAP is a unitary that maps $|x,y\rangle \mapsto |y,x\rangle$ for $x,y \in \{0,1\}$.
Similarly, isometry known as the quantum AND gate is $|x,y\rangle \mapsto |x,y xy\rangle$ for $x,y \in \{0,1\}$.
Finally, there is a unitary that corresponds to a permutation $\sigma$ on integers from $0$ to $2^n-1$, 
as $|j\rangle \mapsto |\sigma(j)\rangle$.

\subsection{Hermite Normal Form}
\label{sec:hermite-normal-form}

In this subsection we review definitions, key results and procedure for Hermite Normal Forms of matrices with entries in $\mathbb{Z}$ 
and ring of integers $\Oe$ of number field $E$. 
Hermite Normal Form for integer matrices is widely used in computational number theory
and almost as ubiquitous as extended Euclidean algorithm \cite{Cohen1993}. 
The Hermite Normal Norm is a canonical form of a matrix up to the right multiplication by $\mathbb{Z}$-unimodular\footnote{$\mathbb{Z}$-unimodular matrices are simply called unimodular} matrices, 
that is given $M \times N$ integer matrices $A$, there is always exist a unimodular $M\times M$ integer matrix $B$, such that $B A$ is in Hermite Normal Form.
Moreover, two $M \times N$ integer matrices $A, A'$ have the same Hermite Normal From if and only if and only if there exist 
unimodular $M\times M$ integer matrix $B$ such that $A = B A'$.
There are numerous very efficient algorithms for the Hermite Normal Form and this is still an active area of research \cite{Pernet2010,Fieker2014}.
Implementations of state of the art Hermite Normal Form algorithms are available in many number theory libraries \cite{Flint,Nemo}.

It is possible to extend the notion of Hermite Normal Form to rings of integers $\Oe$ that are principal ideal domains.
To ensure uniqueness of the  Hermite Normal Form in this case, one needs to be able to efficiently compute canonical elements 
of $\Oe$ up to multiplications by units in $\Oe$~\cite{Conti90}.
When computing canonical elements of $\Oe$ up to a unit is not very efficient or $\Oe$ is not a principal ideal domain,
one can also use pseudo-matrices and Hermite Normal Form algorithms for pseudo-matrices~\cite{Cohen2000,Fieker2014}.

\subsection{Smith Normal Form}
\label{sec:snf}

We recall some basic results regarding Smith Normal Form for matrices with entries in principal ideal domains.
\begin{theorem}[Theorem 3.8 in \cite{jacobson2009basic}]
\label{thm:smith-normal-form}
Consider $M \times N$ matrix $A$ with entries in a principal ideal domain $\Oe$.
There exist unimodular matrices $L$, $R$ with entries in $\Oe$ such that $L A R$ is equal to the following ``diagonal'' matrix
$$ D = 
\left(\begin{array}{cccc|c}
d_{1} &  &  & 0 & \\
 & d_{2} &  &  & \mathbf{0}_{r\times(N-r)}\\
 &  & \ddots & & \\
 0 &  &  & d_{r}\\
 \hline
 & \mathbf{0}_{(M-r)\times r} &  &  & \mathbf{0}_{(M-r)\times (N-r)}
\end{array}\right)
$$
with $d_j$ being non-zero elements of $\Oe$ such that $d_{j}$ divides $d_{j+1}$ for $j \in [r-1]$ 
and $r$ equal to rank of $A$.
\end{theorem}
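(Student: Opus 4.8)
The plan is to give the classical matrix-reduction argument by induction on $\min(M,N)$, exploiting the fact that in a principal ideal domain every nonzero element $a$ has a finite, well-defined number of prime factors counted with multiplicity, which I will write $\ell(a) \in \mathbb{N}$. (One could instead invoke the structure theorem for finitely generated modules over a PID applied to $\mathrm{coker}(A)$, but the direct matrix argument produces the required $L,R$ explicitly.) If $A=0$ we are done with $r=0$; otherwise I first pick, among all matrices $L'AR'$ with $L',R'$ $\Oe$-unimodular, one whose $(1,1)$-entry $a$ is nonzero with $\ell(a)$ minimal. Such a choice exists: permutation matrices are unimodular, so some nonzero entry can always be moved into position $(1,1)$, and $\ell$ is $\mathbb{N}$-valued.

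Next I claim the pivot $a$ divides every entry of the first row and first column. If $a \nmid A_{1j}$, write $g=\gcd(a,A_{1j})=ua+vA_{1j}$; the $2\times 2$ matrix with first column $(u,v)^{T}$ and second column $(-A_{1j}/g,\,a/g)^{T}$ has determinant $1$, hence is unimodular, and right-multiplying columns $1$ and $j$ by it puts $g$ in position $(1,1)$ with $\ell(g)<\ell(a)$ since $g$ properly divides $a$ — contradicting minimality. The column case is symmetric. With $a$ dividing the whole first row and column, subtracting $\Oe$-multiples of the first row and column from the others (elementary, hence unimodular, operations) reduces $A$ to a matrix whose first row and column are $(a,0,\dots,0)$.

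I then claim $a$ divides every entry $A_{ij}$ of the remaining $(M-1)\times(N-1)$ block. If not, add row $i$ to row $1$; this leaves the $(1,1)$-entry equal to $a$ but brings $A_{ij}$ into the first row, and the argument of the previous paragraph again yields a $(1,1)$-entry of strictly smaller $\ell$, a contradiction. Setting $d_1=a$, I apply the induction hypothesis to the block to obtain $\Oe$-unimodular $L'',R''$ and a chain $d_2\mid\cdots\mid d_r$; the block-diagonal matrices $\diag(1,L'')$ and $\diag(1,R'')$ are unimodular and finish the diagonalization. Since every block entry is divisible by $d_1$, and $d_2,\dots,d_r$ are $\Oe$-linear combinations of block entries, $d_1\mid d_2$, which extends the divisibility chain. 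Finally, $\Oe$-unimodular matrices are invertible over the fraction field $E$, so they preserve the $E$-rank, giving $r=\operatorname{rank}A$.

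The main obstacle is purely the well-foundedness of the descent: one must be sure that $\ell(a)$ is genuinely finite and well-defined for every nonzero $a\in\Oe$, i.e.\ that a PID is a unique factorization domain satisfying the ascending chain condition on principal ideals, and that each $2\times 2$ move above \emph{strictly} decreases $\ell$ of the pivot rather than merely not increasing it. Granting this, the remainder is routine bookkeeping with unimodular row and column operations.
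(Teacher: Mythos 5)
Your proof is correct: the paper does not prove this theorem itself but cites it as Theorem 3.8 of Jacobson's \emph{Basic Algebra}, and your argument — minimizing the number of prime factors of the pivot over all unimodular transforms, using a Bezout $2\times 2$ block to force a strict drop whenever the pivot fails to divide some entry, then inducting on the complementary block — is exactly the standard proof given in that reference. The one point you rightly flag, that a PID is a UFD so the length function $\ell$ is finite and strictly decreases when the pivot is replaced by a proper divisor, is handled correctly, so there is no gap.
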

We refer to the diagonal matrix $D$ as \textbf{Smith Normal Form} of $A$
and diagonal elements $d_1,\ldots,d_r$ are called \textbf{invariant factors} of $A$.
Importantly, the invariant factors are uniquely defined up to a unit of $\Oe$,
which is a subject of the following theorem: 
\begin{theorem}[Theorem 3.9 in \cite{jacobson2009basic}]
\label{thm:invariant-factors-via-gcd}
Let $A$ be an $M \times N$ matrix of rank $r$ with entries in  principal ideal domain $\Oe$.
For each $j \le r$, let $\Delta_j$ be a greatest common divisor of all $j$-rowed minors of $A$ (that is determinants of all possible $j \times j$
sub-matrices of $A$). Then any set of invariant factors for $A$ differ by units of $\Oe$ from 
\begin{equation}
\label{eq:deltas}
d_1 = \Delta_1, d_2 = \Delta_2 \Delta_1^{-1}, \ldots, d_r = \Delta_r \Delta_{r-1}^{-1}    
\end{equation}
\end{theorem}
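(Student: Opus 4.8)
The plan is to show that, for each $j\le r$, the ideal generated by all $j$-rowed minors of $A$ — equivalently (since $\Oe$ is a principal ideal domain) the element $\Delta_j$ up to a unit — is unchanged when $A$ is multiplied on either side by a unimodular matrix, and then to evaluate this invariant directly on the Smith Normal Form $D$ of $A$ furnished by \cref{thm:smith-normal-form}.

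First I would record how $j$-rowed minors behave under matrix multiplication. For matrices $B$ (size $M\times K$) and $C$ (size $K\times N$) over $\Oe$, and index sets $I\subseteq[M]$, $J\subseteq[N]$ with $|I|=|J|=j$, the generalized Cauchy--Binet identity expresses the minor of $BC$ on rows $I$ and columns $J$ as the sum, over all $j$-element subsets $S\subseteq[K]$, of (the minor of $B$ on rows $I$ and columns $S$) times (the minor of $C$ on rows $S$ and columns $J$); when $j>K$ both sides vanish. In particular every $j$-rowed minor of $BC$ is an $\Oe$-linear combination of the $j$-rowed minors of $B$, and also of those of $C$. Hence the GCD of the $j$-rowed minors of $B$ divides every $j$-rowed minor of $BC$, and likewise for $C$. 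Applying this first with $B=L$, $C=A$ and then with $B=L^{-1}$, $C=LA$ for a unimodular $L$ shows that $A$ and $LA$ have the same $j$-th determinantal divisor up to a unit; the symmetric argument on the right handles $AR$ for unimodular $R$. Consequently $\Delta_j(A)$ and $\Delta_j(LAR)$ are associates for all unimodular $L,R$.

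Now write $D=LAR$ for the Smith Normal Form, with invariant factors satisfying $d_1\mid d_2\mid\cdots\mid d_r$. By the previous paragraph $\Delta_j(A)$ and $\Delta_j(D)$ are associates. The only nonzero $j$-rowed minors of the diagonal matrix $D$ are the products $d_{i_1}d_{i_2}\cdots d_{i_j}$ with $1\le i_1<i_2<\cdots<i_j\le r$. Because of the divisibility chain, $d_1d_2\cdots d_j$ divides every such product and is itself one of them, so it is a greatest common divisor of all $j$-rowed minors of $D$; thus $\Delta_j$ is an associate of $d_1d_2\cdots d_j$ (with the convention $\Delta_0=1$ corresponding to the empty product). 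Dividing consecutive determinantal divisors then gives that $\Delta_j\Delta_{j-1}^{-1}$ is an associate of $d_j$ for each $j\le r$, which is exactly the claimed description of $d_1,\dots,d_r$; and any other set of invariant factors differs from $d_1,\dots,d_r$ by units, so the same formula holds for it.

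The main obstacle is the Cauchy--Binet step: one must verify that combinatorial identity over a general commutative ring. It follows from multilinearity and alternation of the determinant together with the Laplace expansion, and needs only some careful bookkeeping of index sets; no use of fractions or of the PID hypothesis enters here. Everything else is immediate once the identity is in place — the invariance of the determinantal divisors under unimodular transformations and their evaluation on the diagonal form — and the well-definedness of $\Delta_j$ up to a unit is automatic because $\Oe$ is a principal ideal domain, so greatest common divisors exist and are unique up to units.
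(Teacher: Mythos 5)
Your proof is correct and follows essentially the same route as the source the paper defers to: the paper states this theorem without proof, citing \cite{jacobson2009basic}, and the argument there likewise establishes that the determinantal divisors $\Delta_j$ are invariant under unimodular equivalence (via the Cauchy--Binet expansion of minors of a product) and then reads them off from the diagonal Smith form, where the divisibility chain makes $d_1\cdots d_j$ the gcd of the $j$-rowed minors. The only point worth making explicit is that $\Delta_{j-1}\neq 0$ for $j\le r$ (which follows from $\mathrm{rank}(A)=r$ by Laplace expansion of a nonzero $r$-rowed minor), so the quotients $\Delta_j\Delta_{j-1}^{-1}$ are well defined.
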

Ref.~\cite{jacobson2009basic} provides a self-contained exposition of the two above theorems.
Notion of Smith Normal Form generalizes beyond principal ideal domain~\cite{Cohen2000}, 
however considering $\Oe$ that are principal ideal domains is sufficient for our applications.
All the rings of integers for number fields in \cref{tab:common-gate-sets} are principal ideal domains.

There are many efficient algorithm for computing Smith Normal Form~\cite{Cohen1993,Cohen2000} and the invariant factors of matrix $A$. 
As we will see later, in our applications invariant factors are powers of a prime element of $\Oe$.
In this case there is a particularly efficient algorithm~\cite{Lubeck2002} for computing invariant factors for $\Oe = \mathbb{Z}$
that can be easily generalized to the case when $\Oe$ is a principal ideal domain.

We conclude this section with a simple observation regarding the invariant factors of the product of two matrices.
\begin{lemma}
\label{lem:invariant-factors-products}
Let $A$, $B$ be $N \times M$ and $M \times N'$ matrices with entries in a principal ideal domain $\Oe$. 
Let $\Delta^A_j$, $\Delta^B_j$, $\Delta^{AB}_j$ be products of the invariant factors of $A,B$ and 
$AB$ as defined in \cref{eq:deltas}.
Then $\Delta^A_j \cdot \Delta^B_j$ divides $\Delta^{AB}_j$.
\end{lemma}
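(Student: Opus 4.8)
The plan is to reduce the statement to the Cauchy--Binet formula together with the characterization of $\Delta_j$ as a greatest common divisor of $j$-rowed minors provided by \cref{thm:invariant-factors-via-gcd}.

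First I would recall that, by \cref{thm:invariant-factors-via-gcd}, $\Delta^A_j$ agrees up to a unit of $\Oe$ with a greatest common divisor of all $j\times j$ minors of $A$, and similarly for $\Delta^B_j$ and $\Delta^{AB}_j$; in particular $\Delta^A_j$ divides every $j$-rowed minor of $A$ and $\Delta^B_j$ divides every $j$-rowed minor of $B$. I would restrict attention to $j \le \mathrm{rank}(AB)$, observing that $\mathrm{rank}(AB) \le \min(\mathrm{rank}(A),\mathrm{rank}(B))$, so that all three quantities $\Delta^A_j,\Delta^B_j,\Delta^{AB}_j$ are simultaneously defined.

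Next, for a $j$-element subset $I \subseteq [N]$ of rows and a $j$-element subset $K \subseteq [N']$ of columns, let $(AB)_{I,K}$ denote the corresponding $j\times j$ submatrix of $AB$, and likewise $A_{I,J}$, $B_{J,K}$ for $j$-subsets $J \subseteq [M]$. The Cauchy--Binet formula, being a polynomial identity, holds over the commutative ring $\Oe$ and gives
$$
\det\big((AB)_{I,K}\big) \;=\; \sum_{J} \det\big(A_{I,J}\big)\,\det\big(B_{J,K}\big),
$$
the sum ranging over all $j$-element subsets $J \subseteq [M]$. Since $\Delta^A_j \mid \det(A_{I,J})$ and $\Delta^B_j \mid \det(B_{J,K})$ for every such $J$, the product $\Delta^A_j\,\Delta^B_j$ divides each summand and hence divides $\det\big((AB)_{I,K}\big)$. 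As $I$ and $K$ are arbitrary, $\Delta^A_j\,\Delta^B_j$ divides every $j$-rowed minor of $AB$, and therefore divides their greatest common divisor, which is $\Delta^{AB}_j$ up to a unit.

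There is no serious obstacle here; the only point requiring a word of care is the index bookkeeping when ranks drop, namely ensuring $\Delta^A_j$ and $\Delta^B_j$ are defined whenever $\Delta^{AB}_j$ is, which follows from $\mathrm{rank}(AB) \le \min(\mathrm{rank}(A),\mathrm{rank}(B))$. If one prefers to avoid quoting that inequality, it is itself an immediate corollary of the same Cauchy--Binet expansion, since the vanishing of all $j\times j$ minors of $A$ (or of $B$) forces the vanishing of all $j\times j$ minors of $AB$.
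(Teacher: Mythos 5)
Your proof is correct, and it takes a genuinely different route from the paper. The paper first applies the Smith Normal Form (\cref{thm:smith-normal-form}) to write $A = L_A D_A R_A$ and $B = L_B D_B R_B$, observes that $AB$ has the same invariant factors as $D_A R_A L_B D_B$, and then reads off the divisibility from the fact that a $j$-minor of $D_A C D_B$ (with $C = R_A L_B$) factors as $\bigl(\prod_{i \in I} d^A_i\bigr)\bigl(\prod_{k \in K} d^B_k\bigr)\det(C_{I,K})$, the diagonal products being divisible by $\Delta^A_j$ and $\Delta^B_j$ thanks to the divisibility chain $d_1 \mid d_2 \mid \cdots$; no Cauchy--Binet is needed after the SNF pre-conditioning. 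You instead skip the SNF decomposition entirely and apply Cauchy--Binet directly to $AB$, then invoke \cref{thm:invariant-factors-via-gcd} to finish. Your version is more self-contained in the sense that it never uses \cref{thm:smith-normal-form}, only the GCD characterization of \cref{thm:invariant-factors-via-gcd}, at the cost of quoting Cauchy--Binet; the paper's version leverages SNF, which is a central tool of the paper anyway. Your handling of the rank bookkeeping (restricting to $j \le \mathrm{rank}(AB)$ and noting this suffices) is a welcome bit of care that the paper leaves implicit.
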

The proof of the above lemma directly follows from applying \cref{thm:smith-normal-form} to $A,B$.
First, we express $A = L_A D_A R_A $,  $B = L_B D_B R_B $.
Second, we observe that $AB$ has the same invariant factors as $D_A R_A L_B D_B$. 
Finally we use \cref{thm:invariant-factors-via-gcd} to establish divisibility of $\Delta^{AB}_j$
by $\Delta^A_j \cdot \Delta^B_j$, by relying on standard properties of a determinant of a matrix.

When invariant factors are powers of a prime element of $\Oe$, condition $\Delta^A_j \cdot \Delta^B_j$ divides $\Delta^{AB}_j$
can be rewritten as an inequality. 
Given prime ideal $\p$ of $\Oe$ we use $v_\p(x)$ for the power of $\p$ 
in the factorization of $x \Oe$ into prime ideals.
We use convention $v_\p(0) = +\infty$.
Function $v_\p(x)$ called a $\p$-adic valuation.
The divisibility conditions turns into inequality 
\begin{equation}
\label{eq:p-adic-inequality}
v_\p(\Delta^A_j) + v_\p(\Delta^B_j) \le v_\p(\Delta^{AB}_j).
\end{equation}

Finally, note that we can extend the Smith Normal Form to matrices $A'$ with entries in a field $E$ 
by considering matrices $A = \alpha A'$ where $\alpha$ from $\Oe$ is chosen so that $A$ has entries in $\Oe$.

\subsection{Majorization}
\label{sec:majorization}

Non-increasing non-negative integer vector is a vector $(x_1,\ldots,x_N)$ with non-negative integer coordinates such that $x_1 \ge x_2 \ge \ldots \ge x_N$.
Given a vector $x$ we use $x^{\downarrow}$ for $x$ sorted in non-decreasing order.

For two non-increasing non-negative vectors $x,y$ of length $N$ we use notation $x \wmaj y$ when 
\begin{equation}
\label{eq:weak-majorization}
\sum_{j=1}^{k} x_{j} \le \sum_{j=1}^{k} y_{j} \text{ for all } k \le N.
\end{equation}
We say that $y$ \textbf{weakly majorizes} $x$. 

For two non-increasing integer vectors $x,y$ of length $N$ we use notation  $x \lmaj y$ and say that $y$ \textbf{strictly weakly majorizes}
$x$ when $x \wmaj y$ and there exist $k_0\le N$
such that 
\begin{equation}
\label{eq:strict-majorization}
\sum_{j=1}^{k_0} x_{j} < \sum_{j=1}^{k} y_{j}
\end{equation}

The two kinds of majorization, weak and strictly weak, are related via the following proposition,
which is  a generalization of implication $u < v , x \le y + (u-v) \implies x < y$ for real numbers $x,y,u,v$. 
\begin{proposition}
\label{prop:maj-relations}
Consider four non-increasing non-negative vectors $x,y,u,v$ such that 
$u \lmaj v$ and $x \wmaj y + (u-v)$, then $x \lmaj y$.
\end{proposition}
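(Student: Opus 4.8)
The plan is to argue entirely at the level of partial sums, since both $\wmaj$ and $\lmaj$ are defined through them (\cref{eq:weak-majorization}, \cref{eq:strict-majorization}), and then to apply the scalar implication quoted just before the proposition one index at a time. Write $S_k(z) = \sum_{j=1}^{k} z_j$ for a vector $z$ of length $N$ and $k \le N$. In these terms, the hypothesis $u \lmaj v$ unpacks into two facts: $S_k(u) \le S_k(v)$ for every $k \le N$, and $S_{k_0}(u) < S_{k_0}(v)$ for some $k_0 \le N$. The hypothesis $x \wmaj y + (u-v)$ says $S_k(x) \le S_k(y) + S_k(u) - S_k(v)$ for every $k \le N$; here I read this last $\wmaj$ purely as the partial-sum condition, which is all that is used, so there is no need to verify that $y + (u-v)$ is itself non-increasing or non-negative.

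First I would establish the weak part, $x \wmaj y$. Fix $k \le N$. Since $S_k(u) - S_k(v) \le 0$ by the first clause of $u \lmaj v$, the inequality $S_k(x) \le S_k(y) + S_k(u) - S_k(v)$ immediately yields $S_k(x) \le S_k(y)$. As $k$ was arbitrary, $x \wmaj y$.

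Then I would establish the strict clause. Take $k = k_0$, the index furnished by $u \lmaj v$, so that $S_{k_0}(u) - S_{k_0}(v) < 0$. Plugging into the same inequality gives $S_{k_0}(x) \le S_{k_0}(y) + S_{k_0}(u) - S_{k_0}(v) < S_{k_0}(y)$. Combined with $x \wmaj y$, this is exactly the definition of $x \lmaj y$, which completes the argument.

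This is really nothing more than the scalar implication $u < v,\ x \le y + (u-v) \implies x < y$ applied coordinatewise to partial sums, so I do not expect any genuine obstacle. The only point requiring a moment of care — and the one I would flag explicitly in the writeup — is the interpretation of $x \wmaj y + (u-v)$ as the partial-sum inequality \cref{eq:weak-majorization}, rather than as a majorization relation between two honest sorted, non-negative vectors.
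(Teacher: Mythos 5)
Your proof is correct and follows essentially the same route as the paper's: both arguments work directly with partial sums, using $\sum_{j\le k}(u_j-v_j)\le 0$ for all $k$ to get the weak inequalities and the index $k_0$ from $u\lmaj v$ to get the strict one. Your explicit note that $x\wmaj y+(u-v)$ is to be read purely as the partial-sum condition matches the paper's implicit usage.
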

\begin{proof}
By the definition of lexicographical majorization, there exist $k_0$ such that 
$$
\sum_{j=1}^{k_0} u_j - v_j < 0  
$$
For vectors $x$ and $y + (u-v)$ we have
$$
\sum_{j=1}^{k_0} x_j \le \sum_{j=1}^{k_0} (y_j + u_j - v_j) < \sum_{j=1}^{k_0} y_j ,~~ \sum_{j=1}^{k'} x_j \le \sum_{j=1}^{k'} y_j + u_j - v_j \le  \sum_{j=1}^{k'} y_j \text{ for all } k' \le N 
$$
which shows that $x \lmaj y$.
\end{proof}

\section{Basic algorithms and data structures for graph search}
\label{sec:basic-algorithms}

In this section we discuss how to instantiate informed graph search algorithms for 
exact synthesis~\cref{prob:exact-synthesis}~(\cref{sec:problem-graph}) in the case when cost-zero group $\mathcal{C}$
has the basis change property~(\cref{sec:basis-change-property}).

\subsection{Representing graph vertices}
\label{sec:graph-vertices}
The goal of this sub-section is to show that Hermite Normal Form (for matrices with entries in $\Oe$ or $\mathbb{Z}$) can be used to compactly represent 
vertices of the problem graph described in~\cref{sec:problem-graph} when cost-zero group associate with 
the exact synthesis problem has the basis change property~(\cref{sec:basis-change-property}).

Let us recall some notation. 
The basis change property implies that base ring $\R$ associated with the exact synthesis problem 
is defined in terms of a ring of integers $\Oe$ of number field $E$ and denominator $\xi$ from $\Oe$~(\cref{def:xi-ring}).
Moreover, there exist matrices $\Bi,\Bo$ with entries in $\R$ that transform 
cost-zero group $\mathcal{C}$, restricted cost-zero group $\mathcal{C}^{(n')}$
and cost-zero isometries to matrices with entries in $\Oe$~(\cref{eq:basis-change-property}).

Let us now establish that all the isometries in the same problem graph vertex $v_U$ are exactly 
the ones related by the right-multiplication by an invertible matrix with entries in $\Oe$ (when the isometries are written in appropriate basis).
Consider an isometry $U$ with entries in $\R$, recall that the corresponding graph vertex is defined~(\cref{eq:isometry-vertex}) as 
$$
 v_U =  \left\{ U C' : C' \in \mathcal{C}^{(n')} \right\}.
$$
Let write isometry $U$ is a different basis as $ \Bo^{-1} U \Bi$.
Note that two isometries from $v_U$ written in this different basis are related by right multiplication by an invertible matrix with entries in $\Oe$, 
indeed 
$$
 \Bo^{-1} U C' \Bi =  \Bo^{-1} U  \Bi \, \Bi^{-1} C' \Bi.
$$
Let us show that the opposite is true. If isometry $U'$ is such that 
\begin{equation}
\label{eq:integer-relation}
\Bo^{-1} U' \Bi =  \Bo^{-1} U \Bi \, A, \text{ for some invertible  }A \text{ with entries in } \Oe,
\end{equation}
then $U'$ is in $v_U$.
Matrix $A' = \Bi A \Bi^{-1}$ is a unitary matrix with entries in $R$ such that $ \Bi^{-1} A \Bi$ has entries in $\Oe$ because:
$$
\langle A'v, A'v \rangle  = \langle U A'v, U A'v \rangle = \langle U' v, U' v \rangle = \langle v, v \rangle.
$$
By the basis change property~(\cref{eq:basis-change-property}), matrix $A'$ must be from $\mathcal{C}^{(n')}$, 
and therefore $U' \in v_U$.

\subsubsection{Using matrices with entries in the ring of integer of a number field}

Next we show that any problem graph vertex $v_U$ can be uniquely represented by a single matrix with entries in $\Oe$.
Consider an isometry $U$ with entries in $R$, and let $\nu(U)$ be a smallest integer such that 
\begin{equation}
\label{eq:tilde-u}
\tilde U = \xi^{\nu(U)} \Bo^{-1} U \Bi    
\end{equation}
has entries in $\Oe$. 
Let $\hat U$ be the unique Hermite Normal Form\footnote{See \cref{sec:hermite-normal-form} for the discussion regarding the unique Hermite Normal Form for matrices with entries in $\Oe$.}
of $\tilde U^t$. 
We use $\tilde U$ to compactly represent vertex $v_U$ of the problem graph.

Representation $\hat U$ of $v_U$ is well-defined, that is vertices $v_U$, $v_{U'}$ are equal if and only if $\tilde U' = \tilde U$.
First, if $v_U = v_{U'}$, then $\nu(U) = \nu(U')$ and $\hat U = \hat U'$ by the uniqueness of the Hermite Normal Form.
Second, when $\hat U = \hat U'$ we have $\nu(U) = \nu(U')$ and \cref{eq:integer-relation} holds,
which implies that $U' \in v_U$ as discussed above. 
Indeed, the equality $\nu(U) = \nu(U')$ for $N \times N'$ matrices $U,U'$
follows from the observation that $\mathrm{det}(\hat U^\ast \, \hat U^t ) = (\xi \xi^\ast)^{\nu(U) \cdot N'}$.
The relation between $\Bo^{-1} U' \Bi$ and $\Bo^{-1} U \Bi$ then follows from the 
fact that $\hat U = B \tilde U^t$ for some invertible matrix $B$ with entries in $\Oe$.

\subsubsection{Using integer matrices}

Next we show how to use Hermite Normal Form for integer matrices to represent graph vertices.
This relies on representing elements of ring of integers $\Oe$ of degree $d$ as $d\times d$
integer matrices. 
Recall that every ring of integers $\Oe$ has an integral basis $1 = b_1,b_2,\ldots,b_d$ of elements of $\Oe$,
such that every element $x$ of $\Oe$ can be represented as $d$-dimensional integer vector $\mathrm{vec}(x)$
\begin{equation}
\label{eq:vec}
x = \sum_{j=1}^{d} b_j \mathrm{vec}(x)_j  
\end{equation}
Using above notation we define an integer \textbf{representation matrix} $M_x$ associated with $x$ as 
\begin{equation}
\label{eq:matrix-representation}
 \mathrm{vec}(x y) = M_x \mathrm{vec}(y)   \text{ for all } y \in \Oe 
\end{equation}
For any $N' \times N $ matrix $A$ with entries in $\Oe$, we define and $(d N') \times (d N)$ matrix $Z(A)$ with entries in $\mathbb{Z}$
by replacing each entry of $A$ with the corresponding representation matrix, 
that is 
\begin{equation}
\label{eq:as-integer-matrix}
 \left(Z(A)\right)_{(j-1) \cdot d + j', (k-1) \cdot d + k'} = \left(M_{A_{j,k}}\right)_{j',k'} \text{ for } j \in [N'], k \in [N], j',k' \in [d] 
\end{equation}
Using the introduced notation, we define the second representation of vertices $v_U$ of the problem graph using the Hermite Normal form of $Z(\tilde U^t)$ (with $\tilde U$ from \cref{eq:tilde-u}) that we denote as $\bar U$.
Next we show that $\bar U$ is a well-defined representation of a vertex $v_U$, similar to $\hat U$.

Let us first show that for any $U'$ from $v_U$ we have $\bar U = \bar U'$. 
To establish this we show that a square matrix $A$ with entries in $\Oe$ is $\Oe$-modular if and only if matrix $Z(A)$
is unimodular.
To establish this fact we need to relate $\mathrm{det} Z(A)$ and $\mathrm{det}(A)$ for square matrices $A$ with entries in $\Oe$.
First, note that matrices $M_x, M_y$ commute for all $x,y$ from $\Oe$ because $M_x M_y = M_{xy}$.
Second, we apply the a formula for the determinant of block matrices consisting of commuting blocks \cite{Silvester2000} to 
get 
\begin{equation}
\label{eq:det}
\mathrm{det}\,  Z(A) = \mathrm{det}( M_{\mathrm{det}(A)} )
\end{equation}
Above implies that $\mathrm{det}\,  Z(A) = \pm 1$ if and only if $\mathrm{det}(A)$ is a unit in $\Oe$.
This is because $\mathrm{det}(A)$ is a unit of $\Oe$ if and only if $A$ is invertible, 
and $u$ for $\Oe$ is a unit if and only if $\mathrm{det}( M_u ) = \pm 1$. 
We conclude that $\bar U = \bar U'$ because $\tilde U^t$ and $\tilde U'^t$
are related via right multiplication by an invertible matrix $A$ with entries in $\Oe$ and so $Z(\tilde U^t)$
and $Z(\tilde U'^t)$ are related via right multiplication by an invertible integer matrix $Z(A)$,
because $Z(AB) = Z(A)Z(B)$ for any matrices $A$, $B$ with entries in $\Oe$.

Next we show that $\bar U' = \bar U$ implies that $U'$ is from $v_U$.
As discussed above, it is sufficient to show that \cref{eq:integer-relation} holds.
This follows from the fact discussed in the next paragraph.

Let $B,B'$ be two $N' \times N$ matrices with entries in $\Oe$ of rank $N' \le N$,
and let $Z(B) = A Z(B')$ for some invertible integer $(d N' )\times (d N')$ matrix $A$,
then there exist an invertible matrix $A'$ with entries in $\Oe$ such that $Z(A')=A$. 
Above statement follows from the special case $N= N'$, 
because we can always consider a full-rank square sub-matrix of $B$ and the corresponding 
full-rank sub-matrix of $B'$.
To show this special case, we extend map $x \mapsto M_x$ from $\Oe$ to full number-field $E$.
Non-zero elements of $E$ map to invertible rational $d\times d$ matrices that form degree $d$ 
commutative algebra over $\q$.
Matrices $Z(B)$ for $N \times N$ matrices $B$ with entries in $E$ form a degree $d N^2 $ algebra over $\q$.
When $B$ and $B'$ are invertible square matrices with entries in $E$, $A = Z(B) (Z(B'))^{-1}$ also belongs to 
the algebra and there exist matrix $A'$ with entries in $E$ such that $Z(A') = A$.
We conclude that $A'$ must have entries in $\Oe$ because entries of $A$ are integers.

\subsection{Computing normalized gate sets}
\label{sec:normalized-gate-sets-algo}

A simple application for the graph vertex representations in \cref{sec:graph-vertices} is computing normalized gate 
sets starting from Clifford + $U_1,\ldots,U_m$.
First, without loss of generality assume that graph vertices $v_{U_j}$ are all distinct ( if this not the case we can remove some of the $U_j$). 
Second, apply breadth-first search with source vertices $v_{U_j}$ and the neighbours of a given vertex $v_U$
given by
$$
 v_{gUg^{-1}} \text{ for } g \text{ generators of the Clifford group}.
$$
The breadth-first search leads to a finite graph with vertices $v_U$ indexed by $U$ set $G$ that is a normalized gate set.
We refer to the gate sets constructed by such algorithm as \textbf{normalized} Clifford and $U_1,\ldots,U_m$.
This applies to any finite cost-zero group $\mathcal{C}$ with a basis-change property~(\cref{sec:basis-change-property}), such as real and qudit Clifford groups.
We will also say that $G$ is \textbf{normalization of} $U_1,\ldots,U_m$ with respect to the cost-zero group $\mathcal{C}$. 

\subsection{A consistent heuristic function}
\label{sec:simple-heuristic}

Informed graph search requires a function, called a heuristic function $h(v)$, that estimate the cost of reaching the source vertex from given vertex $v$.
Given a graph with source vertex $s$, heuristic function $h$ is a \textbf{consistent heuristic} when for any vertex $v$, its neighbour $v'$ and edge weight $w(v,v')$
the following holds: 
$$
h(v) \le h(v') + w(v,v'),~~h(s) = 0.
$$
For a problem graph $\mathcal{G}$ associated with an exact synthesis problems as in \cref{sec:problem-graph}, 
the first condition becomes: 
$$
h(v_U) \le h(v,v_{gU}) + c(g), \text{ for all } g \in G,
$$
where $G$ is the set of generators and $c(g)$ is the generator cost.
Next we show that when cost-zero group $\mathcal{C}$ in the exact synthesis problem has basis-change property~(\cref{sec:basis-change-property}), 
$\nu(U)$ introduced in \cref{sec:graph-vertices}, \cref{eq:tilde-u} is a consistent heuristic if $c(g) = \nu(g)$.

Let $\Bi, \Bo$ be input and output basis change matrices associated with the basis change property of cost-zero group $\mathcal{C}$,
and let $\Oe$ be the ring of integers and let $\xi$ be the denominator associated with the base ring $\R$ of $\mathcal{C}$.
Recall that $\nu(U)$ is the smallest integer such that 
$$
 \xi^{\nu(U)} \Bo^{-1} U \Bi
$$
has entries in $\Oe$. We immediately see that 
\begin{equation}
\label{eq:nu-consistent}
\nu(gU) \le \nu(g) + \nu(U)
\end{equation}
because matrix 
$$
 \xi^{\nu(U) + \nu(g)} \Bo^{-1} \, g U \, \Bi = \xi^{\nu(g)} \Bo^{-1} \, g \, \Bo \cdot \xi^{\nu(U)} \Bo^{-1} \, U \, \Bi
$$
has entries in $\Oe$.
Moreover, by the basis change property cost-zero isometries are exactly the ones with $\nu(U) = 0$.
We note that the denominator exponent of the channel representation introduced in \cite{Gosset2014} is also a consistent heuristic for Clifford+T exact synthesis.

\subsection{Computing vertex neighbours}
\label{sec:vertex-neighbors}

Here we discuss a few observations that improve efficiency of computing neighbours 
of a given vertex. 
First, we can take advantage of the basis-change property and 
represent generators and isometries using bases given by input and output basis change matrices.
Second, we can reduce computing neighbours to multiplying integer matrices. 
Say $g$ is a generator written as $\tilde g$ in basis $\Bo$
and $c$ is an a column of $\tilde U$ for some isometry $U$.
We can store the column as integer vector using integer coordinates of the column entries in the integral basis of $\Oe$~(\cref{eq:vec}),
that is 
$$
 c = (c_1,\ldots, c_N) \text{ is stored as } \mathrm{vec}(c) = \mathrm{vec}(c_1) \oplus \ldots \oplus \mathrm{vec}(c_N).
$$
The integer coefficients of the product are $\mathrm{vec}(gc) = Z(g) \mathrm{vec}(c)$ where $Z(g)$ is an integer matrix 
define in~\cref{eq:as-integer-matrix}.

When computing the neighbours one can also take advantage of additional structure of the generators.
For example, suppose that we have generators that are indexed by Pauli operators, such as $\exp(i\pi(I-P)/8) = \frac{1+e^{i\pi/4}}{2}I + \frac{1-e^{i\pi/4}}{2}P $.
We can compute $P U$ for all Pauli operators $P$, by reusing the result of computing $(I \otimes P') U$
to compute $(P'' \otimes P') U$ for $P'' \in \{ X,Y,Z \}$. 
When multiplying $(I \otimes P') U$ by $P'' \otimes I$ using integer matrix approach described above, one can take advantage of the fact that $Z( \Bo^{-1} (P'' \otimes I) \Bo ) $ is sparse. 

\section{Advanced heuristic functions}
\label{sec:advanced-heuristics}

The efficiency of A* search algorithm and similar algorithms depends on the heuristic function.
In this section we show that Smith Normal Form for matrices with entries in a principal ideal domain
can be used to construct a variety of heuristic functions.

Let us briefly explain why Smith Normal Form is a useful tool here.
Recall, that by the definition of the exact synthesis~\cref{prob:exact-synthesis}, 
the optimal cost of isometry $U$ and isometry $U' = C U C'$ must be the same 
when $C$ is from cost-zero group $\mathcal{C}$ and 
$C'$ is from the restricted zero-cost group $\mathcal{C}^{(n')}$.
When $\mathcal{C}$ has basis change property with input and output basis change matrices $\Bi, \Bo$,
isometries $U$ and $U'$ written in $\Bi, \Bo$ differ by left and right multiplication by 
the invertible matrices with entries in $\Oe$~(the ring of integers associated with base-ring $\R$ 
of cost-zero group $\mathcal{C}$).
Smith Normal Form is a canonical form a matrix with entries in $\Oe$ that is invariant 
with respect to left and right multiplication of the matrix by invertible square matrices with entries in $\Oe$.
To ensure that heuristic cost estimate $h(U)$ and $h(U')$ have the same value
it is natural to use Smith Normal Form~(\cref{sec:snf}) to construct heuristic $h$.

The following technical lemma about the properties of Smith Normal Forms is the key tool 
for constructing advanced heuristic functions $h$.

\begin{theorem}[Modularity and invariant factors of unitaries]
\label{thm:unitary-invariant-factors}
\textbf{Consider} the following: 
\begin{itemize}
    \item$\xi$-ring $\R$~(\cref{def:xi-ring}) with the ring of integers $\Oe$ and denominator $\xi$,
    \item $2N \times 2N$ matrix $B$ with entries in $\R$ such that $B^\dagger B$ is $\xi$-modular~(\cref{def:xi-modular}),
    \item $2N \times 2N$ unitary $U$ with entries in $\R$.
\end{itemize}
\textbf{Then} unitary $U$ in basis $B$ can be written as 
\begin{equation}
\label{eq:unitary-valuations}
 B^{-1} U B = A_L\,\mathrm{diag}\left(\xi^{k_1},\ldots,\xi^{k_N},\xi^{-k_N},\ldots,\xi^{-k_1}\right)\,A_R 
\end{equation}
for $\Oe$-unimodular~(\cref{def:xi-modular}) matrices $A_L$, $A_R$ and unique non-increasing sequence of non-negative integers $k_1,\ldots,k_N$.
We use $\nuv_B(U)$ for $(k_1,\ldots,k_N)$ and call it \textbf{the coordinates of }$U$ \textbf{with respect to} $B$.
\end{theorem}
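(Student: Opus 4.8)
The plan is to deduce the statement from the Smith Normal Form machinery (\cref{thm:smith-normal-form}, \cref{thm:invariant-factors-via-gcd}) applied to $\tilde U = \xi^{\nu} B^{-1} U B$, where $\nu$ is large enough that $\tilde U$ has entries in $\Oe$, combined with a ``palindromic'' symmetry of the invariant factors forced by unitarity. First I would set $M = B^{-1} U B$ and compute $M^{-1}$: since $U$ is unitary, $U^{-1} = U^\dagger$, so $M^{-1} = B^{-1} U^\dagger B = B^{-1} (B^\dagger)^{-1} \, (B^\dagger U^\dagger B) \, B^{-1} B = (B^\dagger B)^{-1} M^\dagger (B^\dagger B)$ — more precisely one checks $M^{-1} = (B^\dagger B)^{-1}\, (B^{-1} U B)^\dagger \,(B^\dagger B)$ by writing everything out. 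Because $B^\dagger B = \xi^m G$ for an $\Oe$-unimodular $G$ (the $\xi$-modularity hypothesis), conjugation by $B^\dagger B$ is, up to the scalar $\xi^m$ which cancels in a conjugation, conjugation by an $\Oe$-unimodular matrix, hence preserves Smith Normal Form. So $M^{-1}$ and $M^\dagger$ have the same invariant factors up to units of $\Oe$.

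Next I would extract the valuation data. Since $\Oe$ is a principal ideal domain and $\p = \xi\Oe$ is the only prime dividing the denominators in $\R$ (by \cref{def:xi-ring}), every invariant factor of $\tilde U$ is, up to a unit, a power $\xi^{a_j}$ with $0 \le a_1 \le \cdots \le a_{2N}$; write $M = A_L \,\mathrm{diag}(\xi^{a_1},\ldots,\xi^{a_{2N}})\, A_R$ with $A_L, A_R$ $\Oe$-unimodular (absorbing units into $A_L$). Now $M^\dagger = A_R^\dagger \,\mathrm{diag}(\xi^{*\,a_1},\ldots,\xi^{*\,a_{2N}})\, A_L^\dagger$, and since $\xi^*$ generates $\p^* $ which by the total-ramification / PID setup generates the same prime as $\xi$ up to a unit, $M^\dagger$ has invariant factors $\xi^{a_j}$ up to units as well. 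On the other hand, $M^{-1}$ has invariant factors $\xi^{-a_{2N}}, \ldots, \xi^{-a_1}$ up to units — but to make this an honest statement about $\Oe$-valued SNF one multiplies by $\xi^{a_{2N}}$: the invariant factors of $\xi^{a_{2N}} M^{-1}$ are $\xi^{a_{2N}-a_{2N}}, \ldots, \xi^{a_{2N}-a_1}$. Comparing with the fact that $M^{-1}$ and $M^\dagger$ share invariant factors, and $M^\dagger$ shares them with $M$, one gets $\{a_{2N}-a_j\}_j = \{a_j\}_j$ as multisets (after the common shift by $a_{2N}$), which forces $a_j + a_{2N+1-j} = a_{2N}$ for all $j$; setting $k_i = a_{2N} - a_{N+i}$ for $i=1,\ldots,N$ (equivalently $k_i = a_{N-i+1} - $ half the total, handled via the valuation $\nu$), and undoing the $\xi^\nu$ normalization, gives exactly the form $\mathrm{diag}(\xi^{k_1},\ldots,\xi^{k_N},\xi^{-k_N},\ldots,\xi^{-k_1})$ with $k_1 \ge \cdots \ge k_N \ge 0$.

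Uniqueness of $(k_1,\ldots,k_N)$ is immediate from the uniqueness part of \cref{thm:smith-normal-form} together with \cref{thm:invariant-factors-via-gcd}: the quantities $v_\p(\Delta_j(\tilde U))$ are conjugation-invariant functions of $U$ and $B$, and they determine the $k_i$ linearly, so two such representations must have the same $k_i$. I expect the main obstacle to be the careful bookkeeping of units and of the normalization exponent $\nu$ (and the interplay of $\xi$ vs.\ $\xi^*$ under the involution): the conceptual content — unitarity forces the $\p$-adic elementary divisors to be symmetric under $a \mapsto a_{\max} - a$ — is clean, but turning ``up to units'' statements about $M$, $M^\dagger$, $M^{-1}$ into a single normalized $\Oe$-matrix diagonal form requires tracking exactly which power of $\xi$ one factors out, and verifying that $\xi^*\Oe = \xi\Oe$ (which holds because $\p^d = p\Oe$ is fixed by the involution and $\Oe$ is a PID). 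Once the symmetry $a_j + a_{2N+1-j} = \mathrm{const}$ is established, the rest is routine relabeling.
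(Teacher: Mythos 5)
Your proposal is correct and takes essentially the same route as the paper: the identity $M^{-1} = (B^\dagger B)^{-1} M^\dagger (B^\dagger B)$ you exploit is an algebraic repackaging of the paper's step of expanding $U^\dagger U = I$ and invoking $\xi$-modularity of $B^\dagger B$, and both arguments then conclude the palindromic symmetry of the elementary-divisor exponents by comparing the Smith Normal Forms of $M$, $M^\dagger$ and $M^{-1}$. The only cosmetic difference is that you get ``all invariant factors are powers of $\xi$'' from $\det M$ being a unit of $\R$, whereas the paper reads it off the determinant of the unitarity relation; the unit/normalization bookkeeping you flag is exactly what the paper's proof tracks.
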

\begin{proof}
The proof proceeds by considering Smith Normal From of $\tilde U = \xi^j B^{-1} U B $,
where $j$ is such that $\tilde U$ has entries in $\Oe$.
We then derive the additional properties on the invariant factors of $\tilde U$ from equation $U^\dagger U = I$ 
and the fact that  $B^\dagger B$ is $\xi$-modular.

Unitary $U$ has rank $N$ and so does $\tilde U$, therefore according to~\cref{thm:smith-normal-form}
$$ 
\tilde U = A_L D A_R, D = \mathrm{diag}(d_1,\ldots, d_{2N})
$$
for some $\Oe$-unimodular matrices $A_R,A_L$.
Next substitute $U$ expressed via $\tilde U$ in $U^\dagger U = I$ and multiply by $B^\dagger$ on the left and $B$ on the right to get:
$$
(\xi \xi^\ast)^j B^\dagger B = \tilde U B^\dagger B \tilde U^\dagger
$$

Using that $B^\dagger B$ is $\xi$-modular above equation simplifies to 
$$
(\xi \xi^\ast)^j A_B = \tilde U A_B \tilde U^\dagger,
$$
for some $\Oe$-unimodular matrix $A_B$.
Now we substitute expression for $\tilde U$ in terms of $D$ and repeatedly use the fact that $A'B' = I$ implies $B'A' = I$
for any full-rank square matrices $A' B'$:
\begin{equation}
\label{eq:dd}
(\xi \xi^\ast)^j I = (A_L^\dagger A_B^{-1} A_L) D (A_R A_B A_R^\dagger) D^\dagger 
\end{equation}
We observe that $\det(D)\det(D)^\ast = (\xi \xi^\ast)^{ 2N j } u$ for some unit $u$ of $\Oe$ because determinants of $\Oe$-unimodular matrices are units.
Taking into account that $\xi \Oe$ is a totally ramified ideal we have $\xi = u' \xi^\ast$ for some unit $u'$ of $\Oe$.
We see that each of invariant factors $d_{l}$ divides a power of $\xi$ and must be a power of $\xi$ up to a unit.

Next, let us establish the additional conditions of the invariant factors of unitaries.
We rewrite \cref{eq:dd} as 
$$
(\xi \xi^\ast)^j (D^\dagger)^{-1} = (A_L^\dagger A_B^{-1} A_L) D (A_R A_B A_R^\dagger) 
$$
The right-hand side is an integer matrix with entries in $\Oe$ and so the left-hand side must be an integer matrix.
This implies that $v_{\mathfrak p}(d_j) \le 2 j$.
Matrices $ D$ and $(\xi \xi^\ast)^j (D^\dagger)^{-1}$ must have the same 
invariant factors up to units and \cref{eq:unitary-valuations} holds.
The uniqueness of $\nuv_B(U)$ follows from the fact that invariant factors of a matrix with entries in $\Oe$ are unique up to multiplication by units of $\Oe$.
\end{proof}

Function $\nuv_B(U)$ can be efficiently computed using the Smith Normal Form algorithms discussed in \cref{sec:snf}.
The following lemma shows an important property of $\nuv_B(U)$ that lets us construct consistent heuristics using it.
\begin{lemma}
Consider $2N\times 2N$ unitaries $U,V$ with entries in $\xi$-ring $\R$~(\cref{def:xi-ring}) and $2N\times 2N$ matrix $B$ such that $B^\dagger B$ is $\xi$-modular~(\cref{def:xi-modular}),
then the coordinates of $U,V$ and $UV$ relative to $B$~(as defined in \cref{thm:unitary-invariant-factors}) are related as:
\begin{equation}
\label{eq:product-majorization}
\nuv_B(UV) \wmaj \nuv_B(U) + \nuv_B(V)
\end{equation}
\end{lemma}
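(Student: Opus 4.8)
The plan is to combine \cref{thm:unitary-invariant-factors} (which expresses each unitary in basis $B$ as a product of two $\Oe$-unimodular matrices sandwiching a fixed ``diagonal'' power-of-$\xi$ matrix) with \cref{lem:invariant-factors-products}, which controls how products of invariant factors behave under matrix multiplication, rephrased via the $\p$-adic inequality \cref{eq:p-adic-inequality}. First I would write $\nuv_B(U) = (k_1,\dots,k_N)$ and $\nuv_B(V)=(l_1,\dots,l_N)$, set $\p = \xi\Oe$, and unpack: by \cref{thm:unitary-invariant-factors} the matrix $\tilde U := \xi^{k_1} B^{-1} U B$ has entries in $\Oe$ and Smith Normal Form with invariant factors (up to units) $\xi^{k_1 - k_j}$ for $j=1,\dots,N$ and $\xi^{k_1 + k_j}$ for $j=1,\dots,N$, arranged in non-decreasing divisibility order; similarly for $\tilde V := \xi^{l_1} B^{-1} V B$. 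Then $\tilde U \tilde V = \xi^{k_1+l_1} B^{-1}(UV) B$, whose Smith Normal Form (up to units) has the invariant factors $\xi^{k_1+l_1-m_j}, \xi^{k_1+l_1+m_j}$ where $\nuv_B(UV) = (m_1,\dots,m_N)$. The whole claim \cref{eq:product-majorization}, namely $\sum_{j=1}^t m_j \le \sum_{j=1}^t (k_j + l_j)$ for all $t \le N$, will be extracted from the divisibility $\Delta^{\tilde U}_r \cdot \Delta^{\tilde V}_r \mid \Delta^{\tilde U \tilde V}_r$ of \cref{lem:invariant-factors-products} at suitably chosen ranks $r$, converted into \cref{eq:p-adic-inequality}: $v_\p(\Delta^{\tilde U}_r) + v_\p(\Delta^{\tilde V}_r) \le v_\p(\Delta^{\tilde U \tilde V}_r)$.

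The key computation is to express $v_\p(\Delta_r)$ — the $\p$-adic valuation of the product of the first $r$ invariant factors — in terms of the coordinate vector. The invariant factors of $\tilde U$, sorted in non-decreasing divisibility order (equivalently non-decreasing $\p$-adic valuation), are the multiset $\{k_1 - k_j : j\} \cup \{k_1 + k_j : j\}$ of exponents; since $k_1 \ge \dots \ge k_N \ge 0$ we have $k_1 - k_1 \le k_1 - k_2 \le \dots \le k_1 - k_N \le k_1 + k_N \le \dots \le k_1 + k_1$, so the sorted list of $2N$ valuations is $(0, k_1-k_2, \dots, k_1-k_N, k_1+k_N, \dots, k_1+k_2, 2k_1)$. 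I would then choose $r = N$: the first $N$ invariant factors have valuations $0, k_1-k_2, \dots, k_1-k_N$, summing to $(N-1)k_1 - (k_2 + \dots + k_N) = N k_1 - \sum_{j=1}^N k_j$. Hence $v_\p(\Delta^{\tilde U}_N) = N k_1 - \sum k_j$, $v_\p(\Delta^{\tilde V}_N) = N l_1 - \sum l_j$, and $v_\p(\Delta^{\tilde U \tilde V}_N) = N(k_1 + l_1) - \sum m_j$. Plugging into \cref{eq:p-adic-inequality} gives $N k_1 - \sum k_j + N l_1 - \sum l_j \le N(k_1+l_1) - \sum m_j$, i.e. $\sum m_j \le \sum k_j + \sum l_j$, which is the $t = N$ case of \cref{eq:product-majorization}.

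To get the inequality for every $t < N$ I would instead pick $r = N - t$ (so the first $N-t$ invariant factors of $\tilde U$ have valuations $0, k_1 - k_2, \dots, k_1 - k_{N-t}$, summing to $(N-t-1)k_1 - \sum_{j=2}^{N-t} k_j$). Carrying the same substitution through \cref{eq:p-adic-inequality} and cancelling the $k_1, l_1$ terms yields $\sum_{j=2}^{N-t} m_j \le \sum_{j=2}^{N-t}(k_j + l_j)$... which is not quite what I want; the right move is actually $r = N + t$, whose first $N+t$ valuations are $0, k_1-k_2,\dots,k_1-k_N, k_1+k_N, \dots, k_1 + k_{N-t+1}$, summing to $N k_1 - \sum_{j=1}^{N} k_j + (t-1)k_1 + \sum_{j=N-t+1}^{N} k_j = (N+t-1)k_1 - \sum_{j=1}^{N-t} k_j$. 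Feeding this into \cref{eq:p-adic-inequality} and cancelling the common $(N+t-1)(k_1+l_1)$ term leaves exactly $\sum_{j=1}^{N-t} m_j \le \sum_{j=1}^{N-t}(k_j + l_j)$; reindexing $t \mapsto N - t$ gives \cref{eq:weak-majorization}. The main obstacle is purely bookkeeping: correctly sorting the $2N$ valuations, choosing the right cutoff rank $r$ so that the telescoping of $k_1$-terms is clean, and making sure the non-negativity $k_N \ge 0$ (guaranteed by \cref{thm:unitary-invariant-factors}) is used to justify the ordering $k_1 - k_N \le k_1 + k_N$. I would also remark that one must handle the case where several $k_j$ coincide, but since the argument only uses sums of valuations (not individual equalities) this causes no difficulty. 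Finally I note $\xi$-modularity of $B^\dagger B$ is needed only to invoke \cref{thm:unitary-invariant-factors}; no further use is made of $B$ beyond that.
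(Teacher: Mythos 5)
Your proof is correct and follows essentially the same route as the paper: rescale to $\tilde U = \xi^{k_1}B^{-1}UB$, $\tilde V = \xi^{l_1}B^{-1}VB$, read off their sorted invariant-factor valuations from \cref{thm:unitary-invariant-factors}, and apply \cref{lem:invariant-factors-products} via \cref{eq:p-adic-inequality}. The paper just cuts at rank $r=t$ for each $t\le N$ (first $t$ valuations $k_1-k_1,\ldots,k_1-k_t$), which gives $\sum_{j\le t} m_j \le \sum_{j\le t}(k_j+l_j)$ in one line; your detour through $r=N+t$ is unnecessary, since the choice $r=N-t$ you rejected does work --- the first invariant factor of $\tilde U\tilde V$ has valuation $k_1+l_1-m_1$ rather than $0$, which is where your cancellation went astray --- and in the $r=N+t$ computation the ``$(t-1)k_1$'' should be ``$tk_1$'', though these two slips offset each other and your final inequality is the right one.
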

\begin{proof}
The result follows from \cref{thm:unitary-invariant-factors} and \cref{lem:invariant-factors-products}.
Let us use notation $\nuv_B(U) = (k_1,\ldots,k_N)$, $\nuv_B(V) = (k'_1,\ldots,k'_N)$
and $\nuv_B(UV) = (k''_1,\ldots,k''_N)$. 
Next consider integer matrices $\tilde U = \xi^{k_1} B^{-1} U B$, $\tilde V = \xi^{k'_1} B^{-1} V B$
and apply \cref{lem:invariant-factors-products} to $\tilde U, \tilde V$ and $\tilde U \tilde V$.
The lemma implies that 
$$
 \xi^{ \sum_{j=1}^{k} k_1 - k_j } \cdot \xi^{ \sum_{j=1}^{k} k'_1 - k'_j } \text{ divides }  \xi^{ \sum_{j=1}^{k} (k_1 + k'_1) - k''_j } \text{ for } k \le N.
$$
Rewriting above as inequality for powers of $\xi$ leads to conditions $ \sum_{j=1}^{k}  k''_j  \le \sum_{j=1}^{k} (k_j + k'_j) $, 
which shows \cref{eq:product-majorization} by the definition of the weak majorization in~\cref{eq:weak-majorization}.
\end{proof}

Above lemma implies that any positive convex combination of partial sums of the form 
\begin{equation}
\label{eq:general-consistent-heuristic}
 h_{\Vec{p}}(U) = \sum_{k=1}^N p_k \sum_{j=1}^k \nuv_B(U)
\end{equation}
leads to a consistent search heuristic for exact unitary synthesis, when the cost of generators $g$ is $h_{\Vec{p}}(g)$.
This follows from an additional observation that $h_{\Vec{p}}(U) = 0$ implies $\nu(U)= 0$ and so $h_{\Vec{p}}(U) = 0$ implies that $U$ belongs to a source vertex of the search graph.

For a more general exact isometry synthesis a similar approach leads to an admissible heuristic.
This follows from the following theorem 
\begin{theorem}
[Modularity and invariant factors of isometries]
\label{thm:isometry-invariant-factors}
\textbf{Consider} the following: 
\begin{itemize}
    \item $\xi$-ring $\R$~(\cref{def:xi-ring}) with the ring of integers $\Oe$ and denominator $\xi$,
    \item invertible matrices $\Bi, \Bo$ with entries in $\R$ such that $ \Bo^{-1} \, I_{2N\times N'} \Bi$ has entries in $\Oe$ and unit invariant factors~(\cref{thm:smith-normal-form}),
    additionally $\Bi^\dagger \Bi$, $\Bo^\dagger \Bo$ are $\xi$-modular~(\cref{def:xi-modular}), $I_{2N\times N'}$ is defined in \cref{eq:id-isometry},
    \item unitary matrix $U$ with entries in $\R$.
\end{itemize}
\textbf{Then} Isometry $U' = U I_{2N\times N'}$ in basis $\Bo, \Bi$ can be written as 
\begin{equation}
\label{eq:isometry-valuations}
 \Bo^{-1} U' \Bi = A_L\, \left(\begin{array}{c}
    \mathrm{diag}\left(\xi^{-k_{N'}},\ldots,\xi^{-k_1}\right) \\
    \hline
    \textbf{0}_{(2N-N')\times N'}
 \end{array}\right) \,A_R 
\end{equation}
for $\Oe$-unimodular~(\cref{def:xi-modular}) matrices $A_L$, $A_R$ and unique non-decreasing integers $k_1,\ldots,k_{N'}$. 
We call $\nuv_{\Bi,\Bo}(U') = (k_1,\ldots, k_{\min(N,N')})$ \textbf{the coordinates of} $U'$
\textbf{with respect to} $\Bi,\Bo$. Sequences $\nuv_{\Bi,\Bo}(U')$ and $\nuv_{\Bo}(U)$ are related as 
\begin{equation}
\label{eq:isometry-majorization}
\nuv_{\Bi,\Bo}(U')  \wmaj \nuv_{\Bo}(U)_{[\min(N,N')]},
\end{equation}
where $\nuv_{\Bo}(U)_{[\min(N,N')]}$ are the first $\min(N,N')$ coordinates of $\nuv_{\Bo}(U)$.
\end{theorem}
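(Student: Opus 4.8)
The plan is to follow the strategy of \cref{thm:unitary-invariant-factors}, replacing its ``take a determinant'' step --- which is unavailable here because $\Bo^{-1}U'\Bi$ is rectangular --- by a Cauchy--Binet identity for a single maximal minor, and then to read off the majorization \cref{eq:isometry-majorization} from the Smith Normal Form of $U$ in the basis $\Bo$. To set up, I would fix $j$ large enough that $\tilde U' := \xi^j \Bo^{-1}U'\Bi$ has entries in $\Oe$; this is possible since $\Bo^\dagger\Bo$ and $\Bi^\dagger\Bi$ being $\xi$-modular forces $\det\Bo$ and $\det\Bi$ to be powers of $\xi$ up to units (take determinants), hence $\Bo^{-1}$ and $\Bi^{-1}$ to have entries in $\R$. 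Since $U$ and the basis changes are invertible, $\tilde U'$ has rank $N'$, so by \cref{thm:smith-normal-form} we may write $\tilde U' = A_L D A_R$ with $A_L, A_R$ $\Oe$-unimodular and $D$ the ``diagonal'' matrix whose invariant factors satisfy $d_1 \mid \cdots \mid d_{N'}$.

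For the form \cref{eq:isometry-valuations}, I would show that every $d_i$ is a power of $\xi$. Since $U$ is unitary, $U' = U I_{2N\times N'}$ is an isometry, so $(U')^\dagger U' = I_{N'}$, and a short computation yields $(\tilde U')^\dagger (\Bo^\dagger\Bo)\,\tilde U' = (\xi\xi^\ast)^j\,\Bi^\dagger\Bi$. Writing $\Bo^\dagger\Bo = \xi^a G_o$ and $\Bi^\dagger\Bi = \xi^b G_i$ with $G_o, G_i$ $\Oe$-unimodular, this becomes $(\tilde U')^\dagger G_o \tilde U' = \gamma\,G_i$ with $\gamma = (\xi\xi^\ast)^j \xi^{b-a}$ a power of $\xi$ up to a unit (using that $\xi\Oe$ is totally ramified, so $\xi^\ast$ is a unit times $\xi$). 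Taking determinants of this $N'\times N'$ identity and expanding the left-hand side by the Cauchy--Binet formula applied to $(\tilde U')^\dagger\cdot G_o\cdot \tilde U'$ gives $\sum_{|S|=|T|=N'} \mu_S^\ast\,\det(G_o[S,T])\,\mu_T = \gamma^{N'}\cdot(\text{unit})$, where $\mu_S$, the $N'\times N'$ minor of $\tilde U'$ on the row set $S$, runs over all maximal minors of $\tilde U'$, and $G_o[S,T]$ is the submatrix of $G_o$ on rows $S$, columns $T$. By \cref{thm:invariant-factors-via-gcd}, $\Delta_{N'} := \gcd_S \mu_S$ equals $d_1\cdots d_{N'}$ up to a unit; factoring $\mu_S = \Delta_{N'}\nu_S$ out of the sum shows $\Delta_{N'}\Delta_{N'}^\ast$ divides $\gamma^{N'}$, so every prime divisor of $\Delta_{N'}$ is an associate of $\xi$, whence $\Delta_{N'}$ --- and therefore each $d_i$ dividing it --- is a power of $\xi$. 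Writing $d_i$ as a unit times $\xi^{\,j-k_{N'+1-i}}$ and absorbing the units and a sorting permutation into $A_L$ produces \cref{eq:isometry-valuations}; the chain $d_1\mid\cdots\mid d_{N'}$ is exactly $k_1\le\cdots\le k_{N'}$, and uniqueness of the $k_i$ is uniqueness of invariant factors up to units.

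For the majorization \cref{eq:isometry-majorization}, I would bring in \cref{thm:unitary-invariant-factors} for the unitary $U$ in the basis $\Bo$: $\Bo^{-1}U\Bo = A_L^W\,\mathrm{diag}(\xi^{k_1^U},\dots,\xi^{k_N^U},\xi^{-k_N^U},\dots,\xi^{-k_1^U})\,A_R^W$ with $\nuv_{\Bo}(U) = (k_1^U,\dots,k_N^U)$ non-increasing and non-negative. Then $\tilde U' = A_L^W\,\xi^j\mathrm{diag}(\xi^{k_1^U},\dots,\xi^{-k_1^U})\,(A_R^W J)$ with $J = \Bo^{-1}I_{2N\times N'}\Bi$ (which by hypothesis has entries in $\Oe$). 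Expanding an arbitrary $m\times m$ minor of this triple product by Cauchy--Binet --- a minor of the diagonal middle factor vanishing unless its row and column sets coincide --- shows that the minor is a sum of terms whose $\p$-valuations are at least the sum of the $m$ smallest diagonal exponents of $\xi^j\mathrm{diag}(\cdots)$, i.e.\ at least $mj - \sum_{i=1}^m k_i^U$ when $m\le N$, the minors of the $\Oe$-unimodular and $\Oe$-integral outer factors contributing non-negative valuation. Hence $v_\p(\Delta_m(\tilde U')) \ge mj - \sum_{i=1}^m k_i^U$ for all $m\le\min(N,N')$, whereas \cref{eq:isometry-valuations} gives $v_\p(\Delta_m(\tilde U')) = v_\p(d_1\cdots d_m) = mj - \sum_{i=1}^m k_{N'+1-i}$. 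Comparing the two yields $\sum_{i=1}^m k_{N'+1-i} \le \sum_{i=1}^m k_i^U$ for all $m\le\min(N,N')$ --- the sum of the $m$ largest of $k_1,\dots,k_{N'}$ is at most the sum of the $m$ largest $k_i^U$ --- and since $k_1\le\cdots\le k_{N'}$ the sum of the $m$ largest of $k_1,\dots,k_{\min(N,N')}$ is no larger, which, after sorting both sides into non-increasing order, is precisely \cref{eq:isometry-majorization}.

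The main obstacle is the second step. In the square unitary case one concludes that the invariant factors are powers of $\xi$ by a one-line determinant computation, but for the rectangular isometry that must be replaced by the Cauchy--Binet expansion of the Gram identity $(\tilde U')^\dagger(\Bo^\dagger\Bo)\tilde U' = (\xi\xi^\ast)^j\Bi^\dagger\Bi$, which controls the top minor $\Delta_{N'}$; divisibility then propagates the conclusion down to all the $d_i$. The remaining work is careful bookkeeping --- keeping the opposite sorting conventions for $\nuv_{\Bi,\Bo}(U')$ and $\nuv_{\Bo}(U)$ straight and treating the case $N'>N$, where only the first $N$ isometry coordinates appear in \cref{eq:isometry-majorization}.
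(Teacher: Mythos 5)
Your proposal is correct and follows essentially the same route as the paper: take the Smith Normal Form of $\xi^{j}\Bo^{-1}U'\Bi$, use the Gram identity $(\tilde U')^\dagger(\Bo^\dagger\Bo)\tilde U'=(\xi\xi^\ast)^{j}\Bi^\dagger\Bi$ together with $\xi$-modularity to force the invariant factors to be powers of $\xi$, and then obtain the majorization from the factorization $\Bo^{-1}U'\Bi=(\Bo^{-1}U\Bo)(\Bo^{-1}I_{2N\times N'}\Bi)$. Your Cauchy--Binet expansions are just explicit renderings of the paper's determinant computation and of its appeal to \cref{lem:invariant-factors-products} (whose proof rests on the same gcd-of-minors characterization, \cref{thm:invariant-factors-via-gcd}), and your handling of the sorting conventions and the $N'>N$ case is, if anything, more careful than the paper's.
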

\begin{proof}
Let us denote $k'_j = \nuv_{\Bo}(U)_{j}$.
Using Smith Normal Form~\cref{thm:smith-normal-form} write $\tilde U = \xi^{k'_1} \Bo^{-1} U' \Bi $ as $A_L \left(\frac{D}{\textbf{0}}\right) A_R$
for a square diagonal matrix $D$.

Our first goal is to establish that diagonal of $D$ consists of powers of $\xi$ up to units of $\Oe$.
We expand $(U')^{\dagger} U' = I_{N'\times N'}$ in terms of $\tilde U, \Bi, \Bo$:
$$
 \tilde U^\dagger \Bo^\dagger \Bo \tilde U  = (\xi \xi^\ast)^{k'_1} (\Bi^\dagger \Bi)
$$
Using $\xi$-modularity of $\Bi^\dagger \Bi$ and $\Bo^\dagger \Bo $, we have 
$$
 \Bi^\dagger \Bi = A_{\inn} \xi^{k_\inn}, \Bo^\dagger \Bo = A_{\out} \xi^{k_\out} \text{ for } \Oe-\text{unimodular } A_{\inn}, A_{\out}
$$
Using above and the expression for $\tilde U$ in terms of $A_L, D, A_R$ we have 
$$
 D^\dagger \left( (I_{N'}|\textbf{0}) A_L^\dagger A_\out A_L \left(\frac{I_{N'}}{\textbf{0}}\right) \right) D = (\xi \xi^\ast)^{k'_1} \xi^{k_\inn - k_\out} (A_R^\dagger)^{-1} A_{\inn} A_R^{-1}
$$
We also note that $k_\inn - k_\out$ is non-negative, because $A_r = \Bo^{-1} \, I_{2N\times N'} \Bi$ 
has entries in $\Oe$ and  
$$
 A_r^\dagger \Bo^{\dagger} \Bo A_r = \Bi^\dagger \Bi \implies A_r^\dagger A_\out  A_r =  \xi^{ k_\inn - k_\out} A_\inn.
$$
Above implies that $\det(D)\det(D)^\ast$ divides $\xi^{N' (2k'_1 + k_\inn - k_\out)}$
and so the diagonal of $D$ must consist of powers of $\xi$ up to units of $\Oe$ and \cref{eq:isometry-valuations} follows.

\cref{eq:isometry-majorization} follows from applying \cref{lem:invariant-factors-products} to matrix $ \xi^{k'_1} \tilde U $
and matrix $\tilde A = \Bo^{-1} (\ket{0_{2N-N'}}\otimes I_{N'}) \Bi$ with entries in $\Oe$ and unit invariant factors.
Product $\tilde U \tilde A$ is $\xi^{k'_1} \Bo^{-1} U' \Bi$.
Using \cref{lem:invariant-factors-products} products of the invariant 
factors of $\xi^{k'_1} \Bo^{-1} U' \Bi $ satisfy the following divisibility conditions:
$$
\prod_{j=1}^k \xi^{k'_1 - k'_j} \text{ divides } \prod_{j=1}^k \xi^{k'_1 - k_j }
$$
and therefore \cref{eq:isometry-majorization} holds.
The uniqueness of $\nuv_{\Bi,\Bo}(U')$ follows from the fact that invariant factors of a matrix with entries in $\Oe$ are unique up to multiplication by units of $\Oe$.
\end{proof}

Using above lemma we can construct a variety of admissible heuristic functions using positive convex combinations of partial sums of $\nuv_{\Bi,\Bo}(U')$: 
\begin{equation}
 h_{\Vec{p},N'}(U') = \sum_{k=1}^{N'} p_k \sum_{j=1}^k \nuv_{\Bo,\Bi}(U)_j,~~ h_{\Vec{p},N'}(U) = \sum_{k=1}^{N'} p_k \sum_{j=1}^k \nuv_{\Bo}(U),
\end{equation}
For an isometry $U'$ that is equal to the product $U_1,\ldots,U_m$ applied to $I_{2N\times N'}$ we have inequality 
$$
 h_{\Vec{p},N'}(U') \le \sum_{j=1}^{m} h_{\Vec{p},N'}(U_j)
$$
Therefore $ h_{\Vec{p},N'}(U')$ is a lower-bound on the cost of implementing isometry $U'$, when the cost of a generator $g$ is defined as $h_{\Vec{p},N'}(g)$.
Observation that $ h_{\Vec{p},N'}(U') = 0$ implies $\nu(U') = 0$ and so  $ h_{\Vec{p},N'}(U') = 0$ implies that $U'$ is source vertex of the search graph.
This establishes that $ h_{\Vec{p},N'}$ is an admissible heuristic.

Another application for the heuristic function  $\nuv_B(U)$ defined in \cref{thm:unitary-invariant-factors} is a simple best-first search exact synthesis algorithm:

\begin{algorithm}[H]
\caption{Synthesis via best-first search}
\label{alg:best-first-search}
\begin{algorithmic}[1]
    \State \Input Unitary $U$ with entries in $\xi$-ring $\R$~(\cref{def:xi-ring}), matrix $B$ such that $B^\dagger B$ is $\xi$-modular~(\cref{def:xi-modular}),
    set $G$ of unitaries with entries in $\R$
    \State \Output Sequence of unitaries $U_1,\ldots,U_M$ from $G$ and remainder unitary $V$ integral in basis $B$ such that $U = V U_M \ldots U_1$
    \State Set $S$ to an empty sequence, set $V$ to $U$
    \While{ $\nuv_B(V) \ne 0 $ } \Comment See \cref{thm:unitary-invariant-factors} for definition of $\nuv_B(V)$
        \For{$g \in G$ } 
            \If{ $\nuv_B(V g^\dagger) \lmaj \nuv_B(V)$ }  \Comment See \cref{sec:majorization} for definition of $\lmaj$ \label{line:lmaj-reduction}
            \State Prepend $g$ to $S$, set $ V \leftarrow V g^\dagger$
            \EndIf
        \EndFor
    \EndWhile
    \State \Return {$S$, $V$}   
\end{algorithmic}
\end{algorithm}

\begin{table}[ht]
    \centering
\begin{tabular}{|c|c|c|c|c|}
\hline 
$\xi$ & Basis $B$ & Field $E$ & Gates $U_1,\ldots,U_m$ & Cost-zero group $\mathcal{C}$ \tabularnewline
\hline 
\hline 
$1+i$ & $B_{\mathbb{C}}^{\otimes2}$ & $\mathbb{Q}(i)$ & $\text{CS}$~(\cref{eq:common-complex}) & Clifford \tabularnewline
\hline 
$\sqrt{2}$ & $B_{\mathbb{R}}^{\otimes2}$ & $\mathbb{Q}(\sqrt{2})$ & $\text{CH}$~(\cref{eq:common-real}) & Real Clifford \tabularnewline
\hline 
\hline 
$1+\zeta_{8}$ & $B_{\mathbb{C}}^{\otimes2}$ & $\mathbb{Q}(\zeta_8)$ & $\text{T},\text{T}^{\otimes2},\text{CT}$~(\cref{eq:common-complex}) & Clifford \tabularnewline
\hline 
$2+2\cos(\frac{\pi}{16})$ & $B_{\mathbb{R}}^{\otimes2}$ & $\mathbb{Q}(\cos(\frac{\pi}{8}))$ & $\text{T}_{y},\text{CS}_{y},\text{CT}_{y}$~(\cref{eq:common-real}) & Real Clifford \tabularnewline
\hline 
\end{tabular}
    \caption{Two-qubit gate sets with a simple best-first synthesis~\cref{alg:best-first-search}.
    The algorithm applies to matrices with entries in $\xi$-rings $\R$ with denominator $\xi$ and ring of integers $\Oe$.
    The set of gates $G$ used by the best-first search algorithm is the normalization of $U_1,\ldots,U_m$ with respect to cost-zero group $\mathcal{C}$~(\cref{sec:normalized-gate-sets-algo}).
    See~\cref{thm:best-first-search} for the algorithm runtime.}
    \label{tab:best-first-search}
\end{table}

The main challenge is to show that this algorithm always terminates, that is we can always find $g$ from $G$ such that 
condition in \cref{line:lmaj-reduction} holds. 
In \cref{sec:provable} we will prove the following result:

\begin{theorem}[Best first search]
\label{thm:best-first-search}
The best first search~\cref{alg:best-first-search} terminates in the number of steps linear in $\max \nuv(U)$ 
for a set of unitaries with entries in $\xi$-rings $\R$, matrices $B$ and gate-sets $G$ described in \cref{tab:best-first-search}.
The  $\xi$-rings $\R$ are the rings with denominator $\xi$ and field $E$ given in the table.
Set of unitaries $G$ is obtained from $U_1,\ldots,U_m$ as the normalization with respect to cost-zero group $\mathcal{C}$ for $U_1,\ldots,U_m,\mathcal{C}$ given in the table.
\end{theorem}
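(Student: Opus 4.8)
The plan is to recast \cref{alg:best-first-search} as a walk on the set of Hermitian (or, in the totally real cases, symmetric) lattices attached to the basis change matrix $B$, and to show that every iteration moves strictly towards the base lattice. Throughout I use \cref{thm:unitary-invariant-factors}: since $B^\dagger B$ is $\xi$-modular, writing $B^\dagger B=\xi^{m}F_0$ with $F_0$ an $\Oe$-unimodular (Hermitian or symmetric) matrix, any $2N\times2N$ unitary $V$ over $\R$ has $W=B^{-1}VB$ in the isometry group $\mathrm{U}(F_0)$ over $\R=\Oe[1/\xi]$, and $\nuv_B(V)=(k_1,\dots,k_N)$ is exactly the sequence of $\p$-adic elementary divisors of the pair of lattices $(L_0,L_V)$ with $L_0=\Oe^{2N}$ and $L_V=W\cdot\Oe^{2N}$. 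Thus a problem-graph vertex is a lattice in the genus of $F_0$ that agrees with $L_0$ away from $\p$, $\nuv_B$ records its position relative to $L_0$, and applying a generator replaces $L_V$ by $W\,L_{g^\dagger}$, a neighbour chosen from the fixed finite list $\{L_{g^\dagger}:g\in G\}$.

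Granting the descent step below, linear termination is immediate, and I would settle it first. The quantity $\Phi(V)=\sum_{k=1}^N\sum_{j=1}^k\nuv_B(V)_j=\sum_{j=1}^N(N-j+1)k_j$ is a non-negative integer with $\Phi(V)\le\tfrac{N(N+1)}{2}\max\nuv(V)$. Each time \cref{line:lmaj-reduction} fires, $V$ is replaced by $Vg^\dagger$ with $\nuv_B(Vg^\dagger)\lmaj\nuv_B(V)$; since strict weak majorization forces at least one partial sum to drop and none to rise, $\Phi$ strictly decreases at every such update. Hence the number of updates — and so the running time — is at most $\tfrac{N(N+1)}{2}\max\nuv(U)+1$, linear in $\max\nuv(U)$ for the two-qubit ($N=2$) gate sets of \cref{tab:best-first-search}. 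The only remaining point is that the main loop cannot stall.

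The heart of the proof is therefore the \emph{descent lemma}: if $\nuv_B(V)\neq0$, then some $g\in G$ satisfies $\nuv_B(Vg^\dagger)\lmaj\nuv_B(V)$. I would prove this by localizing at $\p$ — the condition only sees the $\p$-adic elementary divisors of $(L_0,L_V)$ and $(L_0,L_{Vg^\dagger})$ — and reducing the global statement to a local one on the Bruhat--Tits building $\mathcal B$ of the $\p$-adic group $\mathrm{U}(F_0)$ (a building of dimension at most two, the two-qubit forms having relative rank two over $E_\p$). The reduction uses the Hermitian-lattice machinery of \cref{sec:advanced-preliminaries}: for each of the four gate sets one checks that the relevant genus contains a single class, so that strong approximation for the simply connected cover ($\mathrm{SU}$ or $\mathrm{Spin}$) makes the $\R$-isometry group act transitively on the lattices in question and the problem becomes purely local; one then verifies, by a finite computation of $\nuv_B(g)$ over the finite normalized set $G$ (whose point stabilizer at $L_0$ is the cost-zero group $\mathcal C=\mathrm{U}(F_0)(\Oe)$, organizing the verification into $\mathcal C$-orbits), that the lattices $\{L_{g^\dagger}:g\in G\}$ realize \emph{every} $\lmaj$-minimal nonzero neighbour type of $L_0$ in $\mathcal B$. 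Given these two ingredients the descent lemma follows from convexity of the building: the combinatorial geodesic from $L_V$ back to $L_0$ has a first edge, that edge points in one of the realized neighbour directions, hence equals $W\,L_{g^\dagger}$ for some $g\in G$, and the building analogue of the elementary-divisor inequality \cref{eq:p-adic-inequality} shows that this move decreases some partial sum of $\nuv_B$. (The same analysis classifies the level sets $\{V:\nuv_B(V)=\vec k\}$, which is the content of the enumeration algorithm of \cref{sec:lattice-graph}.)

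The main obstacle is exactly this descent lemma, and within it the claim that the normalized generators already realize all $\lmaj$-minimal neighbour directions of the base lattice: this is a finite but delicate check that must be carried out separately for each of the four entries of \cref{tab:best-first-search}, and it is where the single-class (class-number-one) input is essential — without it the local-to-global reduction breaks and one would have to track several isometry classes of lattices at once. A secondary subtlety is that $\lmaj$ is only a partial order, so ``moving towards $L_0$'' must be shown to strictly decrease \emph{some} partial sum rather than merely the top coordinate; this is exactly what the elementary-divisor inequality together with the geodesic property delivers, and it is precisely the situation abstracted by \cref{prop:maj-relations}.
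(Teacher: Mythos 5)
There is a genuine gap, and it sits exactly where you flag the "essential" hypothesis. Your descent lemma rests on the claim that "for each of the four gate sets one checks that the relevant genus contains a single class," so that strong approximation makes the problem purely local and the generators need only realize every neighbour direction of $L_0$ in the building. This is true only for the first two rows of \cref{tab:best-first-search} (the lattices $B_{\mathbb{C}}^{\otimes 2}\Oe^4$ over $\mathbb{Q}(i)$ and $B_{\mathbb{R}}^{\otimes 2}\Oe^4$ over $\mathbb{Q}(\sqrt{2})$, see \cref{tab:genus-class-number}). For the last two rows — Clifford and $\mathrm{T},\mathrm{T}\otimes\mathrm{T},\mathrm{CT}$ over $\mathbb{Q}(\zeta_8)$, and the real analogue over $\mathbb{Q}(\cos(\pi/8))$ — the genus class number is three. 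In those cases the set of lattices $\p$-isometric to $L_0$ splits into three isometry classes, the generators $g^\dagger L_0$ cannot reach neighbours lying in the other classes, and your geodesic-on-the-building argument fails: the first edge of the geodesic from $L_V$ to $L_0$ may point to a lattice in a class not represented by any $g\in G$. By your own admission this is where "the local-to-global reduction breaks," so for half the gate sets in the theorem your proof does not go through.

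The paper's treatment of the class-number-three cases is genuinely different: it introduces a \emph{reduction property} and a \emph{weak reduction property} (\cref{def:reduction}), verifies them by exhaustive enumeration for the specific coordinate vectors $x=(2,0)$ and $x=(2,2)$ (\cref{tab:ct-reduction}, \cref{tab:cty-reduction}), and uses \cref{lem:reduction} together with the Box Lemma (\cref{lem:box-lemma}) to propagate the reduction property to all lattices with $\partial\bv(L,L_1)\notin[0,2)\times[0,2)$. The finitely many remaining coordinate types are handled by the \emph{weak} reduction property, which only needs to hold for lattices actually isometric to $L_0$ — and these are the only ones the algorithm ever visits, since $L_1=V^\dagger L_0$ for a genuine unitary $V$ over $\R$. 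Your termination argument via the potential $\Phi(V)=\sum_{k}\sum_{j\le k}\nuv_B(V)_j$ matches the paper's, and your genus-one analysis (neighbour enumeration plus the intermediate-lattice/common-apartment step) is essentially the paper's first case; but to complete the proof you must either restrict to the first two rows or supply the reduction-property machinery for the other two.
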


Note that using \cref{eq:product-majorization}, we can upper-bound $\max \nuv_B(U)$ by a constant times the number of non-Clifford gates.
This implies, that for Clifford and $\text{T},\text{T}^{\otimes2},\text{CT}$ gate set the best-first search runtime is proportional to the T-count of the input unitary $U$.

Before we can prove above result, we need will to review additional mathematical and computational results related to Hermitian lattices,
which is the topic of \cref{sec:advanced-preliminaries}.
Before diving into the most technical sections of the paper we summarize some of our numerical results. 

\section{Numerical results}
We have implemented A* search with the heuristic function (\cref{eq:heuristic-practice} below) derived from $\nuv_B(U)$ from~\cref{sec:advanced-heuristics} and gate set normalization 
algorithms in Julia. 
The results of applying our algorithms to various exact synthesis problems are summarized in~\cref{tab:unitary-synthesis-results}, \cref{tab:isometry-synthesis}.
We use Julia package \texttt{AStarSearch.jl} for A* search implementation. 
In practice, we use the following heuristic 
\begin{equation}
\label{eq:heuristic-practice}
     h(U) = 10 \sum_{j=1}^{N} 2^{32(j-1)} (\nuv_B(U))_j 
\end{equation}
that captures all of the values of $\nuv_B(U)$ when whey can be represented using less than thirty two bits.
Multiplicative coefficient $10$ speeds up our A* search.
In practice, we observe that we still get circuits with small number of non-Clifford gates even when using a consistent heuristic re-scaled by factor of $10$.

For runtime, we report precomputation time and synthesis time.
Precomputation time consists of time needed to compute a normalized gate set~(\cref{sec:normalized-gate-sets-algo})
and time needed to represent all the normalized generators as integer matrices~(\cref{sec:vertex-neighbors}).
We use a single thread for the precomputation step.
For the synthesis step we compute vertex neighbours in parallel and use 24 threads. 
All the experiments were performed on a computer with two Intel Xeon Gold 6136 CPUs with total of 24 cores and 128GB of memory.

We reproduce well-known circuit identities in \cref{tab:classic-identities}, \cref{tab:reversible-functions}, \cref{tab:states}.
We apply exact synthesis algorithm to unitaries and isometries obtained from the catalytic embeddings~\cite{amy2024exact} 
in \cref{tab:catalytic-embeddings}, \cref{tab:states}.
We report the runtime of synthesis of three qubit computational basis state permutations in~\cref{tab:permutations} using Clifford and CCZ gate set. 
Finally we report T and CS gate counts for $4\times 2$ isometries 
\begin{equation}
\label{eq:v-isometry}
U'_{a,b,c,d} = \frac{1}{\sqrt{2^{k}}}\left(\begin{array}{c}
aI+ibX+icY+idZ\\
\hline I
\end{array}\right),a,b,c,d\in\mathbb{Z},a^{2}+b^{2}+c^{2}+d^{2}=2^{k}-1    
\end{equation}
which can be used to implement all possible unitaries $\text{V}_{2^k-1}$
$$
U_{a,b,c,d} = \left( aI+ibX+icY+idZ \right)/ \sqrt{2^k-1}
$$
with probability $1-1/(2^k)$.
In \cref{tab:v-isometry-via-cs}, \cref{tab:v-isometry-via-t} we show synthesis runtime when using Clifford and CS, Clifford and T gate sets correspondingly
to implement isometries from \cref{eq:v-isometry}.

\begin{table}[pt]
    \centering
    \begin{subtable}{\textwidth}
        \centering
        \caption{Well-known circuit identities.}
        \label{tab:classic-identities}
        \begin{tabular}{|c|c|c|c|c|}
            \hline 
            Common & Number  & Gate set & Precomputation & Synthesis\tabularnewline
            identity  & of qubits & Clifford and $\ldots$ & time (sec.) & time (sec.)\tabularnewline
            \hline 
            \hline 
            CH via 2 T & 2 & T & 0.06 & 0.85\tabularnewline
            \hline 
            CS via 3 T & 2 & T & 0.30 & 0.02\tabularnewline
            \hline 
            CT via 3 $\sqrt{\text{T}}$ & 2 & T, $\sqrt{\text{T}}$ & 1.02 & 0.49\tabularnewline
            \hline 
            CCZ via 3 CS & 3 & CS & 20.68 & 18.02\tabularnewline
            \hline 
            CCZ via 7 T & 3 & T & 8.40 & 40.66\tabularnewline
            \hline 
        \end{tabular}

    \end{subtable}
    \begin{subtable}{\textwidth}
        \centering
        \caption{Unitaries obtained via catalytic embeddings.}
        \label{tab:catalytic-embeddings}
        \begin{tabular}{|c|c|c|c|c|}
        \hline 
        Catalysis & Number  & Gate set & Precomputation & Synthesis\tabularnewline
        example & of qubits & Clifford and $\ldots$ & time (sec.) & time (sec.)\tabularnewline
        \hline 
        \hline 
        S via 2$\text{S}{}_{y}$ & 2 & $\text{CS}{}_{y}$ & 0.02 & 0.003\tabularnewline
        \hline 
        T via CS & 2 & T & 0.06 & 0.008\tabularnewline
        \hline 
        T via $\text{CS}{}_{y}$ & 2 & $\text{CS}{}_{y}$ & 0.03 & 0.006\tabularnewline
        \hline 
        2 T via 2 CCZ & 3 & CS, CCZ & 28.42 & 0.12\tabularnewline
        \hline 
        2 T via $\text{CS}{}_{y}$ & 3 & $\text{CS}{}_{y}$, CCZ & 11.08 & 0.03\tabularnewline
        \hline 
        $\sqrt{\text{T}}\text{H}\sqrt{\text{T}}$ via 5 T & 2 & T, $\text{T}\otimes\text{T}$ & 0.72 & 0.04\tabularnewline
        \hline 
        2 $\sqrt{\text{T}}$ via T, 2 CCZ & 3 & T, $\text{T}\otimes\text{T}$, CS, CCZ & 136.86 & 3.17\tabularnewline
        \hline 
        CS via 2 CCZ & 3 & $\text{CS}{}_{y}$, CCZ & 11.91 & 0.07\tabularnewline
        \hline 
        \end{tabular}
    \end{subtable}
    \begin{subtable}{\textwidth}
        \centering
        \caption{All equivalence classes of $3$-qubit computational basis permutations up to left and right composition with X,CX circuits.
        The permutations are synthesized using Clifford and CCZ gate set.}
        \label{tab:permutations}
        \begin{tabular}{|c|c|c|}
        \hline 
        Permutation & Number of  & Synthesis\tabularnewline
        {[}0,1,2,3,4,5,6,7{]} to & CCZ gates & time (sec.)\tabularnewline
        \hline 
        \hline 
        {[}0, 1, 2, 7, 4, 5, 6, 3{]} & 1 & 0.005\tabularnewline
        \hline 
        {[}0, 1, 2, 5, 4, 7, 6, 3{]} & 2 & 0.056\tabularnewline
        \hline 
        {[}0, 1, 2, 5, 4, 6, 7, 3{]} & 3 & 0.065\tabularnewline
        \hline 
        \end{tabular}
    \end{subtable}
    \begin{subtable}{\textwidth}
        \centering
        \caption{Common $3$-qubit reversible unitaries synthesized using Clifford and CCZ gate set.}
        \label{tab:reversible-functions}
        \begin{tabular}{|c|c|c|}
        \hline 
        Reversible function & Number of & Synthesis\tabularnewline
         & CCZ gates & time (sec.)\tabularnewline
        \hline 
        \hline 
        $|x,y,z\rangle\mapsto|x,y,z + xy\rangle$ & 1 & 1.113\tabularnewline
        \hline 
        $|x,y,z\rangle\mapsto|x,y,z + (x\text{ or }y)\rangle$ & 1 & 0.033\tabularnewline
        \hline 
        $|x,y,z\rangle\mapsto|x,y,z + (x\text{ nand }y)\rangle$ & 1 & 0.026\tabularnewline
        \hline 
        Controlled-SWAP & 1 & 0.030\tabularnewline
        \hline 
        \end{tabular}
    \end{subtable}
    \caption{Runtime of unitary synthesis via A* search with heuristic in~\cref{eq:heuristic-practice}. See \cref{sec:common-matrices} for the notation for common unitaries, states and isometries.}
    \label{tab:unitary-synthesis-results}
\end{table}

\begin{table}[pt]
    \centering
    \begin{subtable}{\textwidth}
        \centering
        \caption{Common states and isometries.}
        \label{tab:states}
        \begin{tabular}{|c|c|c|c|c|c|}
        \hline 
                       & \multicolumn{2}{c|}{Number of} & Gate set & Pre-        &          \tabularnewline
        Isometries and & \multicolumn{2}{c|}{qubits} & Clifford & computation & Synthesis\tabularnewline
        \cline{2-3} \cline{3-3} 
        states  & input & output & and $\ldots$ & time (sec.) & time (sec.)\tabularnewline
        \hline 
        \hline 
        $|\text{CCZ}\rangle$ via 2 CS & 0 & 3 & CS & 19.08 & 1.06\tabularnewline
        \hline 
        $|\text{CS}\rangle$ via 3 T & 0 & 2 & T & 0.07 & 0.36\tabularnewline
        \hline 
        \hline 
        Catalyzed $|\sqrt{\text{T}}\rangle$ via 5T & 1 & 2 & T, $\text{T}\otimes\text{T}$ & 0.68 & 0.03\tabularnewline
        \hline 
        $|x,y\rangle\mapsto|x,y,xy\rangle$ via 4T & 2 & 3 & T & 6.71 & 156.07\tabularnewline
        \hline 
        \end{tabular}\\
 \end{subtable}
    \begin{subtable}{0.45\textwidth}
        \centering
        \caption{$\text{V}_{2^k-1}$ gates via CS gates, \cref{eq:v-isometry}}
         \label{tab:v-isometry-via-cs}
        \begin{tabular}{|c|c|c|}
        \hline 
        V gates  & Non-Clifford & Total\tabularnewline
        family & gates & time (sec.)\tabularnewline
        \hline 
        \hline 
        $\text{V}_{3}$ & 2 CS & 0.11\tabularnewline
        \hline 
        $\text{V}_{7}$ & 4 CS & 0.08\tabularnewline
        \hline 
        $\text{V}_{15}$ & 6 CS & 1.02\tabularnewline
        \hline 
        $\text{V}_{31}$ & 8 CS & 3.05\tabularnewline
        \hline 
        $\text{V}_{63}$ & 10 CS & 36.17\tabularnewline
        \hline 
        \end{tabular}
 \end{subtable}
  \begin{subtable}{0.45\textwidth}
        \centering
        \caption{$\text{V}_{2^k-1}$ gates via T gates, \cref{eq:v-isometry}. }
        \label{tab:v-isometry-via-t}
\begin{tabular}{|c|c|c|}
\hline 
V gates  & Non-Clifford & Total\tabularnewline
family & gates & time (sec.)\tabularnewline
\hline 
\hline 
$\text{V}_{3}$ & 4 T & 0.96\tabularnewline
\hline 
$\text{V}_{7}$ & 6 T & 0.58\tabularnewline
\hline 
$\text{V}_{15}$ & 8 T & 1.67\tabularnewline
\hline 
$\text{V}_{31}$ & 10 T & 2.13\tabularnewline
\hline 
$\text{V}_{63}$ & 12 T & 6.27\tabularnewline
\hline 
\end{tabular}
 \end{subtable}
    \caption{Runtime of isometry synthesis and state preparation algorithms. See \cref{sec:common-matrices} for the notation for common unitaries, states and isometries.}
    \label{tab:isometry-synthesis}
\end{table}

\section{Advanced preliminaries}
\label{sec:advanced-preliminaries}

We assume that the reader is familiar with number fields, their ring of integers, ideals and fractional ideals.
For a review of these topics see Chapters 4.1.2, 4.6.1 in~\cite{Cohen1993}.

% For review of 

\subsection{Hermitian lattices}
\label{sec:hermitian-lattices}

Consider $E$ to be either CM-field or a totally-real field with ring of integers $\Oe$. 
When $E$ is a CM field, we use $^\ast$ for the complex conjugation in $E$. 
When $E$ is totally real, $^\ast$ is the identity on $E$.
We use $E^N$ for $N$-dimensional vector space with inner-product $\langle x, y \rangle = \sum_{j=1}^{N} x_j y_j^\ast$.
With the above inner product, $E^N$ is an example of a totally positive definite Hermitian space.
Let $b_1, \ldots, b_N$ be a basis of $E^N$ and $\a_1,\ldots, \a_N$ be fractional ideals of $\Oe$.
\textbf{Hermitian lattice} in $E^N$ is the following set
$$
 \a_1 b_1 + \ldots + \a_N b_N.
$$
The list $(\a_1,b_1),\ldots,(\a_N,b_N)$ is \textbf{a pseudo-basis} of the lattice.
The set $B \Oe^N$ for invertible $N\times N$ matrix with entries in $E$ is an example of Hermitian lattice with \textbf{basis matrix} $B$.
When $\Oe$ is a principal ideal domain, every lattice has a basis.
For more details on Hermitian lattices see~\cite{Kirschmer2016}.
For any unitary matrix $U$ with entries in $E$, lattice $U L$ is the lattice with pseudo-basis 
$(\a_1,U b_1),\ldots,(\a_N,U b_N)$.
Two Hermitian lattices $L_1, L_2$ are \textbf{isometric} when there there exist a unitary $U$ over $E$ such that $L_1 = U L_2$.
We say that the unitary $U$ \textbf{preserves} $L$ if $UL = L$, such unitaries form a group, \textbf{the automorphism group} of $L$.
There is a practical algorithm~(Remark~2.4.4 in~\cite{Kirschmer2016}, \cite{Plesken1997}) for checking if two lattices are isometric, that is for solving 
\begin{problem}[Hermitian Lattice Isomorphism]
\label{prob:isomorphism}
Given Hermitian lattices $L,M$ in $E^N$, find unitary $U$ with entries in $E^N$ if it exists.
\end{problem}
An implementation of an algorithm for solving above problem is available in \cite{Magma,Nemo}.
There is also a practical algorithm for finding automorphism group~(Remark~2.4.4 in~\cite{Kirschmer2016}, \cite{Plesken1997}) of a Hermitian lattice, 
with implementations in \cite{Magma,Nemo}:

\begin{problem}[Hermitian Lattice Automorphisms]
\label{prob:automorphism}
Given a Hermitian lattice $L$ in $E^N$, find unitaries $U_1,\ldots,U_m$ with entries in $E^N$ that generate 
the automorphism group of $L$.
\end{problem}

For any fractional ideal $\mathfrak{f}$, lattice $\mathfrak{f} L$ is a lattice with pseudo-basis  $(\mathfrak{f} \a_1,b_1),\ldots,(\mathfrak{f} \a_N,b_N)$.
Similarly for any $\alpha$ from $E$, the lattice $\alpha L = (\alpha \Oe) L$ where fractional ideal $\mathfrak{f} = \alpha \Oe$.

% We are interested in a set of Hermitian lattices captured by the following definition:
% \begin{definition}[$\xi$-lattice]
% \label{def:xi-lattice}
% $\xi$-lattice is defined over a global $\xi$-ring $\R$~(\cref{def:xi-ring}) with ring of integers $\Oe$ as
% a set $B \Oe^N$ for some $N \times N$ \textbf{basis matrix} $B$ with entries in $\R$.
% \end{definition}

% In this section we review results that apply to all Hermitian lattices.

\subsection{Local rings and fields}

Consider number field $E$ and its ring of integers $\Oe$. 
For any prime ideal $\p$ of $\Oe$, and any element $x$ of $E$ define 
$|x|_\p = (1/N(\p))^{v_\p(x)}$ where $v_\p(x)$ is the power of $\p$ in the factorization of the fractional ideal $x\Oe$ into prime ideals
and $N(\p)$ is the norm of $\p$. Function $|x|_\p$ is an example of a non-Archimedean norm and the following inequality holds:
$$
 |x+y|_\p  \le \max(|x|_\p,|y|_\p) \le |x|_\p  + |y|_\p
$$
We use notation $E_\p$ for the completion of $E$ with respect to $|\cdot|_\p$ and $\Op$ for the completion of $\Oe$  with respect to $|\cdot|_\p$.
Completion $E_\p$ is a field, $\Op$ is the ring of integers of $E_\p$.
Ring $\Op$ has the unique principal prime ideal that we will denote by $\p$, and any fractional $\Op$ ideal is an integer power of $\p$.
For a review of $\p$-adic fields and related results see Chapter~5 in~\cite{Narkiewicz2004}.

Elements of $\Oe$ and $E$ can also be considered elements of $\Op$ and $E_\p$. 
For a fractional ideal $\a$ of $O_E$ we define its localisation as 
\begin{equation}
\label{eq:ideal-localization}
\a_{\p} = \p ^{v_\p(\a)}, 
\end{equation}
where in the right-hand side $\p$ is the unique prime ideal of $\Op$ and $v_\p(\a)$ is the power of $\Oe$-ideal $\p$
in the factorisation of $\a$ into powers of prime ideals.
We also use convention $\p^{0} = \Op$ in \cref{eq:ideal-localization},
for properties of ideal localization see Chapter~81E in~\cite{OMearaQuad}.
Here we list some of the identities involving $\a_{\p}$
\begin{equation}
\label{eq:ideal-localization-props} 
(\alpha \a)_\p = \alpha \a_\p, ~~ (\a \mathfrak{b})_\p = \a_\p \mathfrak{b}_\p, ~~ (\a + \mathfrak{b})_\p = \a_\p + \mathfrak{b}_\p ~~ (\a \cap \mathfrak{b})_\p = \a_\p \cap \mathfrak{b}_\p, |\a_\p|_\p = |\a|_\p
\end{equation}
\begin{equation}
\label{eq:ideal-localization-props-2} 
\a \subseteq \mathfrak{b} \text{ if and only if } \a_\p \subseteq \mathfrak{b}_\p \text{ for all prime ideals } \p \text{ of } \Oe
\end{equation}

\subsection{Localization of lattices}

Similarly to Hermitian lattices in $E^N$, we can define Hermitian lattices in $E_\p^N$. 
Complex conjugation extends from $E$ to $E_\p$, and the inner product on $E^N$ extends to the inner product on $E_\p$.
Hermitian lattice in $E_\p^N$ is a set 
$$
 \a_1 b_1 + \ldots + \a_N b_N.
$$
for basis $b_1,\ldots,b_N$ of $E_\p^N$ and fractional ideals $\a_j$ of $\Op$.
Given a Hermitian lattice $L$ in $E^N$ with pseudo-basis $(\a_1,b_1),\ldots,(\a_N,b_N)$  
we define its localisation as Hermitian lattice in $E_\p^N$ using \cref{eq:ideal-localization}
$$
 L_\p = (\a_1)_\p b_1 + \ldots + (\a_N)_\p b_N
$$
When lattice $L$ has a basis $B$, the localisation is $L_\p = B \Op^N$, where we consider $B$ as matrix with entries in $E_\p$.

There are few important properties that connect lattices and their localization at prime ideals (See Chapter~81E in~\cite{OMearaQuad}).
For any two Hermitian Lattices the following holds: 
$$
L \subseteq M \text{ if and only if } L_\p \subseteq M_\p \text{ for all prime ideals } \p \text{ of } \Oe
$$
and similarly 
$$
L = M \text{ if and only if } L_\p = M_\p \text{ for all prime ideals } \p \text{ of } \Oe
$$

Moreover, we can modify a Hermitian lattice in $E^N$ at prime ideal $\p$ to construct a new lattice, while leaving all other localizations unchanged:
\begin{theorem}[Special case of Theorem~81:14, Chapter~81E \cite{OMearaQuad}]
\label{thm:local-global-lattice}
\textbf{Consider} a Hermitian lattice $L$ in $E^N$, prime $\Oe$-ideal $\p$,
a Hermitian lattice $\mathfrak{L}$ in $E_\p^N$, \textbf{then} there exist a 
Hermitian lattice $M$ in $E^N$ such that $M_\p = \mathfrak{L}$
and $M_\mathfrak{q} = L_\mathfrak{q}$ for all prime $\Oe$-ideals $\mathfrak{q}$ not equal to $\p$.
\end{theorem}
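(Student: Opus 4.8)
The plan is to prove this by a purely lattice-theoretic patching argument: the Hermitian structure plays no role here, since we only manipulate finitely generated full-rank $\Oe$-submodules of $E^N$ and their localizations, and the inner product is never used. The heart of the construction is to pull back a submodule of $L_\p/\p^k L_\p$ along the canonical isomorphism with $L/\p^k L$, after first rescaling $L$ so that $\mathfrak L$ fits inside the localization of a global lattice.

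First I would reduce to the case where $\mathfrak L$ is sandwiched between two $\p$-power multiples of $L_\p$. Since $\mathfrak L$ and $L_\p$ are both full-rank $\Op$-lattices in $E_\p^N$, they are commensurable, so there is an integer $m\ge 0$ with $\mathfrak L\subseteq \p^{-m}L_\p$. Replacing $L$ by $L':=\p^{-m}L$, the identities in \cref{eq:ideal-localization-props} give $L'_\p=\p^{-m}L_\p\supseteq\mathfrak L$, while $L'_\fq=L_\fq$ for every prime $\fq\ne\p$ because $(\p^{-m})_\fq=\Oe_\fq$ (the ideal $\p$ localizes to the unit ideal away from $\p$). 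Then I choose $k\ge 0$ large enough that $\p^k L'_\p\subseteq\mathfrak L$, so that $\p^k L'_\p\subseteq\mathfrak L\subseteq L'_\p$.

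Second I would construct $M$ by pull-back. The localization map $L'/\p^k L'\to L'_\p/\p^k L'_\p$ is an isomorphism of $\Oe/\p^k$-modules, since its source is a $\p$-power-torsion $\Oe$-module and such modules are unchanged by localization at $\p$. The submodule $\mathfrak L/\p^k L'_\p$ of $L'_\p/\p^k L'_\p$ therefore corresponds to a submodule $\bar M\subseteq L'/\p^k L'$; let $M$ be the preimage of $\bar M$ under the quotient map $L'\twoheadrightarrow L'/\p^k L'$. Then $\p^k L'\subseteq M\subseteq L'$, so $M$ is a full-rank lattice in $E^N$. Finally I check the two localization conditions. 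Localizing at $\p$: $M_\p/\p^k L'_\p=(M/\p^k L')_\p=\bar M=\mathfrak L/\p^k L'_\p$, hence $M_\p=\mathfrak L$. For $\fq\ne\p$: from $\p^k L'\subseteq M\subseteq L'$ we get $\p^k L'_\fq\subseteq M_\fq\subseteq L'_\fq$, and $\p^k L'_\fq=L'_\fq$ since $(\p^k)_\fq=\Oe_\fq$ by \cref{eq:ideal-localization}, so $M_\fq=L'_\fq=L_\fq$, which is exactly the claim (and, iterating the one-prime construction, one recovers the general Theorem~81:14 over a finite set of primes).

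There is no deep obstacle here; it is the standard Dedekind-domain patching lemma. The points that need care are: (i) the rescaling step, since without arranging $\mathfrak L\subseteq L'_\p$ the set $\bar M$ would not be a submodule of a quotient of a global lattice and the pull-back would be meaningless; (ii) the naturality of $L'/\p^k L'\cong L'_\p/\p^k L'_\p$, so that "submodule corresponds to submodule'' is legitimate and compatible with taking preimages; and (iii) if one does not assume $\Oe$ is a principal ideal domain, redoing the argument with pseudo-bases, which goes through unchanged because localization of the coefficient ideals is multiplicative and order-preserving by \cref{eq:ideal-localization-props} and \cref{eq:ideal-localization-props-2}.
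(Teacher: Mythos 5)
Your proof is correct. The paper does not prove this statement itself but cites it as a special case of Theorem~81:14 in O'Meara (with the algorithmic version in Kirschmer), and your pull-back argument along the isomorphism $L'/\p^k L' \cong L'_\p/\p^k L'_\p$ after rescaling is precisely the standard Dedekind-domain patching proof underlying that reference, so there is nothing to flag.
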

There is also an efficient algorithm to construct lattice $M$ described in the above theorem 
% when some additional constraints on $L_\p$ and $\mathfrak{L}$ hold
(see Section 2.2, discussion before Algorithm~2.2.7 in~\cite{Kirschmer2016}).

Above result motivates the following definition 
\begin{definition}
\label{def:p-isometric} 
Let $E$ be a number field.
Hermitian lattice $L,M$ in $E^N$ are $\p$\textbf{-isometric} for prime $\Oe$-ideal $\p$,
when $L_\p$ is isometric to $M_\p$ and $L_\mathfrak{q} = M_\mathfrak{q}$ for all prime 
$\Oe$-ideals $\mathfrak{q} \ne \p.$  
\end{definition}

If two Hermitian lattices are isometric with an isometry $U$ that has entries in $\xi$-ring $R$ such that $\xi \Oe = \p$,
then these two lattices are $\p$-isometric, because $U$ has integer entries in $E_\mathfrak{q}$ for $\p \ne \mathfrak{q}$.
Not every two lattices that are $\p$-isometric are also isometric. 
If two lattices are $\p$-isometric and isometric we say that they belong to the same \textbf{isometry class}.
There are only finitely many classes of non-isometric lattices that are $\p$-isometric.
We will refer to this number of classes as \textbf{genus class number}.
Finally, note that if two lattices are $\p$-isometric and isometric, they must be related by a unitary with entries in the corresponding $\xi$-ring $\R$.

\subsection{Invariant factors of local and global lattices}
\label{sec:invariant-factors}

\begin{theorem}[Theorem~81:11, Chapter~81D \cite{OMearaQuad}]
\label{thm:invariant-factors}
Let $L$ and $M$ be Hermitian lattices in $V = E^N$ or $V = E_\p^N$, then there exist a basis 
$b_1,\ldots,b_n$ of $V$, fractional ideals $\{ \a_j \}_{j=1}^N$, $\{ \mathfrak{r}_j \}_{j=1}^N$ (of $\Oe$ or $\Op$ respectively)
such that 
\begin{equation}
\label{eq:invariant-factors}
L =  \a_1 b_1 + \ldots + \a_N b_N, ~ M = \a_1 \mathfrak{r}_1 b_1 + \ldots + \a_N \mathfrak{r}_N b_N   
\end{equation}
and $\mathfrak{r}_N \subseteq \ldots \subseteq \mathfrak{r}_1$. Fractional ideals $\{ \mathfrak{r}_j \}_{j=1}^N$ are unique.
\end{theorem}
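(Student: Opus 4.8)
The plan is to recognize this as the elementary‑divisor (Smith normal form) theorem transplanted to the lattice setting: since the Hermitian inner product plays no role in the statement, I would forget it entirely and work with the $\Oe$‑ (resp.\ $\Op$‑) module structure of $L$ and $M$ only. Both lattices are full of rank $N$ in $V$, and the argument splits into the local case $V = E_\p^N$, where $\Op$ is a discrete valuation ring, and the global case $V = E^N$, where $\Oe$ is a Dedekind domain; the global case will be reduced to the local one by a local--global patching argument.

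For the local case $V = E_\p^N$, $\Op$ is a principal ideal domain, so both $L$ and $M$ are free $\Op$‑modules of rank $N$. Choosing $\Op$‑bases, write $L = A\,\Op^N$ and $M = A'\,\Op^N$ with $A,A' \in \GL_N(E_\p)$, and scale $A^{-1}A'$ by a power of a uniformizer $\pi$ so that it has entries in $\Op$; applying \cref{thm:smith-normal-form} over the PID $\Op$ gives $A^{-1}A' = U\,D\,W$ with $U,W \in \GL_N(\Op)$ and $D = \diag(\pi^{e_1},\dots,\pi^{e_N})$, where after undoing the scaling $e_1 \le \dots \le e_N$ (possibly negative). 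Taking the $b_j$ to be the columns of $AU$, $\a_j = \Op$ and $\mathfrak{r}_j = \p^{e_j}$ realizes \cref{eq:invariant-factors} with $\mathfrak{r}_N \subseteq \dots \subseteq \mathfrak{r}_1$, and uniqueness of the $\mathfrak{r}_j$ is exactly the uniqueness of invariant factors in Smith normal form over a PID.

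For the global case I would first reduce to $M \subseteq L$ by replacing $M$ with $\mathfrak{c}M$ for a fractional ideal $\mathfrak{c}$ chosen, at the finitely many primes where $M_\p \not\subseteq L_\p$, so that $\mathfrak{c}M \subseteq L$; this only rescales the final answer by $\mathfrak{c}$, so it is harmless. Now $L/M$ is a finitely generated torsion $\Oe$‑module, and the structure theorem for such modules over a Dedekind domain yields $L/M \cong \bigoplus_{j=1}^N \Oe/\mathfrak{r}_j$ with unique fractional ideals $\mathfrak{r}_N \subseteq \dots \subseteq \mathfrak{r}_1$ (padding with $\mathfrak{r}_j = \Oe$). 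One then obtains the adapted pseudo‑basis by lifting a compatible decomposition through the surjection $L \twoheadrightarrow L/M$: peel off one rank‑one direct summand of $L$ at a time, so that $L = \a_1 b_1 \oplus L'$ and $M = \a_1\mathfrak{r}_1 b_1 \oplus M'$ with $M' \subseteq L'$, and recurse on $N$. Uniqueness of the global $\mathfrak{r}_j$ follows from uniqueness of the local invariant factors together with $(\mathfrak{r}_j)_\p = \p^{e_j(\p)}$ and the fact that a fractional ideal is determined by its localizations (\cref{eq:ideal-localization-props-2}); that $e_j(\p) = 0$ for all but finitely many $\p$ is what makes each $\mathfrak{r}_j$ a genuine fractional ideal.

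The delicate point—and the main obstacle—is the construction of the adapted \emph{global} pseudo‑basis. Locally over $\Op$ there is nothing to do beyond Smith normal form, but globally one must simultaneously match the local Smith data at the finitely many primes where $L_\p \neq M_\p$ (a Chinese‑remainder argument over those primes) while coping with the fact that $\Oe$‑lattices are projective but in general not free, so a single honest basis of $V$ realizing all the local data need not exist; the resulting class‑group/Steinitz obstruction is precisely what is absorbed into the fractional ideals $\a_j$, which is why the theorem is phrased with pseudo‑bases rather than with $\a_j = \Oe$. Running the splitting‑off induction through $L \twoheadrightarrow L/M$ and the structure theorem for torsion modules handles the approximation and the non‑freeness at once, and is in essence O'Meara's proof of Theorem~81:11.
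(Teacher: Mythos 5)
The paper does not prove this statement at all: it is imported verbatim from O'Meara (Theorem~81:11) and used as a black box, so there is no in-paper argument to measure yours against. Your route is the standard textbook proof of the elementary-divisor theorem over a Dedekind domain, and two of its three parts are complete and correct as sketched: the local case (both lattices are free over the discrete valuation ring $\Op$, so Smith normal form of a transition matrix does everything), and uniqueness (each $\mathfrak{r}_j$ is determined by its localizations, which are the unique local invariant factors; this is exactly the mechanism the paper itself records in \cref{prop:invariant-factor-loclization}). Discarding the Hermitian form is also right, since the statement is purely module-theoretic.

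The gap is in global existence. The structure theorem gives only an abstract isomorphism $L/M \cong \bigoplus_j \Oe/\mathfrak{r}_j$; it does not by itself produce a decomposition of $L$ that is simultaneously adapted to $M$, and the inductive step $L = \a_1 b_1 \oplus L'$, $M = \a_1\mathfrak{r}_1 b_1 \oplus (M\cap L')$ that you invoke \emph{is} the content of the theorem — you assert it rather than construct it. To close it, argue as in the PID proof but with ideals: after reducing to $M\subseteq L$, set $\mathfrak{c} = \sum\{\varphi(m)\Oe : \varphi\in\mathrm{Hom}(L,\Oe),\ m\in M\}$, identify $\mathfrak{c}$ with $\mathfrak{r}_1$ (up to the normalization absorbed into $\a_1$), and produce a single pair $\varphi_0\in\mathrm{Hom}(L,\Oe)$, $m_0\in M$ whose value generates $\mathfrak{c}$ locally at every prime dividing the index ideal $[M:L]$ — this is the concrete place where the approximation/CRT step you allude to must actually be carried out, because over a Dedekind domain no single pair need generate $\mathfrak{c}$ globally, and generation at the finitely many relevant primes is what suffices. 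Then $\ker\varphi_0$ furnishes $L'$, $m_0$ furnishes $\a_1 b_1$ and $\a_1\mathfrak{r}_1 b_1$, both $L$ and $M$ split compatibly, and your recursion runs. With that lemma supplied the proposal becomes a correct proof; as written it reduces the theorem to its own key step.
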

Fractional ideals $\{ \mathfrak{r}_j \}_{j=1}^N$ are the \textbf{invariant factors} of $M$ in $L$.
When $M \subseteq L$, fractional ideals $\{ \mathfrak{r}_j \}_{j=1}^N$ are integral ideals, 
and product $[M:L] = \prod_{j=1}^N \mathfrak{r}_j$ is called an \textbf{index ideal} of $M$ in $L$.
When index ideal is $\Oe$, we have $M = L$. 
\begin{proposition}
\label{prop:index-ideals}
For Hermitian lattices $M \subseteq M' \subseteq L$, $[M:L] = [M:M'][M':L]$.
\end{proposition}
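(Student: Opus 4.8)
The plan is to reduce the identity to a statement at each prime ideal and then to the multiplicativity of the determinant over a discrete valuation ring. Throughout I would use that the index ideal depends only on the $\Oe$-module structure of the three lattices (the Hermitian form plays no role), so $M\subseteq M'\subseteq L$ may be treated as nested full-rank $\Oe$-lattices in $E^N$.

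First I would establish that forming the index ideal commutes with localization, i.e. $[M:L]_\p = [M_\p:L_\p]$ for every prime $\Oe$-ideal $\p$, where the right-hand side is the index ideal of the localized lattices in $E_\p^N$. This follows by applying \cref{thm:invariant-factors} to obtain a pseudo-basis decomposition $L=\sum_j \a_j b_j$, $M=\sum_j \a_j\mathfrak{r}_j b_j$ with $\mathfrak{r}_N\subseteq\cdots\subseteq\mathfrak{r}_1$, localizing it (using $(\a\mathfrak{b})_\p=\a_\p\mathfrak{b}_\p$ from \cref{eq:ideal-localization-props}, which gives a valid pseudo-basis decomposition of $M_\p$ in $L_\p$, whence by uniqueness the $(\mathfrak{r}_j)_\p$ are the local invariant factors), and using $(\prod_j\mathfrak{r}_j)_\p=\prod_j(\mathfrak{r}_j)_\p$. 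Since an integral $\Oe$-ideal is determined by its localizations at all primes (\cref{eq:ideal-localization-props-2}), it then suffices to prove $[M_\p:L_\p]=[M_\p:M'_\p]\,[M'_\p:L_\p]$ for every $\p$.

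For the local statement I would use that $\Op$ is a discrete valuation ring, hence a principal ideal domain, so $M_\p$, $M'_\p$, $L_\p$ all have basis matrices over $\Op$, say $B_M$, $B_{M'}$, $B_L$. Because the lattices are nested, the transition matrices $P=B_{M'}^{-1}B_M$, $Q=B_L^{-1}B_{M'}$ and $R=B_L^{-1}B_M=QP$ have entries in $\Op$. Applying \cref{thm:smith-normal-form} to $R$ over $\Op$ writes $R$ as an $\Op$-unimodular matrix times $\mathrm{diag}(d_1,\ldots,d_N)$ times an $\Op$-unimodular matrix, with $d_1\mid\cdots\mid d_N$; reading off a new basis of $L_\p$ shows the $d_j\Op$ are exactly the invariant factors of $M_\p$ in $L_\p$, so $[M_\p:L_\p]=\det(R)\,\Op$. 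The same argument gives $[M_\p:M'_\p]=\det(P)\,\Op$ and $[M'_\p:L_\p]=\det(Q)\,\Op$, and $\det R=\det Q\cdot\det P$ yields the local identity; gluing over all $\p$ finishes the proof.

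The main obstacle I expect is the first step: carefully justifying that the globally defined invariant factors of \cref{thm:invariant-factors} localize to the local ones, i.e. that $[\,\cdot:\cdot\,]$ is compatible with localization — everything afterward is the routine determinant computation over a DVR. An alternative route avoiding localization would be to run a simultaneous pseudo-Smith-normal-form argument directly over $\Oe$ for the whole chain $M\subseteq M'\subseteq L$, but arranging mutually compatible pseudo-bases for all three lattices at once is fiddly, so the local-global route seems cleanest.
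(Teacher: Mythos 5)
Your proposal is correct and takes essentially the same route as the paper: reduce to primes $\p$ by passing to localizations, use that $\Op$ is a PID so the three localized lattices have bases, read off index ideals from the Smith Normal Form as determinants of transition matrices, and conclude by multiplicativity of the determinant. The "main obstacle" you flag — that the global index ideal localizes to the local one — is handled in the paper as a separate preliminary statement (\cref{prop:invariant-factor-loclization}), proved via exactly the pseudo-basis/localization argument you sketch, so your extra care there is precisely what the paper does too.
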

Above proposition follows from relation between invariant factors of lattices and their localizations.
Using the fact that $(\a x)_\p = \a_\p x$~(see \cref{eq:ideal-localization-props}) for any $\Oe$-fractional ideal $\a$ and any vector $x$ from $E^N$ 
we have 
\begin{equation}
\label{eq:pseudo-basis-localization}
    (\a_1 b_1 + \ldots + \a_N b_N)_\p = (\a_1)_\p b_1 + \ldots + (\a_N)_\p b_N.
\end{equation}
and therefore the following proposition holds 
\begin{proposition}[Localization of invariant factors]
\label{prop:invariant-factor-loclization}
Consider Hermitian lattices $L$ and $M$ be in $V = E^N$ and let $\{ \mathfrak{r}_j \}_{j=1}^N$ be invariant factors of $M$ in $L$, 
then for any prime $\Oe$-ideal $\p$ ideal localizations~(\cref{eq:ideal-localization}) $\{ (\mathfrak{r}_j)_\p \}_{j=1}^N$ are the invariant factors of $M_\p$ in $L_\p$. 
\end{proposition}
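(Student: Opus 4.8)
The plan is to reduce everything to the structure theorem for pairs of Hermitian lattices (\cref{thm:invariant-factors}) and then localize termwise. First I would apply \cref{thm:invariant-factors} to $L$ and $M$ in $V = E^N$, obtaining a basis $b_1,\ldots,b_N$ of $E^N$ and fractional $\Oe$-ideals $\a_1,\ldots,\a_N$ with $L = \a_1 b_1 + \ldots + \a_N b_N$ and $M = \a_1 \mathfrak{r}_1 b_1 + \ldots + \a_N \mathfrak{r}_N b_N$, where $\mathfrak{r}_N \subseteq \ldots \subseteq \mathfrak{r}_1$. Since $b_1,\ldots,b_N$ is $E$-linearly independent it is also $E_\p$-linearly independent, hence a basis of $E_\p^N$.

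Next I would localize at $\p$. By the termwise localization formula \cref{eq:pseudo-basis-localization} we get $L_\p = (\a_1)_\p b_1 + \ldots + (\a_N)_\p b_N$, and using in addition the multiplicativity $(\a_j \mathfrak{r}_j)_\p = (\a_j)_\p (\mathfrak{r}_j)_\p$ from \cref{eq:ideal-localization-props} we get $M_\p = (\a_1)_\p (\mathfrak{r}_1)_\p b_1 + \ldots + (\a_N)_\p (\mathfrak{r}_N)_\p b_N$. Thus $L_\p$ and $M_\p$ are presented by pseudo-bases with respect to the \emph{same} basis $b_1,\ldots,b_N$ of $E_\p^N$, with coefficient ideals $(\a_j)_\p$ and $(\a_j)_\p (\mathfrak{r}_j)_\p$ respectively.

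To invoke the uniqueness part of \cref{thm:invariant-factors} in $V = E_\p^N$, it remains only to check the nesting condition $(\mathfrak{r}_N)_\p \subseteq \ldots \subseteq (\mathfrak{r}_1)_\p$; this is immediate from $\mathfrak{r}_{j+1} \subseteq \mathfrak{r}_j$ together with \cref{eq:ideal-localization-props-2} (monotonicity of localization on fractional ideals). The presentation above then exhibits $(\a_j)_\p$ and $(\mathfrak{r}_j)_\p$ as a valid datum in the sense of \cref{thm:invariant-factors} for $M_\p$ inside $L_\p$, and by the uniqueness clause of that theorem the invariant factors of $M_\p$ in $L_\p$ are exactly $\{(\mathfrak{r}_j)_\p\}_{j=1}^N$.

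There is no genuine obstacle here; the one point that needs a line of care is that the globally-produced pseudo-bases localize correctly and keep the ordering, which is precisely what the identities in \cref{eq:ideal-localization-props}, \cref{eq:ideal-localization-props-2}, and the localization formula \cref{eq:pseudo-basis-localization} deliver. As an alternative, when $\Oe$ is a principal ideal domain one could scale $M$ and $L$ to matrices with entries in $\Oe$ and compare Smith Normal Forms over $\Oe$ and over $\Op$ via \cref{thm:invariant-factors-via-gcd}, but the structure-theorem route above is the cleanest and works in general.
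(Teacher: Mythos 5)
Your proof is correct and follows essentially the same route as the paper, whose one-line proof cites exactly the two ingredients you use: the termwise localization of a pseudo-basis (\cref{eq:pseudo-basis-localization}) and the preservation of ideal inclusions under localization (\cref{eq:ideal-localization-props-2}), applied to the common basis supplied by \cref{thm:invariant-factors} together with its uniqueness clause. Your write-up simply makes these steps explicit.
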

\begin{proof}
The result follows from \cref{eq:pseudo-basis-localization} and \cref{eq:ideal-localization-props-2}.
\end{proof}

The invariant factors~\cref{thm:invariant-factors} is a generalization of Smith Normal Form~(\cref{thm:smith-normal-form}).
\begin{proposition}
\label{prop:invariant-factors-and-snf}
Let $M,L$ be lattice in $E^N$ such that $\Oe$ is a principal ideal domain and $M \subset L$.
Let $A$ be a matrix with entries in $\Oe$ such that $M = A L$, then for  
invariant factors $\{ d_j \}_{j=1}^N$ of $A$~(\cref{thm:smith-normal-form})
and invariant factors $\{ \mathfrak{r}_j \}_{j=1}^N$ of $M$ in $L$
we have $d_j O_E = \mathfrak{r}_j$.
\end{proposition}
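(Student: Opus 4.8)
The plan is to normalize $L$ to the standard lattice and then read the statement off from the definition of the Smith Normal Form, using that $\Oe$ is a principal ideal domain.

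Since $\Oe$ is a principal ideal domain, $L$ is a free $\Oe$-module of rank $N$, hence has a basis matrix $B$ (an invertible matrix over $E$ whose columns are an $\Oe$-basis of $L$), and $AB$ is a basis matrix of the lattice $M=AL$; in particular $A$ is invertible over $E$. The first step is the standard description of the invariant factors of \cref{thm:invariant-factors} in terms of basis matrices: if $M\subseteq L$ are presented by basis matrices $B_L$ and $B_M$, then $B_L^{-1}B_M$ has entries in $\Oe$ and the invariant factors of $M$ in $L$ are exactly the Smith Normal Form invariant factors of $B_L^{-1}B_M$. One proves this by writing $B_L^{-1}B_M=PDQ$ with $D=\mathrm{diag}(d_1,\dots,d_N)$, $P,Q$ $\Oe$-unimodular and $d_1\mid\dots\mid d_N$, and observing that the columns of $B_LP$ form a basis of $L$ while the columns of $B_LPD$ form a basis of $M$, so that $M=\bigoplus_j d_j\Oe\,(B_LP)_j$; this is a decomposition of the shape \cref{eq:invariant-factors} with $\a_j=\Oe$, $\mathfrak{r}_j=d_j\Oe$, $d_N\Oe\subseteq\dots\subseteq d_1\Oe$, and uniqueness in \cref{thm:invariant-factors} identifies the $\mathfrak{r}_j$.

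It then remains to apply this with $L$ in standard position, where the relevant change-of-basis matrix is $A$ itself: taking $L=\Oe^N$, $M=A\Oe^N$ and $A=PDQ$ as in \cref{thm:smith-normal-form}, we have $L=P\Oe^N$ and $M=PDQ\Oe^N=PD\Oe^N$, so $M=\bigoplus_j d_j\Oe\,P_j$ and the invariant factors of $M$ in $L$ are $d_1\Oe,\dots,d_N\Oe$, as claimed. As an alternative one can localize: by \cref{prop:invariant-factor-loclization} it is enough to match $v_\p(\mathfrak{r}_j)$ and $v_\p(d_j)$ at each prime $\p$, and over the discrete valuation ring $\Op$ the same argument applies, with $v_\p(d_j)$ being by definition the $\p$-exponent in the local Smith Normal Form of $A$.

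The point that needs care — and that I expect to be the main obstacle — is the reduction to standard position: one must ensure that after replacing $L$ by $\Oe^N$ the sublattice $M$ is presented precisely by $A$, rather than by some matrix conjugate to $A$ (whose Smith Normal Form need not agree with that of $A$), so that the Smith Normal Form being computed is genuinely the one in the statement. Once the correct pair of bases for $L$ and $M$ is fixed, the rest is routine manipulation of unimodular factors.
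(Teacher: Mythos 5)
Your argument for $L = \Oe^N$ is complete, correct, and is exactly what the paper's one-line proof intends (construct the common basis $b_j$ from the Smith Normal Form of $A$, then invoke uniqueness of the invariant factors). But the concern you raise at the end is not a detail to be tidied up; it is a genuine obstruction, and you should not expect it to go away. Reading $M = AL$ literally as the image of $L$ under the endomorphism $A$ of $E^N$, the matrix that presents $M$ in a basis matrix $B$ of $L$ is $B^{-1}AB$, and the Smith Normal Form of $B^{-1}AB$ over $\Oe$ can differ from that of $A$ even when both have entries in $\Oe$. For example, take $E = \mathbb{Q}$, $\Oe = \mathbb{Z}$, $N = 2$, $B = \diag(1,\nicefrac{1}{2})$, $L = B\,\mathbb{Z}^2$, and
$$
A = \begin{pmatrix} 2 & 2 \\ 0 & 2 \end{pmatrix}, \qquad
B^{-1}AB = \begin{pmatrix} 2 & 1 \\ 0 & 2 \end{pmatrix}.
$$
Then $M = AL \subset L$, $A$ has invariant factors $(2,2)$, but $B^{-1}AB$ has invariant factors $(1,4)$, and the invariant factors of $M$ in $L$ are $\Oe, 4\Oe$, not $2\Oe, 2\Oe$.

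So the proposition as literally written requires either the hypothesis $L = \Oe^N$ (which your second paragraph establishes), or the alternative reading that $A$ writes a basis of $M$ in terms of a basis of $L$, i.e.\ $B_M = B_L A$, which is what both your first paragraph and the paper's proof of \cref{prop:index-ideals} actually use. In that proof only the index ideal $[M:L]_\p = \det(A)\,\Op$ is needed, and $\det$ is conjugation-invariant, so the downstream application is unaffected. The right fix is to state the hypothesis precisely rather than hope that a clever choice of bases rescues the general case; your instinct that something is off is correct, and you should say outright that the literal statement is false rather than leaving it as a worry.
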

\begin{proof}
The proof follow from explicitly constructing basis $b_j$ in \cref{eq:invariant-factors} using the Smith Normal Form of $A$
and then using the uniqueness of the invariant factors.
\end{proof}

Using the above result we provide the proof of~\cref{prop:index-ideals}.
\begin{proof}[Proof of \cref{prop:index-ideals}]
We show that $[M:L]_\p = [M:M']_\p [M':L]_\p$ for all prime $\Oe$-ideals $\p$.
Indeed, completion of $\Oe$ at $\p$ is a principal ideal domain, so lattices $L_\p, M_\p, M'_\p$ have 
bases $B_L, B_M, B_{M'}$ and there exist matrices $A_{M,M'}$, $A_{M',L}$ with entries in $\Op$
such that $B_M = A_{M,M'} B_{M'}$ and $B_{M'} = A_{M',L} B_L$.
Using Smith Normal Form for principal ideal domains~(\cref{thm:smith-normal-form}) we have $[M:L]_\p = \mathrm{det}(A_{M,M'} A_{M',L}) \Op$,
$[M:M']_\p = \mathrm{det}(A_{M,M'}) \Op$, $[M':L]_\p = \mathrm{det}(A_{M,M'}) \Op$,
which shows the required result.
\end{proof}

% Mention relation to Smith Normal Form.

% \subsection{Locally isometric lattices}

% Two Hermitian $N$-dimensional lattices $L, M$ in $E^N$ \textbf{locally isometric at a prime ideal }$\p$ of $\Oe$
% if there exist a unitary $\lU$ with entries in $E_\p$ such that $\lU L_\p = M_\p$.
% If $L_1 = U L_2$ for unitary matrix $U$ with entries in $E$, then $L_1$ and $L_2$ are a locally isometric at any prime ideal $\p$,
% however the converse is not true.
% % There are algorithms to check if two lattices are locally isometric at prime $\p$~\cite{Kirschmer2016}
% % and to enumerate the representatives of classes of $\xi$-isometric lattices.

% In this work we are interested in lattices that are equal at all but one prime ideal:
% \begin{definition}[$\p$-isometric lattices]
% Two lattices $L_1, L_2$ in $E^{N}$ are $\p$\textbf{-isometric} if they are 
% locally isometric at $\p$ and locally equal at all other prime ideals of $\Oe$.
% \end{definition}

% When two lattices are $\xi$-isometric and also isometric we say that they are in the same isometry class.
% There finitely many isometry classes of lattices $\xi$-isometric to a $\xi$-lattice $L$.

\subsection{Maximal sublattice and minimal superlattices}

Consider two Hermitian lattices $M, L$ in $E^N$. 
Lattice $M$ is a \textbf{sublattice} of $L$ when $M \subseteq L$, and $L$ is a \textbf{super-lattice} of $M$.
We say that sublattice of $M$ of $L$ is \textbf{a maximal sublattice} of $L$ if for any Hermitian lattice $L'$ in $E^N$ such that 
$M \subseteq L' \subseteq L$ either $M = L'$ or $L' = L$.
Similarly, $L$ is \textbf{a minimal superlattice} of $M$ if for any Hermitian lattice $L'$ in $E^N$ such that 
$M \subseteq L' \subseteq L$ either $M = L'$ or $L' = L$.

There is an efficient Algorithm~2.2.7 in~\cite{Kirschmer2016} to solve the following 
\begin{problem}[Maximal sublattices at prime ideal]
\label{prob:maximal-sublattice}
\textbf{Given} lattice $L$ in $E^N$ and prime $\Oe$-ideal $\p$, \textbf{find} all maximal sub-lattices $M$ of $L$ 
such that $\p L \subseteq M \subset L$.
\end{problem}
There are finitely many solutions to above problem (see correctness proof of Algorithm~2.2.7 in~\cite{Kirschmer2016}).
Similarly, there is an efficient algorithm for the following: 
\begin{problem}[Minimal superlattices at prime ideal]
\label{prob:minimal-superlattice}
\textbf{Given} lattice $L$ in $E^N$ and prime $\Oe$-ideal $\p$, \textbf{find} all minimal superlattices $M$ of $L$ 
such that $L \subseteq M \subset \p^{-1} L$.
\end{problem}
An implementation of algorithm for \cref{prob:maximal-sublattice}, \ref{prob:minimal-superlattice} 
is available in a Magma~\cite{Magma} package for computing with quadratic and hermitian lattices over number fields~\cite{Kirschmer2016code}.
Moreover, the package implements a more efficient version of the above algorithm that finds representatives of 
lattices up-to the automorphism group of $L$.

There is a simple sufficient condition for a lattice to be maximal sublattice.
\begin{proposition}
\label{prop:maximal-and-prime-index}
Consider Hermitian lattices $M,L$ in $E^N$ with $M \subset L$ and index ideal $[M:L] = \p$ for some prime $\Oe$-ideal $\p$,
then $M$ is a maximal sublattice of $L$ and $L$ is a minimal superlattice of $M$.
\end{proposition}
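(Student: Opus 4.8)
The plan is to reduce the statement to the multiplicativity of index ideals from \cref{prop:index-ideals}, combined with the fact that $\p$, being a prime ideal of the Dedekind domain $\Oe$, admits no factorization into a product of two integral ideals other than the trivial one $\p = \Oe \cdot \p$.

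First I would take an arbitrary Hermitian lattice $L'$ in $E^N$ with $M \subseteq L' \subseteq L$ and observe that, since $M \subseteq L'$ and $L' \subseteq L$, both $[M:L']$ and $[L':L]$ are well-defined \emph{integral} ideals of $\Oe$ (the invariant factors of a sublattice inside a lattice are integral, by the discussion following \cref{thm:invariant-factors}). Applying \cref{prop:index-ideals} to the chain $M \subseteq L' \subseteq L$ gives
\[
 [M:L']\,[L':L] = [M:L] = \p .
\]
Because $\Oe$ is a Dedekind domain, nonzero ideals factor uniquely into prime ideals; hence a prime ideal $\p$ cannot be written as a product of two integral ideals unless one of them is $\Oe$. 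Therefore $[M:L'] = \Oe$ or $[L':L] = \Oe$.

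To conclude, I would invoke once more the discussion after \cref{thm:invariant-factors} (which in turn rests on the uniqueness of invariant factors): an index ideal equal to $\Oe$ forces the two lattices to coincide. So $[M:L'] = \Oe$ gives $M = L'$, and $[L':L] = \Oe$ gives $L' = L$. This dichotomy is exactly the definition of $M$ being a maximal sublattice of $L$, and, read from the other side, of $L$ being a minimal superlattice of $M$. There is essentially no genuine obstacle here; the only point requiring a moment's care is to check that $[M:L']$ and $[L':L]$ are honest integral ideals of $\Oe$ (not merely ideals up to units), so that the unique-factorization argument in $\Oe$ may be applied to the identity $[M:L'][L':L] = \p$.
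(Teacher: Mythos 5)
Your proof is correct and follows essentially the same route as the paper: apply the multiplicativity of index ideals (\cref{prop:index-ideals}) to the chain $M \subseteq L' \subseteq L$ and use the primality of $\p$ to force one of the factors to be $\Oe$, hence equality of lattices. The extra care you take about integrality of the intermediate index ideals is a fine addition but not a substantive difference.
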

\begin{proof}
Consider lattice $M'$ such that $M \subset M' \subset L$, then the index ideals are related as 
$[M:L] = [M:M'][M':L]$~(\cref{prop:index-ideals}) and either $[M':L] = \Oe$ or $[M,M'] = \Oe$, which implies that
either $M'=L$ or $M=M'$.
\end{proof}

\subsection{Norm, scale, dual of a lattice and modular lattices}
\label{sec:norm-scale-modular}

\begin{definition}[Scale and norm of a lattice \cite{Kirschmer2016}]
\label{def:scale-and-norm}
Consider field $E$ that is a number field with complex conjugation or a completion of such a number filed at a prime ideal.
Let $K$ be the sub-field of $E$ fixed by the complex conjugation.
Consider lattice $L$ in $E^N$, the \textbf{lattice norm} $\n(L)$ is an fractional $\mathcal{O}_{K}$-ideal  generated by $ \{ \langle x,x \rangle : x \in L \}$,
the \textbf{lattice scale} $\s(L)$ is a fractional  $\Oe$-ideal $\langle L, L \rangle =  \{ \langle x,y \rangle : x,y \in L \} $.
\end{definition}

In terms of pseudo-basis $\{(\a_j,b_j)\}_{j=1}^N$ of the lattice $L$, the scale and norm can be expressed as~(Remark 2.3.4~\cite{Kirschmer2016}):
\begin{equation}
\label{eq:norm-and-scale}
\s(L) = \sum_{1 \le j,k \le m} \a_j \langle b_j, b_k \rangle \a_k^\ast,~~\n(L) = \sum_{1 \le j \le m} \mathrm N(\a_j) \langle b_j, b_j \rangle + \sum_{ 1 \le j < k \le m } \mathrm T(\a_j \langle b_j, b_k \rangle \a_k^\ast ),
\end{equation}
where $\mathrm N$ and $\mathrm T$ map fractional-$\Oe$ ideals to fractional $\mathcal{O}_{K}$-ideals, 
as follows
$$
\mathrm N(\a) \text{ is ideal generated by } \{ xx^\ast : x \in \a \},~~\mathrm T(\a) \text{ is ideal generated by } \{ x + x^\ast : x \in \a \}
$$
For lattice $L$ in $E^N$ Lattice $L^\# = \{ x \in E^N : \langle x, L \rangle \subseteq \Oe \}$ is called the dual of $L$.
Given a pseudo-basis $\{ \a_j, b_j \}_{j=1}^N$ of lattice $L$ the pseudo-basis of its dual is~(Remark 2.3.4~\cite{Kirschmer2016}):
\begin{equation}
\label{eq:dual-pseudo-basis}
     \{ (\a_j^\ast)^{-1}, \bar b_j \}_{j=1}^N \text{ where } \{ \bar b_j \}_{j=1}^N \text{ is a \textbf{dual basis} of in } E^N \text{, that is } \langle b_i, \bar b_j \rangle = \delta_{i,j}  
\end{equation}
For a fractional $\Oe$-ideal $\mathfrak{F}$, we say that lattice is $\mathfrak{F}$-modular when $\mathfrak{F} L^\# = L$.

\subsection{Hyperbolic lattices over local fields}

\subsubsection{Hyperbolic spaces over local fields}

\begin{definition}[Hyperbolic space]
\label{def:hyperbolic-space}
Let $E$ be a CM field or a totally real field and let $\p$ be a prime $\Oe$-ideal. 
We say that vector space $E_\p^{2N}$ is a \textbf{hyperbolic space} if it has a basis $\{ e_j, f_j \}_{j=1}^{N}$, a \textbf{hyperbolic basis}, such that 
$\langle e_j , e_k \rangle = 0$, $\langle f_j , f_k \rangle = 0$ and  $\langle e_j , f_k \rangle = \delta_{j,k}$.
\end{definition}

The vector $e$  such that $\langle e,e \rangle = 0$ is called an \textbf{isotropic vector}.
For a vector space to be hyperbolic it is necessary that it has an isotropic vector. 

\begin{lemma}[Dyadic hyperbolic spaces]
\label{lem:dyadic-hyperbolic-spaces}
Let $E= \q(\zeta_{2^m})$, let $\p$ be the unique prime $\Oe$-ideal with norm $2$, 
let $K$ be the the subfield of $E$ fixed by the complex conjugation and let $\p_K$ be the unique prime $\mathcal{O}_K$-ideal with norm $2$. 
The following vector spaces are hyperbolic: 
\begin{itemize}
    \item $E_\p^{2^n} = \q(\zeta_{2^m})_\p^{2^n}$ when $n \ge 1$ and $m = 3,4,5$ 
    \item $E_\p^{2^n} = \q(i)^{2^n}_\p$ when $n \ge 2$ and $m = 2$
    \item $K_{\p_K}^{2^n} = (\q(\zeta_{2^m}) \cap \mathbb{R})_{\p_K}^{2^n} $ when $n \ge 2$ and $m = 3,4,5$ 
    \item $K_{\p_K}^{2^n} = \q_2^{2^n}$ when $n \ge 3$ and $m = 2$ 
\end{itemize}
Furthermore, $\q(i)^2_\p$, $\q(\sqrt{2})^2_{\p_K}$, $\q_2^2$, $\q_2^4 $ have no isotropic vectors.
\end{lemma}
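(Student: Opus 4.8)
The plan is to observe that each space in the statement carries the standard form $\langle 1,\ldots,1\rangle$ (rank $2^n$): over the completion $E_\p$ this is a Hermitian form with respect to the quadratic extension $E_\p/K_{\p_K}$, and over $K_{\p_K}$ (or $\q_2$) it is the quadratic form $\sum x_j^2$. One then settles hyperbolicity using the classification of Hermitian and quadratic forms over $2$-adic local fields. First I would record the local ramification data: for $m=2,3,4,5$ the prime $2$ is totally ramified in $\qzm$ and in its maximal real subfield $K$, so $E_\p$ is a ramified quadratic \emph{field} extension of $K_{\p_K}$, equal to $K_{\p_K}(\sqrt{-1})$, with relative norm form $x^2+y^2$. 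The single fact making the cases $m=3,4,5$ uniform is that $\sqrt2=\zeta_8+\zeta_8^{-1}\in K$, hence $\q_2(\sqrt2)\subseteq K_{\p_K}$; combined with the standard local Hilbert symbols $(-1,-1)_{\q_2}=-1$ (the Hamilton quaternion algebra is a division algebra over $\q_2$) and, since every quadratic extension splits that algebra, $(-1,-1)_{\q_2(\sqrt2)}=1$ and therefore $(-1,-1)_{K_{\p_K}}=1$.

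For the Hermitian cases I would use that over a local field a Hermitian space for a quadratic extension is classified by its dimension and its discriminant in $K_{\p_K}^\times/N(E_\p^\times)\cong\MZ/2$, and is hyperbolic iff it has even rank $2M$ and discriminant $(-1)^M$. The form $\langle 1^{(2^n)}\rangle$ has discriminant $1$, so it is hyperbolic iff $1=(-1)^{2^{n-1}}$; for $n\ge2$ the exponent is even and this is automatic, proving hyperbolicity of $E_\p^{2^n}$ for $n\ge2$ and all $m\in\{2,3,4,5\}$. For $n=1$ hyperbolicity is equivalent to $-1\in N(E_\p^\times)$, i.e.\ to $(-1,-1)_{K_{\p_K}}=1$: true for $m=3,4,5$ by the observation above, and false for $m=2$ (where $K_{\p_K}=\q_2$ and $(-1,-1)_{\q_2}=-1$), which also yields anisotropy of $\q(i)^2_\p$.

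For the quadratic cases I would work in the Witt ring of the $2$-adic field $F=K_{\p_K}$ (so $F=\q_2$ when $m=2$). When $(-1,-1)_F=1$ the form $\langle 1,1\rangle$ represents $-1$, so $\langle 1,1\rangle\cong\langle -1,-1\rangle$ (the complementary entry being forced by the discriminant); hence $\langle 1^{(4)}\rangle$ is Witt-trivial and $\langle 1^{(2^n)}\rangle$ is hyperbolic for all $n\ge2$, covering $K_{\p_K}^{2^n}$ for $m=3,4,5$. When $F=\q_2$ the form $\langle 1^{(4)}\rangle$ is the unique anisotropic four-dimensional form, hence not hyperbolic, but $\langle 1^{(4)}\rangle\cong\langle(-1)^{(4)}\rangle$ (both anisotropic of dimension $4$) forces $[\langle 1^{(4)}\rangle]=-[\langle 1^{(4)}\rangle]$ in the Witt ring, whence $\langle 1^{(8)}\rangle$ is Witt-trivial and $\q_2^{2^n}$ is hyperbolic for $n\ge3$. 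The remaining anisotropy assertions are immediate: $\q_2^2$ has no isotropic vector since $-1\equiv3\pmod8$ is not a square in $\q_2$; $\q_2(\sqrt2)^2$ has none since $\sqrt{-1}\in\q_2(\sqrt2)$ would force $\zeta_8\in\q_2(\sqrt2)$, contradicting $[\q_2(\zeta_8):\q_2]=4$; $\q_2(i)^2$ has none by $(-1,-1)_{\q_2}=-1$; and $\q_2^4$ has none by the classical fact that $x_1^2+x_2^2+x_3^2+x_4^2=0$ has no nontrivial $2$-adic solution.

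The main obstacle is the dyadic bookkeeping: the classification of quadratic forms at the prime $2$ (square classes, Hasse invariants, higher unit filtrations) is genuinely more delicate than in odd residue characteristic, and one must keep track of the difference between a space being merely isotropic and being hyperbolic---for instance $\langle 1^{(4)}\rangle$ is hyperbolic over $\q_2(\sqrt2)$ but anisotropic over $\q_2$, which is exactly why the rank threshold jumps from $2^n$ with $n\ge2$ for $K_{\p_K}$ (when $m=3,4,5$) to $n\ge3$ for $\q_2$. Once the ramification data and the handful of Hilbert symbols are in hand, the rest is routine Witt-ring manipulation together with the uniqueness of the anisotropic form of each dimension over a local field.
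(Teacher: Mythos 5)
Your proposal is correct but takes a genuinely different route from the paper. The paper's proof is explicit and computational: it records the needed Hilbert symbol values $(-1,-1)_{\p}=1$ (for the real subfields with $m=4,5$ and $\q(\sqrt2)$) and $(-1/3,-1/3)_{2\mathbb Z_2}=1$ by direct computation in Hecke/Magma, and then uses solutions $a^2+b^2=-1$ (resp.\ $c^2+d^2=-3$) to write down explicit hyperbolic bases for each of the listed spaces, with tensor products to climb in rank; the anisotropy claims are delegated to computer algebra or to Theorem~3.2.2/Corollary~3.2.4 of Kirschmer. You instead invoke the classification of Hermitian spaces over local fields (rank plus determinant class in $K_{\p_K}^\times/N(E_\p^\times)\cong\MZ/2$, with the hyperbolic of rank $2M$ having class $(-1)^M$) and the structure of the Witt ring of a dyadic local field, and you derive the crucial symbol value $(-1,-1)_{K_{\p_K}}=1$ for $m\ge3$ abstractly from $\sqrt2\in K_{\p_K}$ together with the fact that every quadratic extension of $\q_2$ splits the Hamilton quaternion algebra. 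This is tidier and explains \emph{why} the rank thresholds jump between $\q_2$ and its extensions, at the cost of not producing explicit hyperbolic bases, which the paper arguably wants anyway since it builds on them later (in constructing scaled hyperbolic lattices). One small slip: you write $-1\equiv3\pmod 8$; in fact $-1\equiv7\pmod 8$, though either way $-1$ is a non-square in $\q_2$, so the anisotropy of $\q_2^2$ still follows. Everything else checks out, including the identification $E_\p=K_{\p_K}(\sqrt{-1})$ as a ramified quadratic extension for $m=2,\dots,5$ and the Witt-ring bookkeeping $\langle1^{(4)}\rangle\cong\langle(-1)^{(4)}\rangle$ forcing $\langle1^{(8)}\rangle$ to be hyperbolic over $\q_2$.
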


Above is a well-known result. Here we provide an elementary proof by using local Hilbert symbol. 
\begin{definition}[Local Hilbert symbol]
Let $K$ be a totally real field and let $\p$ be a prime $\mathcal{O}_K$-ideal, \textbf{local
Hilbert symbol} $(a,b)_\p$ is $1$ if $a x^2 + b y^2 = 1$ has a solution $(x,y)$ in $K_\p^2$, 
and $-1$ otherwise.
\end{definition}
There exist an efficient algorithm to compute the local Hilbert symbol~\cite{Voight2013}
and its implementation is available in Hecke~\cite{Nemo} and Magma~\cite{Magma} computer algebra systems.
For example we find the following values of the Hilbert symbol via direct computation: 
\begin{itemize}
    \item $(-1,-1)_\p = 1$ when $\p$ is the prime ideal with norm $2$ in $\q(\sqrt{2})$, $\q(\zeta_{2^m}) \cap \mathbb{R}$, $m=4,5$ 
    \item $(-1/3,-1/3)_\p = 1$ when $\p = 2 \mathbb{Z}_2$ in $\q_2$
\end{itemize}

Using above values of the local Hilbert symbol we can explicitly construct a hyperbolic basis 
for vector spaces mentioned in \cref{lem:dyadic-hyperbolic-spaces}
\begin{proof}[Proof~of~\cref{lem:dyadic-hyperbolic-spaces}]
We first construct basis for $E_\p^{2^n}$ $n \ge 1$ and $m = 3,4,5$. 
Consider the case $n=1$.
There exist $a,b$ in $K_{\p_K}$ such that $a^2 + b^2= -1$,
and so there is $\alpha = a + i b$ in $E_\p$ such that $\alpha \alpha^\ast = -1$.
Let $e_1 = (1, \alpha)$ and $f_1 = (1, -\alpha)/2$. 
For $n \ge 2$, we construct hyperbolic basis using the standard orthogonal basis $\{ b_j \}_{j=1}^{2^{n-1}}$ of $E_\p^{2^{n-1}}$ as $ \{ e_1 \otimes b_j, f_1 \otimes b_j \}_{j=1}^{2^{n-1}}$.

Second, we construct basis for $\q(i)_\p^{2^n}$ $n \ge 2$ and $m = 2$. 
There exist $c,d$ in $\q_2$ such that $c^2 + d^2 = -3$, and so there is $\beta = c + id $ in $\q(i)_\p$ such that $\beta \beta^\ast = -3$.
Using $\beta$  we construct hyperbolic basis in $\q(i)_\p^4$ as  
$$
e_1 = (\beta, 1+i, 1, 0),~f_1 = (-\beta, 1+i, 1, 0)/6,~e_2 = (0, -1, 1-i, \beta),~f_2 = (0, -1, 1-i, -\beta)/6.
$$
For $n \ge 3$, we construct hyperbolic basis using the standard orthogonal basis $\{ b_j \}_{j=1}^{2^{n-2}}$ of $E_\p^{2^{n-2}}$ as $ \{ e_1 \otimes b_j, f_1 \otimes b_j, e_2 \otimes b_j, f_2 \otimes b_j \}_{j=1}^{2^{n-2}}$.

Third, we construct basis for $(\q(\zeta_{2^m}) \cap \mathbb{R})_{\p_K}^{2^n} $ when $n \ge 2$ and $m = 3,4,5$. 
There exist $a,b$ in $K_{\p_K}$ such that $a^2 + b^2= -1$, and using them we construct hyperbolic basis of 
$(\q(\zeta_{2^m}) \cap \mathbb{R})_{\p_K}^{4}$: 
$$
e_1 = (a, b, 1, 0),~f_1 = (-a, -b, 1, 0)/2,~e_2 = (b, -a, 0, 1),~f_2 = (-b, a, 0, 1)/2.
$$
For $n \ge 3$, we construct hyperbolic basis using tensor product.

It remains to construct a hyperbolic basis for  $\q_2^{2^n}$ when $n \ge 3$.
Using that there exist $c,d$ in $\q_2$ such that $c^2 + d^2 = -3$, for $\q_2^{8}$ we have the following basis:
$$
\begin{array}{ll}
e_1 = (c,d,1,1,1,0,0,0)  & f_1 = (-c,-d,1,1,1,0,0,0)/6 \\
e_2 = (0,0,0,-1,1,1,c,d) & f_2 = (0,0,0,-1,1,1,-c,-d)/6 \\
e_3 = (-d,c,-1,1,0,1,0,0)  & f_3 = (d,-c,-1,1,0,1,0,0)/6 \\
e_4 = (0,0,1,0,-1,1,-d,c) & f_4 = (0,0,1,0,-1,1,d,-c)/6 \\
\end{array}
$$
For $n \ge 4$, we construct hyperbolic basis using tensor product.

Finally, the fact that there are no isotropic vectors can be checked using Hecke~\cite{Nemo}
function \texttt{is\_isotropic}, Magma~\cite{Magma} function \texttt{IsIsotropic}, or by checking the conditions described in 
Theorem~3.2.2 and Corollary~3.2.4 in~\cite{Kirschmer2016}.
\end{proof}

\subsubsection{Modular lattices over dyadic completions of CM fields}

The goal of this section it to describe the localisation of Barnes-Wall lattices in $E^{2^n} = \q(\zeta_{2^m})^{2^n}$ at the prime $\Oe$-ideal $\p$
with norm $2$. 
We construct examples of lattices in $E_\p^{2N}$ and then check that the examples we constructed are isomorphic  
to the localizations of Barnes-Wall lattices using the following result:

\begin{theorem}[Corollary~3.3.19~\cite{Kirschmer2016}]
\label{thm:hermitian-isometry}
Consider CM-field $E$ and prime $\Oe$-ideal $\p$ such that $\p$ is ramified relatively to the totally real subfield $K$ of $E$.
Two $\p^i$-modular~(\cref{sec:norm-scale-modular}) lattices $L,M$ in $E_\p^N$ are isometric~(\cref{sec:hermitian-lattices}) if and only if $\n(L) = \n(M)$~(\cref{def:scale-and-norm}).
\end{theorem}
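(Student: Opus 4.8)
The ``only if'' direction is immediate: if $U$ is a unitary over $E_\p$ with $M=UL$, then $U$ preserves the Hermitian form, so it carries $\{\langle x,x\rangle : x\in L\}$ bijectively onto $\{\langle y,y\rangle : y\in M\}$ and hence $\n(L)=\n(M)$ (\cref{def:scale-and-norm}). The content is the converse, and the plan is as follows.

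First I would isolate the only algebraic input that the ramification hypothesis provides. Both $L$ and $M$ being $\p^i$-modular forces $\s(L)=\s(M)=\p^i$, and from the general inclusions $\Tr(\s(L))\subseteq\n(L)\subseteq\s(L)\cap\mathcal{O}_{K_{\p_K}}$ the norm ranges only over a short, explicit chain of $\mathcal{O}_{K_{\p_K}}$-ideals; the assertion is that this single remaining parameter is a complete invariant. Because $\Op$ is a complete discrete valuation ring with $\p_K\Op=\p^2$, a uniformizer of $\Op$ has norm of $\p_K$-valuation $1$, so $N:=\mathrm{N}_{E_\p/K_{\p_K}}$ hits elements of odd valuation; combined with $[K_{\p_K}^\times:N(E_\p^\times)]=2$ from local class field theory this gives $N(\Op^\times)=\mathcal{O}_{K_{\p_K}}^\times$ --- \emph{every} unit of $K_{\p_K}$ is a norm. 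This is precisely where ramification enters and is what eventually kills the discriminant-type invariant.

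Next I would bring $L$ into a normal form. In residue characteristic $2$ a Hermitian form need not be diagonalizable, but one still has a norm splitting $L=L_1\perp\cdots\perp L_t$ into $\p^i$-modular sublattices of rank at most $2$, arranged so that $\n(L_1)=\n(L)$ and every later block has the maximal norm $\n_0:=\s(L)\cap\mathcal{O}_{K_{\p_K}}$. Using $N(\Op^\times)=\mathcal{O}_{K_{\p_K}}^\times$ one then checks three facts: any two rank-$1$ $\p^i$-modular lattices are isometric and have norm $\n_0$; any binary $\p^i$-modular lattice of norm $\n_0$ is isometric to one fixed ``hyperbolic'' block $H_i$ (one containing an isotropic vector); and for each admissible $\mathfrak{f}\subsetneq\n_0$ there is, up to isometry, a unique binary $\p^i$-modular lattice $P_{\mathfrak{f}}$ with norm $\mathfrak{f}$. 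Feeding these back into the splitting yields $L\cong H_i^{\perp r}\perp L_0$, where $L_0$ is empty, the unique rank-$1$ block, or some $P_{\mathfrak{f}}$, and where both $r$ and the type of $L_0$ are functions of the rank $N$ and of $\n(L)$ alone.

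Running the same normalisation on $M$ and using $\n(L)=\n(M)$ produces the same $r$ and the same $L_0$, hence $L\cong H_i^{\perp r}\perp L_0\cong M$; a Witt-type cancellation law for the split blocks $H_i$ makes the matching of the multiplicities $r$ rigorous. I expect the main obstacle to be the residue-characteristic-two structure theory underlying the norm splitting and the classification of the rank-$\le 2$ blocks --- the delicate interplay of the scale, norm, and weight ideals, as in O'Meara's classification of dyadic quadratic lattices or the local theory of~\cite{Kirschmer2016} --- together with establishing the dyadic Hermitian cancellation law. An alternative but equally technical route passes to the associated trace form and invokes the classification of ramified dyadic quadratic lattices; in either approach, the crux is verifying that ramification removes every invariant except the norm.
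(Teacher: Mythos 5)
The paper itself offers no proof of this statement --- it is imported verbatim as Corollary~3.3.19 of \cite{Kirschmer2016}, whose proof classifies $\p^i$-modular lattices over a ramified local extension by rank, norm \emph{and determinant}, and then notes that full lattices on the same Hermitian space share rank and determinant class. Your skeleton (norm splitting into $\p^i$-modular blocks of rank at most two, classification of the blocks, Witt-type cancellation) is the right one and matches that route, but the algebraic engine you put inside it is wrong. For a ramified quadratic extension of local fields the norms of units have index two in the units: your own observation that $N_{E_\p/K_{\p_K}}(E_\p^\times)$ contains an element of odd valuation shows that $K_{\p_K}^\times=N(E_\p^\times)\cdot\mathcal{O}_{K_{\p_K}}^\times$, so the index $[K_{\p_K}^\times:N(E_\p^\times)]=2$ from local class field theory is carried entirely by the units, giving $[\mathcal{O}_{K_{\p_K}}^\times:N(\Op^\times)]=2$ rather than $N(\Op^\times)=\mathcal{O}_{K_{\p_K}}^\times$. (It is the \emph{unramified} case in which every unit is a norm.) Concretely, over $\q_2(i)/\q_2$ the unit $3$ is not a norm, so the rank-one unimodular lattices $\langle 1\rangle$ and $\langle 3\rangle$ have equal rank, scale and norm yet are not isometric. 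This falsifies your claims that any two rank-one $\p^i$-modular lattices are isometric and that the binary block $P_{\mathfrak f}$ is unique for a fixed norm: in the ramified case the determinant in $K_{\p_K}^\times/N(E_\p^\times)$ survives as a second, independent invariant of a modular lattice.

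The related omission is that you never use the hypothesis that $L$ and $M$ are full lattices in the \emph{same} space $E_\p^N$, and this is precisely what rescues the theorem. Two full-rank lattices on one Hermitian space automatically have the same determinant class (their Gram determinants differ by $N(E_\p^\times)$), so once the block classification establishes that rank, determinant and norm determine the isometry class, equality of norms alone suffices for lattices on a common space. Without that hypothesis the statement is simply false, as the $\langle 1\rangle$ versus $\langle 3\rangle$ example shows. The repair is therefore to carry the determinant through your normal form as an explicit invariant --- distinguishing, for each admissible norm, the two possible binary blocks --- and only at the end observe that the ambient space forces the determinants of $L$ and $M$ to agree, rather than attempting to eliminate the determinant via a norm-surjectivity statement that fails in the ramified case.
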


Note that if a lattice is $\p^i$-modular then its scale is $\p^i$~(Remark~2.3.4 in~\cite{Kirschmer2016}).
In other words we can say that two modular lattices are isometric if they have the same norm and scale.
The relevant examples of the lattices are:

\begin{definition}[Scaled hyperbolic lattice]
\label{def:hyperbolic-lattice}
Let number field $E$ be either CM-field or a totally-real field, and let $\p$ be a prime $\Oe$-ideal.
A Hermitian lattice in $E_\p^{2N}$~(with the standard Hermitian inner product) is a \textbf{scaled hyperbolic lattice} when it has a basis $\{ e_j, \alpha f_j \}_{j=1}^N$ for some hyperbolic basis $\{ e_j, f_j \}_{j=1}$~(\cref{def:hyperbolic-space}) with the \textbf{scale} $\alpha = 1$ or $\alpha$ from $\Op$ such that $\alpha \Op = \p$ for $\Op$ the completion of $\Oe$ at $\p$. When $\alpha = 1$ we call the lattice hyperbolic.
\end{definition}

Note that $E_\p^{2N}$ has a hyperbolic lattice if and only if it is a hyperbolic space~\cref{def:hyperbolic-space}. 
The hyperbolic spaces relevant to localizations of Barnes-Wall lattices are described in~\cref{lem:dyadic-hyperbolic-spaces}.
We can check that hyperbolic and scaled hyperbolic lattices are modular lattices 
using \cref{eq:dual-pseudo-basis} and the fact that hyperbolic basis $\{ f_j,e_j \}_{j=1}^N$ is dual to $\{ e_j,f_j \}_{j=1}^N$.

\begin{lemma}[Norm and scale of hyperbolic lattices]
\label{lem:complex-hypernbolic-norm-and-scale}
Let  $E = \q(\zeta_{2^m})$, let $K = E \cap \mathbb{R}$, let $\p$ and $\p_K$ be the prime  $\Oe$-ideal and the prime $\mathcal{O}_K$ ideal with norm $2$.
Suppose that $m,n$ are chosen such that $E^{2^n}_\p$ is a hyperbolic space~(\cref{lem:dyadic-hyperbolic-spaces}).
Let $L$ be a scaled hyperbolic lattices in $E^{2^n}_\p$ with scale $\alpha$, 
then $\s(L) = \alpha \Op$, $\n(L) = \p_K $.
\end{lemma}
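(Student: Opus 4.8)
The plan is to compute both invariants directly from the Gram matrix of the defining basis via the pseudo-basis formula~\cref{eq:norm-and-scale}. Write $L$ in its basis coming from a hyperbolic basis $\{e_j,f_j\}$ of $E_\p^{2^n}$~(\cref{def:hyperbolic-lattice}), so that all coefficient ideals equal $\Op$ and the only nonzero inner products among the $2N$ basis vectors are $\langle e_j,\alpha f_j\rangle=\alpha^\ast$ and $\langle\alpha f_j,e_j\rangle=\alpha$; in particular every diagonal entry $\langle b_j,b_j\rangle$ vanishes, since $e_j$ and $f_j$ are isotropic. As noted just before the statement, $L$ is $\p$-modular when $\alpha\Op=\p$ and unimodular when $\alpha=1$, so $\s(L)$ is already pinned down (the scale of a $\p^i$-modular lattice is $\p^i$); I will nonetheless reprove this from the Gram matrix and treat $\n(L)$ by hand, since the norm is the delicate part.

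For the scale, \cref{eq:norm-and-scale} with all coefficient ideals equal to $\Op$ shows $\s(L)$ is the $\Op$-ideal generated by $\alpha$, $\alpha^\ast$ and $\alpha\alpha^\ast$. Since $2$ is totally ramified in $\q(\zeta_{2^m})$ there is a single prime above it, so $\p^\ast=\p$, hence $\alpha^\ast\Op=\alpha\Op=\p$ and $\alpha\alpha^\ast\Op=\p^2\subseteq\p$. Therefore $\s(L)=\alpha\Op$, which is $\Op$ when $\alpha=1$.

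For the norm, in \cref{eq:norm-and-scale} the $\mathrm N(\a_j)\langle b_j,b_j\rangle$ terms vanish and the cross terms collapse to $T(\alpha\Op)=T(\p)$ (respectively $T(\Op)$ if $\alpha=1$), using $\Op^\ast=\Op$ and that only a same-index pair $e_j,\alpha f_j$ contributes; here $T(\mathfrak b)$ is the $\mathcal{O}_{K_{\p_K}}$-ideal generated by $\Tr_{E/K}(\mathfrak b)$. It thus remains to prove $\Tr_{E/K}(\Op)=\Tr_{E/K}(\p)=\p_K$. Pick the uniformizer $\pi=\zeta_{2^m}-1$ of $E_\p$; then $\pi^\ast=\zeta_{2^m}^{-1}-1$ and
$$\Tr_{E/K}(\pi)=\pi+\pi^\ast=\zeta_{2^m}+\zeta_{2^m}^{-1}-2=\zeta_{2^m}^{-1}(\zeta_{2^m}-1)^2,$$
which has $\p$-valuation $2$, hence $\p_K$-valuation $1$ (as $E/K$ is ramified at $\p$, $e(\p/\p_K)=2$): it is a uniformizer of $K_{\p_K}$. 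Since $\pi\in\p\subseteq\Op$, this already gives $\p_K\subseteq\Tr_{E/K}(\p)\subseteq\Tr_{E/K}(\Op)$. For the reverse inclusion it suffices to bound $\Tr_{E/K}(\Op)$: as $E_\p/K_{\p_K}$ is totally ramified of degree $2$ we have $\Op=\mathcal{O}_{K_{\p_K}}+\mathcal{O}_{K_{\p_K}}\pi$, and $\Tr_{E/K}(a+b\pi)=2a+(\zeta_{2^m}+\zeta_{2^m}^{-1}-2)\,b\in\p_K$ because $v_{\p_K}(2)=\tfrac12\phi(2^m)\ge1$ and $\Tr_{E/K}(\pi)\in\p_K$. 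Combining, $\p_K\subseteq\Tr_{E/K}(\p)\subseteq\Tr_{E/K}(\Op)\subseteq\p_K$, so $\n(L)=\p_K$ whether $\alpha=1$ or $\alpha\Op=\p$.

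The main obstacle is the genuinely dyadic phenomenon in the last step: one must see that the norm is $\p_K$ and not something larger, and in the unimodular case $\alpha=1$ this is exactly $\Tr_{E/K}(\Op)\subsetneq\mathcal{O}_{K_{\p_K}}$, which would fail for an unramified or tamely ramified extension and here relies on the wild ramification of $E/K$ at $\p$ (equivalently on $v_\p(2)=\phi(2^m)\ge2$, which is what forces $\Tr_{E/K}(\pi)=\zeta_{2^m}^{-1}(\zeta_{2^m}-1)^2$ into $\p$). A secondary point is uniformity over all admissible scales $\alpha$: one cannot assume a convenient generator of $\p$, but the valuation argument above uses only $\alpha\Op=\p$, so it applies verbatim to every such $\alpha$, and the cases $m\in\{2,3,4,5\}$ allowed by \cref{lem:dyadic-hyperbolic-spaces} are covered uniformly.
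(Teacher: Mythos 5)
Your proof is correct, and for the scale it coincides with the paper's argument, but for the norm it takes a genuinely more self-contained route. Both you and the paper reduce via~\cref{eq:norm-and-scale} to evaluating $\mathrm T(\alpha\Op)$ (the diagonal terms vanish since the $e_j,f_j$ are isotropic). The paper then establishes $\mathrm T(\Op)=\mathrm T(\p)=\p_K$ by invoking a computer-algebra computation in Hecke, or alternatively by citing the inverse-different calculation at the start of Section~3.3.3 of Kirschmer's notes. You instead prove these equalities by hand: taking the uniformizer $\pi=\zeta_{2^m}-1$, observing that $\Tr_{E/K}(\pi)=\zeta_{2^m}^{-1}\pi^2$ has $\p$-valuation $2$ and hence is a $\p_K$-uniformizer (yielding the lower bound $\p_K\subseteq\Tr_{E/K}(\p)\subseteq\Tr_{E/K}(\Op)$), and then bounding above via $\Op=\mathcal{O}_{K_{\p_K}}+\mathcal{O}_{K_{\p_K}}\pi$ together with $v_{\p_K}(2)\ge 1$. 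This trades the paper's brevity for a transparent, machine-independent derivation that also pinpoints exactly where the dyadic hypothesis enters: the inclusion $\Tr_{E/K}(\Op)\subseteq\p_K$ is precisely the failure of the trace to be surjective on rings of integers, which is a wild-ramification phenomenon at the prime $2$. One cosmetic slip: you list $\alpha\alpha^\ast$ among the generators of the scale, but since $\langle\alpha f_j,\alpha f_j\rangle=0$ (the $f_j$ are isotropic) no such term arises from the Gram matrix; it is harmless anyway since $\alpha\alpha^\ast\Op\subseteq\alpha\Op$.
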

\begin{proof}
Using \cref{eq:norm-and-scale},  the fact that for any fractional ideal $I$ the sum $I+I = I$,
 $\pi \Op = \pi^\ast \Op = \p^\ast = \p$ we have  $\s(L) = \alpha \Op$.
Again using~\cref{eq:norm-and-scale}, we have $\n(L) = \mathrm{T}(\alpha \O_p)$.
% It remains to compute the ideals $\mathrm{T}(\O_p)$ and $\mathrm{T}(\p)$. 
We use  $\mathrm{T}(\O_p) = \mathrm{T}(\Oe)_{\p_K}$ and $\mathrm{T}(\p) = \mathrm{T}(\p)_{\p_K}$ where in the right-hand side 
we consider $\p$ as $\Oe$-ideal. 
By direct computations in Hecke~\cite{Nemo} we have $\n(L)= \p_K$.
Alternatively, see the calculation for the trace ideal via inverse different of $E/K$ in the beginning of Section~3.3.3 in~\cite{Kirschmer2016}.
\end{proof}

We have used Hecke~\cite{Nemo} to compute scale and norm of the re-scaled Barnes-Wall lattices over $\q(\zeta_{2^m})$.
The result of our computation is the following: 

\begin{lemma}[Localization of Barnes-Wall lattices]
\label{lem:bw-lattice-localization}
Let  $E = \q(\zeta_{2^m})$, let $K = E \cap \mathbb{R}$, let $\p$ and $\p_K$ be the prime  $\Oe$-ideal and the prime $\mathcal{O}_K$ ideal with norm $2$.
Consider Hermitian lattice $\mathcal{L}_n = (1+\zeta_8)^n B_{\mathbb{C}}^{\otimes n} \mathbb{Z}[\zeta_{2^m}]^{2^n}$~(for $B_{\mathbb{C}}$ see \cref{eq:basis-changes}) for $m=3,4,5$, then
scale $\s((\mathcal{L}_n)_\p) = \Oe$, norm $\n((\mathcal{L}_n)_\p) = \p_K$ for $n=1,\ldots,4$ and therefore $(\mathcal{L}_n)_\p$ is isometric to a hyperbolic lattice~(\cref{def:hyperbolic-lattice}).

Consider Hermitian lattice $\mathcal{M}_n = (1+i)^{\lceil n/2 \rceil } B_{\mathbb{C}}^{\otimes n} \mathbb{Z}[i]^{2^n}$
and let $\p' = (1+i) \mathbb{Z}[i]$
then $\s((\mathcal{M}_n)_{\p'}) = (\p')^{n\,\mathrm{mod}\,2}$, $\n((\mathcal{M}_n)_{\p'}) = 2 \mathbb{Z}$ for $n=2,\ldots,5$.
Lattice $(\mathcal{M}_n)_{\p'}$ is isomorphic to a scaled hyperbolic lattice with scale $\alpha = (1+i)^{n\, \mathrm{mod} \, 2}$.
\end{lemma}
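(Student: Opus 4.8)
The plan is to handle both parts in the same way: compute the scale and the norm of each localized lattice, recognise these as the invariants of a (scaled) hyperbolic lattice via \cref{lem:complex-hypernbolic-norm-and-scale}, and then invoke the classification of modular lattices over ramified dyadic completions, \cref{thm:hermitian-isometry}, to conclude isometry. Two facts are needed at the outset: that the ambient spaces $E_\p^{2^n}$ (for $E=\q(\zeta_{2^m})$, $m=3,4,5$, $n\ge1$) and $\q(i)^{2^n}_{\p'}$ (for $n\ge2$) are hyperbolic spaces, so that a hyperbolic lattice exists to compare against — this is exactly \cref{lem:dyadic-hyperbolic-spaces} — and that the prime above $2$ is ramified over the totally real subfield $K$, which is the hypothesis of \cref{thm:hermitian-isometry} and holds for these $2$-power cyclotomic fields and for $\q(i)$.

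For the scale I would avoid a generic lattice routine and work directly with the Gram matrix. With $\Lambda_n=(1+\zeta_8)^n B_{\mathbb{C}}^{\otimes n}$ the basis matrix of $\mathcal{L}_n$, we have $\Lambda_n^\dagger\Lambda_n=\bigl((1+\zeta_8)(1+\zeta_8)^\ast\bigr)^n\,(B_{\mathbb{C}}^\dagger B_{\mathbb{C}})^{\otimes n}$, and the modular basis change property of the qubit Clifford group (\cref{def:xi-modular}) makes $B_{\mathbb{C}}^\dagger B_{\mathbb{C}}$ a $\xi$-modular matrix; a one-line check gives $(1+i)\,B_{\mathbb{C}}^\dagger B_{\mathbb{C}}=\left(\begin{smallmatrix}1+i&i\\1&1+i\end{smallmatrix}\right)$, whose determinant is the unit $i$, so $B_{\mathbb{C}}^\dagger B_{\mathbb{C}}=(1+i)^{-1}(\Oe\text{-unimodular})$. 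Since $(1+\zeta_8)(1+\zeta_8)^\ast=2+\sqrt2$ and $1+i$ both have $\p$-valuation $2$, the scalar $\bigl((2+\sqrt2)/(1+i)\bigr)^n$ is a unit, whence $\Lambda_n^\dagger\Lambda_n\in\GL_{2^n}(\Op)$ and $(\mathcal{L}_n)_\p$ is $\Op$-modular with scale $\Op$. The same computation applied to $\mathcal{M}_n$, with $1+i$ in place of $1+\zeta_8$ and $(1+i)^{\lceil n/2\rceil}$ in place of $(1+\zeta_8)^n$, yields a Gram matrix equal to $(1+i)^{\,n\bmod 2}$ (since $2\lceil n/2\rceil-n=n\bmod 2$) times an $\Op$-unimodular matrix, so $(\mathcal{M}_n)_{\p'}$ is $(\p')^{\,n\bmod 2}$-modular with scale $(\p')^{\,n\bmod 2}$.

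For the norm I do not expect a shortcut: over a ramified dyadic completion the norm is strictly finer than the scale (a unimodular lattice can still have norm $\p_K\subsetneq\mathcal{O}_{K,\p_K}$), so I would compute $\n((\mathcal{L}_n)_\p)$ and $\n((\mathcal{M}_n)_{\p'})$ directly from the pseudo-basis formula \cref{eq:norm-and-scale}, equivalently in a computer algebra system such as Hecke~\cite{Nemo}, using that $\mathrm{Tr}_{E/K}$ of the relevant $\Op$-ideals is controlled by the inverse different of $E/K$ as in Section~3.3.3 of~\cite{Kirschmer2016}. This should return $\n((\mathcal{L}_n)_\p)=\p_K$ for $n=1,\dots,4$ and $\n((\mathcal{M}_n)_{\p'})=2\mathbb{Z}$ for $n=2,\dots,5$. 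Now \cref{lem:complex-hypernbolic-norm-and-scale} says a hyperbolic lattice in $E_\p^{2^n}$ is $\Op$-modular with norm $\p_K$, and a scaled hyperbolic lattice in $\q(i)^{2^n}_{\p'}$ with scale $\alpha=(1+i)^{\,n\bmod 2}$ is $(\p')^{\,n\bmod 2}$-modular with norm $2\mathbb{Z}$. Since $(\mathcal{L}_n)_\p$ and the hyperbolic lattice are both $\Op$-modular with the same norm, and $(\mathcal{M}_n)_{\p'}$ and the scaled hyperbolic lattice are both $(\p')^{\,n\bmod 2}$-modular with the same norm, \cref{thm:hermitian-isometry} gives the two claimed isometries.

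I expect the norm computation to be the main obstacle: it is the only step with no conceptual simplification, it is what forces the statement to be checked for finitely many $n$, and it is the point at which the genuine subtlety of dyadic Hermitian lattices enters. For $\mathcal{L}_n$ this can in fact be bypassed for all $n\ge1$: since $\Oe$ is a principal ideal domain, $(\mathcal{L}_n)_\p$ is the $n$-fold Hermitian tensor power of $(\mathcal{L}_1)_\p$; the case $n=1$ is a small computation giving norm $\p_K$, so $(\mathcal{L}_1)_\p$ is isometric to the rank-$2$ hyperbolic lattice by \cref{thm:hermitian-isometry}; and a tensor power of a hyperbolic lattice is again hyperbolic, a hyperbolic basis being obtained by pairing each tensor-basis vector with the one gotten by exchanging $e_j\leftrightarrow f_j$ in every factor. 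For $\mathcal{M}_n$ the non-uniform scaling $(1+i)^{\lceil n/2\rceil}$ breaks this clean tensor structure (already $\q(i)^2_{\p'}$ has no isotropic vector), so the direct computation is the route I would keep there.
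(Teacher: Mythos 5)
Your proposal is correct and follows the same overall strategy as the paper's proof: compute the scale and the norm of the localized lattices, match them against the invariants of a (scaled) hyperbolic lattice via \cref{lem:complex-hypernbolic-norm-and-scale}, and invoke \cref{thm:hermitian-isometry} to conclude the isometry. The paper's proof is stated tersely as a ``direct computation of norm and scale'' (performed in Hecke) together with modularity of $\mathcal{L}_n$, $\mathcal{M}_n$, so the logical skeleton is identical to yours. You add two worthwhile refinements, though. First, you compute the scale by hand from the Gram matrix $\Lambda_n^\dagger\Lambda_n$, observing that $(1+i)\,B_{\mathbb{C}}^\dagger B_{\mathbb{C}}$ is $\Oe$-unimodular and that the overall scalar $\big((2+\sqrt{2})/(1+i)\big)^n$ for $\mathcal{L}_n$ (respectively $(1+i)^{\,n\bmod 2}$ up to units for $\mathcal{M}_n$) supplies exactly the right $\p$-valuation; this removes the need to trust a generic routine for the scale and makes the claimed modularity transparent. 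Second, your tensor-power observation --- that $\mathcal{L}_n \cong \mathcal{L}_1^{\otimes n}$ and that a Hermitian tensor power of a hyperbolic lattice is again hyperbolic, by pairing each tensor basis vector with its full $e\leftrightarrow f$ swap --- would give the first half of the lemma for all $n\ge 1$ rather than only $n\le 4$, which is precisely the extension the paper defers to a remark after the lemma. Your diagnosis that the non-uniform scaling of $\mathcal{M}_n$ breaks this tensor structure, and that $\q(i)^2_{\p'}$ has no isotropic vector so $\mathcal{M}_1$ cannot be hyperbolic, correctly explains why the second part does not simplify in the same way.
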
 
\begin{proof}
The result follow from the direct computation of norm and scale of $\mathcal{L}_n$ and $\mathcal{M}_n$, the fact that $\mathcal{L}_n$ and $\mathcal{M}_n$ are modular lattices,
expressions for norm and scale of scaled hyperbolic lattices~(\cref{lem:complex-hypernbolic-norm-and-scale})
and \cref{thm:hermitian-isometry}.
\end{proof}

One can extend above lemma to other values of $n$ by computing lattice norm and scale analytically using \cref{eq:norm-and-scale}.
One can also extend the lemma to other values of $m$ by analytically computing trace ideal $\mathrm{T}(\Oe)$ for $E = \q(\zeta_{2^m})$.
For our purposes small set of values $m$ and $n$ is sufficient.

\subsubsection{Modular lattices over dyadic completions of totally real fields}
We start by comparing scales and norms of localizations of real Barnes-Wall lattices with scales and norm of hyperbolic lattices in the completions of the corresponding totally real fields.
The equality of scales and norms is generally not sufficient to establish that lattices in the completions of the totally real fields are isometric, 
however the equality is sufficient in the following special case: 
\begin{theorem}
\label{thm:quadratic-isometry}
Consider totally real field $K$ and prime $\mathcal{O}_K$-ideal $\p_K$ such that $\p_K$ has norm $2$.
Consider two modular~(\cref{sec:norm-scale-modular}) lattices $L,M$ in $K_{\p_K}$.
If $\n(L) = \n(M)$, $\s(L) = \s(M)$, $\n(L) = 2\s(L)$~(see \cref{def:scale-and-norm}) then lattices $L$ and $M$ are isometric~(\cref{sec:hermitian-lattices}).
\end{theorem}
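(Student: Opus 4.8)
The plan is to reduce the statement to the classification of \emph{even} unimodular lattices over a dyadic local field, mirroring the Hermitian statement \cref{thm:hermitian-isometry} but with the extra hypothesis $\n(L)=2\s(L)$ compensating for the well-known fact that the $2$-adic quadratic theory is more intricate than the Hermitian one. First note that ``$\p_K$ has norm $2$'' means $K_{\p_K}$ is a finite extension of $\q_2$ with residue field $\MF_2$, i.e.\ a dyadic local field, and that the claim is only non-vacuous when $L$ and $M$ have the same rank $N$ --- in which case both are lattices in the standard quadratic space $K_{\p_K}^N$ with $\langle x,y\rangle=\sum_j x_j y_j$ (the standard space, per \cref{sec:hermitian-lattices}; here $^\ast$ is the identity on $K$). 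Writing $\s(L)=\s(M)=\p_K^k$ and choosing a uniformizer $\pi$ of $K_{\p_K}$, I would rescale the form on $K_{\p_K}^N$ by $\pi^{-k}$; this leaves the orthogonal group, hence the notion of isometry, unchanged, makes both $L$ and $M$ unimodular, and turns their norm ideal into $\pi^{-k}\n(L)=\pi^{-k}\cdot 2\s(L)=2\mathcal{O}$, where $\mathcal{O}$ is the ring of integers of $K_{\p_K}$. All three hypotheses enter here: $\s(L)=\s(M)$ provides a common rescaling, while $\n(L)=\n(M)$ together with $\n(L)=2\s(L)$ makes both of the rescaled lattices even. So it suffices to prove: two even unimodular lattices in the same quadratic space over a dyadic local field are isometric.

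For this I would invoke the classical structure theory of quadratic lattices over dyadic local fields (O'Meara~\cite{OMearaQuad}, \S93; see also \cite{Kirschmer2016}, \S3.2). Two points. (i) An even unimodular dyadic lattice can contain no unary Jordan component --- a unimodular unary lattice $\langle u\rangle$ has $u\in\n(L)$ a unit, contradicting $\n(L)=2\mathcal{O}$ --- so it is an orthogonal sum of binary even unimodular planes $A(a,b)=\left(\begin{smallmatrix}a&1\\1&b\end{smallmatrix}\right)$ with $a,b\in 2\mathcal{O}$. (ii) The finer invariants that, for general unimodular dyadic lattices, supplement the ambient space in the isometry classification --- the norm ideal, the weight ideal, and O'Meara's fundamental-norm data --- all collapse to their extremal value $2\mathcal{O}$ precisely when $\n(L)=2\s(L)$. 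Since over a local field a lattice is determined up to isometry by these invariants together with its ambient quadratic space, an even unimodular dyadic lattice is determined by its ambient space alone; as the rescaled $L$ and $M$ lie in the same space $K_{\p_K}^N$, they are isometric, and undoing the rescaling finishes the proof.

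The step I expect to be the main obstacle is (ii): carefully checking that the single condition $\n=2\s$ really trivialises \emph{every} dyadic invariant, with no residual dependence on the ramification of $K_{\p_K}/\q_2$ beyond ``residue field $\MF_2$'' --- the $2$-adic bookkeeping (weight ideals, fundamental types, square classes in $K_{\p_K}^\times$) is notoriously delicate, and one must cite the correct normal-form and isometry-criterion theorems from \cite{OMearaQuad}. If a clean citation is awkward, a more hands-on alternative is to use (i) to write both lattices as orthogonal sums of planes $A(a_i,b_i)$ and then apply dyadic Witt cancellation together with the standard plane identities (for instance $A(a,b)\perp A(a',b')\cong H\perp A(a'',b'')$, where $H=A(0,0)$ is the hyperbolic plane) to bring each to a normal form depending only on the rank and the determinant; the determinant is forced to agree because $L$ and $M$ span the same space. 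This route is elementary but requires grinding through square-class computations in $K_{\p_K}$.
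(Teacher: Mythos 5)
Your proposal is correct and takes essentially the same route as the paper: the paper's proof is a direct citation of Kirschmer's norm-group/weight machinery (Definition~3.3.7, Proposition~3.3.8, Theorem~3.3.10 in \cite{Kirschmer2016}), where Proposition~3.3.8 is exactly your step~(ii) --- the hypothesis $\n(L)=2\s(L)$ forces the weight and norm group to collapse to $\n(L)$ --- and Theorem~3.3.10 then classifies modular dyadic lattices on a fixed quadratic space by these invariants. Your rescaling to the even unimodular case and the appeal to O'Meara's dyadic classification are just an unpacked version of that same citation, so the argument goes through as planned.
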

\begin{proof}
The proof relies on norm group $\mathfrak{g}(L)$ and weight $\mathfrak{w}(L)$ of a quadratic lattice $L$, see Definition~3.3.7 in \cite{Kirschmer2016}. 
When $\n(L) = 2\s(L)$, Proposition 3.3.8 in \cite{Kirschmer2016} implies that $\mathfrak{w}(L) = \mathfrak{g}(L) = \n(L)$. 
Then, according to Theorem 3.3.10 in \cite{Kirschmer2016}, lattices $L$ and $M$ are isometric because $ \mathfrak{g}(L) =  \mathfrak{g}(M)$.
\end{proof}

The following follows directly from \cref{eq:norm-and-scale}:
\begin{lemma}[Norm and scale of hyperbolic lattices]
\label{lem:real-hyperbolic-norm-and-scale}
Let  $K = \q$ or $K = \q(\zeta_{2^m} + \zeta_{2^m}^{-1})$ for $m \ge 3$, $\p_K$ be the prime $\mathcal{O}_K$-ideal with norm $2$.
Suppose that $m,n$ are chosen such that $K^{2^n}_{\p_K}$ is a hyperbolic space~(\cref{lem:dyadic-hyperbolic-spaces}).
Let $L$ be scaled hyperbolic lattice with scale $\alpha$, 
then $\s(L) = \alpha \mathcal{O}_{\p_K}$, $\n(L) = 2 \alpha \mathcal{O}_{\p_K}$ where $\mathcal{O}_{\p_K}$ is the completion of $\mathcal{O}_K$  at $\p_K$.
\end{lemma}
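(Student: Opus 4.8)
The plan is to read off both invariants directly from the pseudo-basis formula \cref{eq:norm-and-scale}, specialized to the totally real situation. First I would record that, since $K$ is totally real, the complex conjugation $^\ast$ is the identity on $K$ and on its completion $K_{\p_K}$, so $E_\p = K_{\p_K}$, the fixed subfield coincides with the whole field, and the maps $\mathrm{N}, \mathrm{T}$ of \cref{def:scale-and-norm} on fractional $\mathcal{O}_{\p_K}$-ideals simplify: $\mathrm{N}(\mathfrak{a})$ is generated by $\{x^2 : x \in \mathfrak{a}\}$, hence $\mathrm{N}(\mathfrak{a}) = \mathfrak{a}^2$, while $\mathrm{T}(\mathfrak{a})$ is generated by $\{x + x^\ast : x \in \mathfrak{a}\} = \{2x : x \in \mathfrak{a}\}$, hence $\mathrm{T}(\mathfrak{a}) = 2\mathfrak{a}$.

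Next I would fix a basis of $L$. Since $\mathcal{O}_{\p_K}$ is a discrete valuation ring, hence a principal ideal domain, the scaled hyperbolic lattice $L$ has an honest basis, namely $b_j = e_j$ for $1 \le j \le N$ and $b_{N+j} = \alpha f_j$ for $1 \le j \le N$, where $\{e_j, f_j\}_{j=1}^N$ is a hyperbolic basis~(\cref{def:hyperbolic-space}) and $\alpha = 1$ or $\alpha\mathcal{O}_{\p_K} = \p_K$ is the scale; all $2N$ coefficient ideals in the corresponding pseudo-basis equal $\mathcal{O}_{\p_K}$. I would then compute the Gram entries $\langle b_i, b_j\rangle$ from the hyperbolic relations $\langle e_i,e_j\rangle = \langle f_i,f_j\rangle = 0$, $\langle e_i,f_j\rangle = \delta_{ij}$ together with $\alpha^\ast = \alpha$: every diagonal entry $\langle b_i,b_i\rangle$ vanishes, and an off-diagonal entry $\langle b_i,b_j\rangle$ is nonzero only when $\{i,j\} = \{l, N+l\}$ for some $l$, in which case it equals $\alpha$.

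Plugging these into \cref{eq:norm-and-scale} finishes the argument. For the scale, $\s(L) = \sum_{i,j} \mathcal{O}_{\p_K}\langle b_i, b_j\rangle \mathcal{O}_{\p_K}$ is the $\mathcal{O}_{\p_K}$-ideal generated by all Gram entries, which is $\alpha\mathcal{O}_{\p_K}$. For the norm, the first sum $\sum_j \mathrm{N}(\mathcal{O}_{\p_K})\langle b_j,b_j\rangle$ is zero because every diagonal Gram entry vanishes, and the second sum contributes $\mathrm{T}(\alpha\mathcal{O}_{\p_K}) = 2\alpha\mathcal{O}_{\p_K}$ for each pair $(e_l,\alpha f_l)$, so $\n(L) = 2\alpha\mathcal{O}_{\p_K}$. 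There is no genuine obstacle here; the only points requiring care are the totally real specialization of $\mathrm{N}$ and $\mathrm{T}$ — in particular the factor of $2$ in $\mathrm{T}$, which is exactly the source of the relation $\n(L) = 2\s(L)$ invoked in \cref{thm:quadratic-isometry} — and tracking which index pairs actually contribute to each sum.
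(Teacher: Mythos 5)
Your proof is correct and is exactly the computation the paper has in mind: the paper offers no written proof beyond the remark that the lemma ``follows directly from \cref{eq:norm-and-scale}'', and your argument simply carries out that computation (specializing $\mathrm{N}$ and $\mathrm{T}$ to the totally real case, where conjugation is the identity, and evaluating the Gram entries of the basis $\{e_j,\alpha f_j\}$). The details check out, including the factor of $2$ from $\mathrm{T}(\mathfrak{a})=2\mathfrak{a}$ that yields $\n(L)=2\s(L)$.
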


By direct computation in Hecke~\cite{Nemo} we establish the following:
\begin{lemma}[Norm and scale of localization of real Barnes-Wall lattices]
\label{lem:real-bw-lattice-localization}
Let $K = \q(\zeta_{2^m} + \zeta_{2^m}^{-1})$ for $m = 4,5,6$, $\p_K$ be the prime $\mathcal{O}_K$-ideal with norm $2$.
Consider lattice $\mathcal{L}_n = \left(\sqrt{2 + \sqrt{2}}\right)^n  B_{\mathbb{R}}^{\otimes n} \mathcal{O}_K^{2^n}$~(for $B_{\mathbb{R}}$ see \cref{eq:basis-changes}) for $m=4,5,6$, for $n=2,\ldots,5$, then
scale $\s((\mathcal{L}_n)_{\p_K}) = \mathcal{O}_{\p_K}$, norm $\n((\mathcal{L}_n)_{\p_K}) = 2 \mathcal{O}_{\p_K}$ 
and therefore $(\mathcal{L}_n)_{\p_K}$ is isometric to a hyperbolic lattice~(\cref{def:hyperbolic-lattice}).

Consider lattice $\mathcal{M}_n = \sqrt{2}^{\lceil n/2 \rceil } B_{\mathbb{R}}^{\otimes n} \mathbb{Z}[\sqrt{2}]^{2^n}$ and let $\p' = \sqrt{2} \mathbb{Z}[\sqrt{2}]$ ,
then $\s((\mathcal{M}_n)_{\p'}) = (\p')^{n\,\mathrm{mod}\,2}$, $\n((\mathcal{M}_n)_{\p'}) = 2 (\p')^{n\,\mathrm{mod}\,2}$ for $n=2,\ldots,5$.
Lattice $(\mathcal{M}_n)_{\p'}$ is isometric to a scaled hyperbolic lattice with scale $\alpha = (\sqrt{2})^{n\,\mathrm{mod}\,2}$.
\end{lemma}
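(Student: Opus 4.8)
The plan is to follow the template of the proof of \cref{lem:bw-lattice-localization}, transported to the totally real setting, with \cref{thm:quadratic-isometry} taking over the role played there by \cref{thm:hermitian-isometry}. Concretely I would: (i) compute the scale $\s$ and norm $\n$ of the localized rescaled real Barnes--Wall lattices $(\mathcal{L}_n)_{\p_K}$ and $(\mathcal{M}_n)_{\p'}$; (ii) check that these localized lattices are modular; (iii) verify the extra constraint $\n(L) = 2\,\s(L)$ that appears in the hypotheses of \cref{thm:quadratic-isometry}; (iv) compute $\s$ and $\n$ of the relevant scaled hyperbolic lattices from \cref{lem:real-hyperbolic-norm-and-scale} and note that they too satisfy $\n = 2\,\s$; and (v) match the two pairs of invariants and invoke \cref{thm:quadratic-isometry}, using \cref{lem:dyadic-hyperbolic-spaces} to guarantee that $K_{\p_K}^{2^n}$ actually carries a hyperbolic lattice for the stated $m$ and $n$.

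For step (i) the shortest route is the direct Hecke computation reported in the statement, but the same values follow analytically from \cref{eq:norm-and-scale}. Writing $M = B_{\mathbb{R}}^{\dagger} B_{\mathbb{R}}$, the $\xi$-modularity of $M$ recorded in \cref{tab:basis-changes} (with $\xi = \sqrt 2$) gives $M = \xi^{-1} N$ for an $\Oe$-unimodular integral matrix $N$ that is congruent to the $2\times2$ swap $X$ modulo the ramified prime; hence $B_{\mathbb{R}}^{\otimes n\,\dagger} B_{\mathbb{R}}^{\otimes n} = \xi^{-n} N^{\otimes n}$, and the scale and norm of $B_{\mathbb{R}}^{\otimes n}\mathcal{O}^{2^n}$ are $\xi^{-n}$ times the corresponding invariants of the integral Gram matrix $N^{\otimes n}$, which one reads off from its entries and its diagonal quadratic form. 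The rescaling exponents ($n$ for $\mathcal{L}_n$, $\lceil n/2\rceil$ for $\mathcal{M}_n$) are tuned so that, after multiplying by the totally positive scalar coming from the rescaling, the scale becomes $\mathcal{O}_{\p_K}$ for $\mathcal{L}_n$ and $(\p')^{n\bmod 2}$ for $\mathcal{M}_n$, and the norm is then fixed by \cref{eq:norm-and-scale}. Step (ii) follows as in the CM case: the lattices are built by tensoring from a $\xi$-modular form, rescaling by a scalar multiplies a lattice and its modularity ideal by the same factor and so preserves modularity, and the scaled hyperbolic lattices on the other side of the comparison are modular because a hyperbolic basis is self-dual up to the scaling (\cref{eq:dual-pseudo-basis}).

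For steps (iv) and (v), \cref{lem:real-hyperbolic-norm-and-scale} gives $\s = \alpha\,\mathcal{O}_{\p_K}$ and $\n = 2\alpha\,\mathcal{O}_{\p_K}$ for a scaled hyperbolic lattice of scale $\alpha$, so both sides of the comparison satisfy $\n = 2\,\s$, precisely the regime \cref{thm:quadratic-isometry} covers. Choosing $\alpha = 1$ for $\mathcal{L}_n$ and $\alpha = (\sqrt 2)^{n\bmod 2}$ for $\mathcal{M}_n$ aligns scales and norms, and since all the lattices involved are modular, \cref{thm:quadratic-isometry} yields the asserted isometries; the existence of a hyperbolic lattice in $K_{\p_K}^{2^n}$ for the stated ranges is the content of \cref{lem:dyadic-hyperbolic-spaces} (with $K = \q(\zeta_{2^m}+\zeta_{2^m}^{-1})$; for a value of $m$ outside that lemma's table, extending it is a further finite local computation).

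The real obstacle is step (iii): in the dyadic regime the \emph{norm} ideal of a quadratic lattice, unlike its scale, depends on which diagonal values $\langle x,x\rangle$ are actually represented, so it is sensitive to the lattice itself rather than merely to its Gram matrix up to scaling. The identity $\n = 2\,\s$ fails at $n = 1$ (there the norm strictly contains $2\,\s$) and holds for $n \ge 2$, which is exactly why the lemma is stated only for $n \ge 2$. Establishing it amounts to tracking the quadratic form $v \mapsto v^{\dagger} N^{\otimes n} v$ modulo small powers of the ramified prime: since $N \equiv X$ there, the diagonal of $N^{\otimes n}$ vanishes modulo the prime and $v^{\dagger} N^{\otimes n} v \equiv v^{\dagger} X^{\otimes n} v$, which pairs each length-$n$ bitstring index with its bitwise complement and is therefore divisible by $2$; a short refinement modulo the next two powers of the prime shows every represented value lands in $2\,\s$ but in no smaller ideal precisely once $n \ge 2$. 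This is the one point where the specific rescaled Barnes--Wall structure (rather than abstract modularity) is used; the finitely many pairs $(m,n)$ in the statement are discharged by the Hecke computation, and the tensor structure together with \cref{eq:norm-and-scale} shows the pattern persists.
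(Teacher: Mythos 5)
Your proposal is correct and follows essentially the same route as the paper: the paper's proof is exactly "direct computation of norm and scale (done in Hecke), modularity, comparison with \cref{lem:real-hyperbolic-norm-and-scale}, and \cref{thm:quadratic-isometry}." The only difference is that you additionally sketch how the Hecke computation could be replaced by an analytic argument via \cref{eq:norm-and-scale} and the tensor structure of the Gram matrix (including the correct observation that $\n = 2\s$ fails at $n=1$), which is a useful elaboration but not a different approach.
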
 
\begin{proof}
The result follow from the direct computation of norm and scale of  $\mathcal{L}_n$ and $\mathcal{M}_n$,
the fact that $\mathcal{L}_n$ and $\mathcal{M}_n$ are modular lattices,
the expression for norm and scale of (scaled) hyperbolic lattices~(\cref{lem:real-hyperbolic-norm-and-scale}) and \cref{thm:quadratic-isometry}.
\end{proof}

Finally, there is also a family of lattices in $\q^{2^n}$ that are isometric to a scaled hyperbolic lattice at $\p  = 2\mathbb{Z}$.
\begin{lemma}[Norm and scale of localization of rational Barnes-Wall lattices]
\label{lem:rational-bw-lattice-localization}
Consider a lattice $L_{n}$ in $\q^{2^{n}}$ with basis $\frac{1}{2^{\lceil n/4 \rceil - 1} } B_{\q,n}$~(\cref{eq:rational-basis}).
For $n=3,5,7,9$, lattice $L_n$ is modular with $\s(L_n) = 2^{((n-1)/2)\,\mathrm{mod}\,2} \mathbb{Z}$
and norm $\n = 2\s$.
The lattice completion at $\p  = 2\mathbb{Z}$ isometric to a scaled hyperbolic lattice.
\end{lemma}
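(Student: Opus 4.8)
The plan is to follow the same template as the proofs of \cref{lem:bw-lattice-localization} and \cref{lem:real-bw-lattice-localization}: reduce the assertion to a comparison of local invariants (scale and norm), verify those by direct computation, establish modularity, and then invoke the classification of dyadic modular lattices in \cref{thm:quadratic-isometry}. Since the final claim only concerns the completion at $\p = 2\mathbb{Z}$, no control over the behaviour at other primes is needed.

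First I would compute, in Hecke, the scale $\s(L_n)$ and norm $\n(L_n)$ of the rational lattice $L_n$ with basis matrix $\frac{1}{2^{\lceil n/4\rceil - 1}} B_{\q,n}$ (\cref{eq:rational-basis}) for $n = 3,5,7,9$, verifying that $\s(L_n) = 2^{((n-1)/2)\bmod 2}\mathbb{Z}$, that $\n(L_n) = 2\,\s(L_n)$, and that $L_n$ is $\s(L_n)$-modular, i.e. $\s(L_n) L_n^{\#} = L_n$. These are finite, explicit checks. As a cross-check independent of the computer algebra system one can argue structurally: by \cref{eq:rational-basis} the basis $B_{\q,n}$ is the real/imaginary-part unfolding of $B_{\mathbb{C}}^{\otimes(n-1)}$, so up to the uniform rescaling $L_n$ is the $\mathbb{Z}$-lattice obtained from the complex Barnes--Wall lattice $\mathcal{L}_{n-1}$ of \cref{lem:bw-lattice-localization} by restriction of scalars from $\mathbb{Z}[i]$ to $\mathbb{Z}$ with the trace form; modularity of $\mathcal{L}_{n-1}$ together with the fact that $\mathbb{Z}[i]$ is (up to the ramified prime above $2$) self-dual over $\mathbb{Z}$ then yields modularity of $L_n$, and the dyadic condition $\n = 2\s$ comes out of $\mathrm{T}(\mathbb{Z}[i])$ being the ramified trace ideal above $2$.

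Next, since $n \ge 3$, \cref{lem:dyadic-hyperbolic-spaces} gives that $\q_2^{2^n}$ is a hyperbolic space, hence it admits scaled hyperbolic lattices (\cref{def:hyperbolic-lattice}) of scale $\alpha = 1$ and of scale $\alpha$ generating $2\mathbb{Z}_2$. By \cref{lem:real-hyperbolic-norm-and-scale} applied with $K = \q$ and $\p_K = 2\mathbb{Z}$, such a lattice $H_\alpha$ has $\s(H_\alpha) = \alpha\,\mathcal{O}_{\p_K}$ and $\n(H_\alpha) = 2\alpha\,\mathcal{O}_{\p_K}$. Choosing $\alpha = 2^{((n-1)/2)\bmod 2}$, the localization $(L_n)_{\p}$ (which is $\s(L_n)_{\p}$-modular over $\mathbb{Z}_2$ since $L_n$ is $\s(L_n)$-modular over $\mathbb{Z}$) and $H_\alpha$ are both modular lattices in $\q_2^{2^n}$ with the same scale and the same norm, and $\n = 2\s$. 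Hence \cref{thm:quadratic-isometry} applies and produces an isometry $(L_n)_{\p} \cong H_\alpha$, which is exactly the assertion.

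I expect the only genuine obstacle to be the verification that $\n(L_n) = 2\,\s(L_n)$, i.e. that $L_n$ lies in the ``$\n = 2\s$'' regime in which \cref{thm:quadratic-isometry} has no remaining local obstruction; in the structural argument this is precisely the ramification of $2$ in $\q(i)/\q$ entering through the trace form, and in the computational argument it is a direct output of Hecke. Everything else --- finiteness of the checks, hyperbolicity of $\q_2^{2^n}$ from \cref{lem:dyadic-hyperbolic-spaces}, and matching $\alpha$ to the computed scale --- is routine, and as with the preceding lemmas one may remark that the statement extends to other $n$ by evaluating the scale and norm analytically via \cref{eq:norm-and-scale}.
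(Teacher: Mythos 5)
Your proposal is correct and follows essentially the same route as the paper: the paper's proof is simply ``the result follows from a direct calculation,'' which (by analogy with the proofs of \cref{lem:bw-lattice-localization,lem:real-bw-lattice-localization}) means exactly the computation of scale, norm, and modularity followed by an appeal to \cref{lem:dyadic-hyperbolic-spaces}, \cref{lem:real-hyperbolic-norm-and-scale} and \cref{thm:quadratic-isometry} that you spell out. Your additional structural cross-check via restriction of scalars from $\mathbb{Z}[i]$ is a nice sanity check but does not change the argument.
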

\begin{proof}
The result follows from a direct calculation.    
\end{proof}

Similarly to \cref{lem:bw-lattice-localization}, results \cref{lem:real-bw-lattice-localization,lem:rational-bw-lattice-localization}
can be extended to other values of $m,n$ using analytical tools.

\section{Best-first search efficiency}
\label{sec:provable}

The goal of this section is to prove~\cref{thm:best-first-search} and illustrate the use of computational and mathematical tools introduced in \cref{sec:advanced-preliminaries}.
We envision that these tools will find further applications in circuit synthesis and optimization which we discuss in concluding remarks~\cref{sec:conclusion}.

The crucial property that we need to establish for the proof of~\cref{thm:best-first-search} is the following.
For the basis matrix $B$, the $\xi$-ring $\R$, the set of unitaries $G$ specified in~\cref{thm:best-first-search} (and in \cref{tab:best-first-search}) 
the following is true:
\begin{equation}
\label{eq:lmaj-reduction}
\text{For any } U \text{ such that } \nuv_B(U) \ne 0 \text{ there exists }g \in G \text{ such that } \nuv_B(U g^\dagger) \lmaj \nuv_B(U).
\end{equation}
This ensures that \cref{alg:best-first-search} terminates and has required complexity.

We proceed in two high-level steps.
We first rephrase \cref{eq:lmaj-reduction} in terms of Hermitian lattices
and then show that proving that \cref{eq:lmaj-reduction} is true reduces to checking another related property for a finite set of Hermitian lattices. 

To proceed we require some additional definitions.
For $\xi$-ring $\R$ with ring of integers $\Oe$ and $2N \times 2N$ matrix $B$ 
consider Hermitian lattice $L = B \Oe^{2N}$ in $E^{2N}$. 
Additionally, for unitaries $U_1,U_2$ with entries in $\R$, consider Hermitian lattices $L_1 = U_1 L$ and $L_2 = U_2 L$. 
For any such lattices $L_1,L_2$ define map 
$$
\bv(L_1,L_2) = \nuv_B(U_2^\dagger U_1)
$$
By constructions, lattices $L_1,L_2$ are isometric and $\p$-isometric~(\cref{def:p-isometric}) to $L$. 
We will show that the map $\bv$ can be extended to all lattices $\p$-isometric to $L$ for a range of lattices $L$ relevant to quantum circuit synthesis.
We start with formal definition of $\bv$.

\begin{definition}[Coordinate map]
\label{def:coordinate-map}
Let $L$ be a Hermitian lattice in $E^{2N}$, we say that map $\bv$ from a set of pairs of lattices $\p$-isometric~(\cref{def:p-isometric}) to $L$ to sequences of non-increasing non-negative integers
is a \textbf{coordinate map} associated with $L$ if it has the following properties:
\begin{itemize}
    \item for any $L_1,L_2$ $\p$-isometric to $L$  $\bv(L_1, L_2) = 0$ implies $L_1 = L_2$
    \item for any $L_1,L_2$ $\p$-isometric to $L$  $\bv(L_1, L_2) = \bv(L_2, L_1)$
    \item for any $L_1$ $\p$-isometric to $L$ the set $|\{ L' : L'~\p\text{-isometric to }L , |\bv(L',L_1)|_1 =1  \}| < \infty$ 
    \item for any $L_1,L_2$ $\p$-isometric to $L$, for any unitary $U$: $\bv(U L_1,U L_2) = \bv(L_1, L_2)$
    \item for any $L_1,L_2,L_3$ $\p$-isometric to $L$: $\bv(L_1,L_2) \wmaj  \bv(L_1,L_3) +  \bv(L_3,L_2)$
\end{itemize}
\end{definition}

Or goal is to show the following property for a range of Hermitian lattices $L$ and sets of unitaries $G$ that is more general than \cref{eq:lmaj-reduction}
\begin{equation}
\label{eq:lmaj-lattice-reduction}
\text{For any}~L_1~\p-\text{isometric to }L, L \ne L_1 \text{ there is }g\in G : \bv(L,g L_1) \lmaj \bv(L,L_1)
\end{equation}
\cref{eq:lmaj-reduction} follows from the above property by considering $L_1 = U^\dagger L$.
\cref{eq:lmaj-lattice-reduction} is still related to infinitely many lattices $\p$-isometric to $L$.
The following property will help up to reduce the question to a finitely many lattices $\p$-isometric to $L$
and justifies calling $\bv$ a coordinate map:
\begin{definition}[Intermediate lattice property]
\label{def:intermediate-lattice}
Let $L$ be Hermitian lattice in $E^{2N}$ and $\bv$ a coordinate map~(\cref{def:coordinate-map}) defined on lattices $\p$-isometric to $L$,
we say that the pair $L,\bv$ has an \textbf{intermediate lattice property} if for any 
lattices $L_1,L_2$ $\p$-isometric to $L$ and non-negative vectors $x,y$ such that $x + y = \bv(L_1,L_2)$ there exist 
lattice $L_3$ $\p$-isometric to $L$ such that $\bv(L_1,L_3) = x^\downarrow$, $\bv(L_2,L_3) = y^\downarrow$,
where $x^\downarrow, y^\downarrow$ are vectors $x,y$ sorted in non-increasing order.
\end{definition}

\begin{table}[ht]
    \centering
    \begin{tabular}{|l|c|c|c|c|c|}
    \hline 
    Basis \textbf{$B$} & $2N$ & Field $E$ & Ideal $\mathfrak{p}$ & $m$ & $n$\tabularnewline
    \hline 
    \hline 
    $B_{\mathbb{C}}^{\otimes n}$ (\cref{eq:basis-changes}) & $2^{n}$ & $\mathbb{Q}(\zeta_{2^{m}})$ & $(1+\zeta_{2^{m}})\mathcal{O}_{E}$ & $2$ & $2,\ldots,5$\tabularnewline
    \cline{5-6} \cline{6-6} 
    &  &  &  & $3,4,5$ & $1,\ldots,5$\tabularnewline
    \hline 
    $B_{\mathbb{R}}^{\otimes n}$  (\cref{eq:basis-changes}) & $2^{n}$ & $\mathbb{Q}(\zeta_{2^{m}})\cap\mathbb{R}$ & $(2+2\cos(\frac{2\pi}{2^{m}}))\mathcal{O}_{E}$ & $3,\ldots,6$ & $2,\ldots,5$\tabularnewline
    \hline 
    $B_{\mathbb{Q},n}$ (\cref{eq:rational-basis})  & $2^{n}$ & $\mathbb{Q}$ & $2\mathbb{Z}$ & $-$ & $3,5,7,9$\tabularnewline
    \hline 
    \end{tabular}
    \caption{Hermitian lattices with basis $B$ in $E^{2N}$ that admit a coordinate map~(\cref{def:coordinate-map}) with an intermediate lattice property~(\cref{def:intermediate-lattice})
    for the values of $m$, $n$ listed in the last two columns. 
    See \cref{eq:coord-map-via-valuation} for the definition of the coordinate map for lattices with basis $B$.}
    \label{tab:lattices-with-intermediate-property}
\end{table}

We call lattices $L_1,L_2$ such that $l_1$-norm $|\bv(L_1,L_2)|_1 = 1$ \textbf{neighbours}.
Intermediate lattice property implies that all lattices $\p$-isometric to $L$ form a connected graph
with any two lattices $L_1,L_2$ connected by a path of length $|\bv(L_1,L_2)|_1$.
The following theorem summarises which lattices admit a coordinate map with the intermediate lattice property.

\begin{theorem}[Lattices with an intermediate property]
\label{thm:lattices-with-intermediate-property}
Consider Hermitian lattice $L$ with basis $B$ in $E^{2N}$ and the prime $\Oe$-ideal $\p$ with norm two, $B,E,\p$ from \cref{tab:lattices-with-intermediate-property}.
Define map $\bv$ on pairs of Hermitian lattices $\p$-isometric to $L$ as following. 
For $L_1, L_2$  $\p$-isometric to $L$ and the invariant factors $\mathfrak{r}_1,\ldots, \mathfrak{r}_{2N}$ of $L_1$ in $L_2$, define:
\begin{equation}
\label{eq:coord-map-via-valuation}
\bv(L_1,L_2) = (-v_\p(\mathfrak{r}_1), \ldots, -v_\p(\mathfrak{r}_N) ).
\end{equation}
Then, the map $\bv(L_1,L_2)$ is a coordinate map~(\cref{def:coordinate-map}) and pair $L,\bv$ has an intermediate lattice property~(\cref{def:intermediate-lattice}).
When $L_1$ and $L_2$ are isometric to $L$ with isometries $U_1,U_2$, $\bv(L_1,L_2) = \nuv_B(U_2^\dagger U_1)$~(see \cref{thm:unitary-invariant-factors} for $\nuv_B$).
\end{theorem}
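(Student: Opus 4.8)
The plan is to pass to the completion at $\p$, where $L_{\p}=B\,\Op^{2N}$ is a scaled hyperbolic lattice (\cref{def:hyperbolic-lattice}), and to deduce every assertion from the structure of scaled hyperbolic lattices over the dyadic local field $E_{\p}$.

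First I would record that, for each triple $(B,E,\p)$ and each admissible pair $(m,n)$ of \cref{tab:lattices-with-intermediate-property}, the localization $L_{\p}$ is a scaled hyperbolic lattice of scale $\p^{i}$ with $i\in\{0,1\}$; this is exactly \cref{lem:bw-lattice-localization}, \cref{lem:real-bw-lattice-localization} and \cref{lem:rational-bw-lattice-localization}, via the hyperbolic spaces of \cref{lem:dyadic-hyperbolic-spaces}. Hence any lattice $L'$ that is $\p$-isometric to $L$ has $L'_{\p}$ scaled hyperbolic of scale $\p^{i}$ (so $\p^{i}$-modular), while $L'_{\mathfrak q}=L_{\mathfrak q}$ for all primes $\mathfrak q\neq\p$. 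For $L_{1},L_{2}$ $\p$-isometric to $L$, the invariant factors $\mathfrak r_{1},\dots,\mathfrak r_{2N}$ of $L_{1}$ in $L_{2}$ (\cref{thm:invariant-factors}, ordered $\mathfrak r_{2N}\subseteq\dots\subseteq\mathfrak r_{1}$) are supported at $\p$ by \cref{prop:invariant-factor-loclization}, say $\mathfrak r_{j}=\p^{c_{j}}$ with $c_{1}\le\dots\le c_{2N}$; dualizing a pseudo-basis that aligns $L_{1}$ with $L_{2}$ and using $(L_{a})_{\p}^{\#}=\p^{-i}(L_{a})_{\p}$ gives $\{\mathfrak r_{j}\}=\{\mathfrak r_{j}^{-1}\}$ as multisets, which together with the ordering forces $c_{j}=-c_{2N+1-j}$ and hence $c_{N}\le 0$. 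Thus $\bv(L_{1},L_{2})=(-c_{1},\dots,-c_{N})$ is a well-defined non-increasing sequence of non-negative integers.

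Next I would check the five coordinate-map axioms (\cref{def:coordinate-map}). Unitary invariance is immediate because \cref{thm:invariant-factors} is purely module-theoretic: a unitary $U$ sends a pseudo-basis aligning $(L_{1},L_{2})$ to one aligning $(UL_{1},UL_{2})$ with the same invariant factors. Symmetry holds since the invariant factors of $L_{2}$ in $L_{1}$ are $\{\mathfrak r_{2N+1-j}^{-1}\}$, of $\p$-valuation $-c_{2N+1-j}=c_{j}$. If $\bv(L_{1},L_{2})=0$ then $c_{1}=\dots=c_{N}=0$, so by the symmetry all $c_{j}=0$, the index ideal $[L_{1}:L_{2}]$ equals $\Oe$, and $L_{1}=L_{2}$. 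The weak-majorization property follows from \cref{lem:invariant-factors-products} applied over the complete discrete valuation ring $\Op$ to basis matrices of the three localizations, using that $\sum_{j\le k}\bv(L_{a},L_{b})_{j}=-v_{\p}\!\bigl(\prod_{j\le k}\mathfrak r^{(ab)}_{j}\bigr)$ for $k\le N$: inequality \cref{eq:p-adic-inequality} (the shift incurred by clearing denominators to integral matrices is common to all three terms) yields $\bv(L_{1},L_{2})\wmaj\bv(L_{1},L_{3})+\bv(L_{3},L_{2})$. For local finiteness, $|\bv(L',L_{1})|_{1}=1$ forces, by the symmetry, the invariant factors of $L'$ in $L_{1}$ to be $\p^{-1}$ and $\p$ once each and $\Oe$ otherwise, so $M:=L'\cap L_{1}$ is a maximal sublattice of $L_{1}$ with $\p L_{1}\subseteq M\subset L_{1}$ and $L'$ is a minimal superlattice of $M$ with $M\subset L'\subseteq\p^{-1}M$; there are finitely many such $M$ (\cref{prob:maximal-sublattice}) and finitely many such $L'$ for each $M$ (\cref{prob:minimal-superlattice}), so finitely many $L'$. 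The identity $\bv(L_{1},L_{2})=\nuv_{B}(U_{2}^{\dagger}U_{1})$ when $L_{i}=U_{i}L$ then follows by reducing, via unitary invariance, to $\bv\bigl((U_{2}^{\dagger}U_{1})L,\,L\bigr)$ and applying \cref{thm:unitary-invariant-factors} to the unitary $U_{2}^{\dagger}U_{1}$ in the basis $B$ (whose Gram matrix $B^{\dagger}B$ is $\xi$-modular, \cref{def:xi-modular}) together with \cref{prop:invariant-factors-and-snf}.

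The remaining task, the intermediate lattice property, is the main obstacle. Its crux is the lemma that two scaled hyperbolic lattices of the same scale $\p^{i}$ in $E_{\p}^{2N}$ can be simultaneously diagonalized in a common hyperbolic basis --- the local Cartan decomposition of the unitary group of $E_{\p}^{2N}$ relative to the stabilizer of one of them; what makes this usable in the present dyadic setting is precisely that $L_{\p}$ is scaled hyperbolic, so this is exactly where \cref{lem:bw-lattice-localization}, \cref{lem:real-bw-lattice-localization}, \cref{lem:rational-bw-lattice-localization} and the restricted ranges of $m,n$ in \cref{tab:lattices-with-intermediate-property} enter. Granting the lemma, choose a hyperbolic basis $\{e_{k},f_{k}\}_{k=1}^{N}$ with $(L_{2})_{\p}=\bigoplus_{k}(\Op e_{k}\oplus\p^{i}\Op f_{k})$ and, using $\p^{i}$-modularity (which forces the two exponents in each hyperbolic pair to sum to $i$) together with the invariant-factor computation above, $(L_{1})_{\p}=\bigoplus_{k}(\p^{-z_{k}}\Op e_{k}\oplus\p^{\,i+z_{k}}\Op f_{k})$ where $z=\bv(L_{1},L_{2})$. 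Given $x,y\ge 0$ with $x+y=z$, set $\mathfrak L_{3}:=\bigoplus_{k}(\p^{-y_{k}}\Op e_{k}\oplus\p^{\,i+y_{k}}\Op f_{k})$ using the unsorted $y$; its paired exponents still sum to $i$, and being diagonal in a hyperbolic basis it has the same scale and norm as $L_{\p}$, hence is isometric to $L_{\p}$ by \cref{thm:hermitian-isometry} in the CM case and by \cref{thm:quadratic-isometry} in the totally real and rational cases (where the extra hypothesis $\n=2\s$ holds for hyperbolic lattices, \cref{lem:real-hyperbolic-norm-and-scale}). By \cref{thm:local-global-lattice} there is a Hermitian lattice $L_{3}$ in $E^{2N}$ with $(L_{3})_{\p}=\mathfrak L_{3}$ and $(L_{3})_{\mathfrak q}=(L_{1})_{\mathfrak q}$ for $\mathfrak q\neq\p$, so $L_{3}$ is $\p$-isometric to $L$. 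Finally, expressing $L_{1}$, respectively $L_{2}$, through the pseudo-basis of $\mathfrak L_{3}$ produces invariant-factor multisets $\{\p^{\pm x_{k}}\}$, respectively $\{\p^{\pm y_{k}}\}$, whence $\bv(L_{1},L_{3})=x^{\downarrow}$ and $\bv(L_{2},L_{3})=y^{\downarrow}$ (the sorting being the reordering of invariant factors by $v_{\p}$), which completes the argument.
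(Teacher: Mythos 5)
Your overall architecture matches the paper's: localize at $\p$, use that $L_\p$ is (up to a scalar) a scaled hyperbolic lattice by \cref{lem:bw-lattice-localization,lem:real-bw-lattice-localization,lem:rational-bw-lattice-localization}, verify the coordinate-map axioms through invariant factors and \cref{lem:invariant-factors-products}, build the intermediate lattice locally in a common hyperbolic basis, and glue back globally with \cref{thm:local-global-lattice}. The axiom checks are sound, and your derivation of the symmetry $c_j=-c_{2N+1-j}$ directly from $\p^i$-modularity of the two localizations (closure of the invariant-factor multiset under inversion via the dual lattice) is a genuinely different and arguably cleaner route than the paper's, which obtains this structure only as a byproduct of the diagonalization in \cref{lem:common-hyperbolic-basis}. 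Your finiteness argument via maximal sublattices and minimal superlattices is exactly the content of the correctness proof of \cref{alg:neighbours} that the paper invokes.

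The gap is the step you explicitly label ``Granting the lemma'': the existence of a common hyperbolic basis simultaneously diagonalizing $(L_1)_\p$ and $(L_2)_\p$. This is precisely \cref{lem:local-diag} and \cref{lem:common-hyperbolic-basis}, and it is the technical heart of the entire argument --- everything downstream (the $\pm$-paired exponent description of $(L_1)_\p$ relative to $(L_2)_\p$, the construction of $\mathfrak L_3$, and hence the intermediate lattice property itself) depends on it. Naming it as the Cartan decomposition of the unitary group of $E_\p^{2N}$ relative to the stabilizer of a scaled hyperbolic lattice correctly identifies what is needed, but it is not something you may simply assume in this dyadic Hermitian setting: the paper proves it from scratch by an explicit elimination procedure that carefully selects a pivot of minimal $\p$-adic valuation among the four blocks $A,B,C,D$, verifies that the resulting row and column operations (in particular the matrix $C_1$ with $C_1^\dagger=-u^\ast C_1$) preserve $L_\p$, and then inducts. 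Without supplying this argument (or a precise citation establishing it for Hermitian lattices over ramified dyadic completions of CM fields and the quadratic cases over totally real fields), the proof of the intermediate lattice property is incomplete; the rest of your write-up is correct modulo this one missing lemma.
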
 

We provide the proof of \cref{thm:lattices-with-intermediate-property} in the end of \cref{sec:lattices-with-intermediate-property}.
There are efficient algorithms for computing the invariant factors of $L_1$ in $L_2$ and, consequently, coordinate map $\bv(L_1,L_2)$ defined in the theorem.
In particular, when $\Oe$ is a principal ideal domain, lattices $L_1,L_2$ have bases $B_1, B_2$
and the invariant factors can be found by using a Smith Normal Form algorithm on $B_1^{-1} B_2$.

It is easier to establish~\cref{thm:best-first-search} when genus class number of $L = B \Oe^{2N}$ is one.
Interestingly, in this case the problem graph introduced in \cref{sec:problem-graph} with appropriately chosen set of generators is isomorphic to the graph of $\p$-isometric lattices.
This is the case for lattices in first two rows of \cref{tab:best-first-search}. 
In~\cref{tab:genus-class-number} we list the lattices and their genus class number that we have computed using Hecke~\cite{Nemo}.
The other two lattices in~\cref{thm:best-first-search} have genus class number three 
and will require a more sophisticated proof technique. 

\begin{proof}[Proof of \cref{thm:best-first-search}, genus class one case]
We first check that the following two sets are equal 
\begin{equation}
\label{eq:gates-and-neighbours}
    \{ g L  : g \in G \} = \{ L' : |\bv(L',L)|=1 : L'~\p-\text{isomertric to }L \}.
\end{equation}
for $L = B \Oe^{2N}$ and set $G$ described in the theorem and corresponding to the \textbf{ first two rows} of \cref{tab:best-first-search}.
We provide algorithm~\cref{alg:neighbours} to compute the set in the right-hand size in \cref{sec:lattice-graph}. 
We have used Magma~\cite{Magma} and Hecke~\cite{Nemo} to check that \cref{eq:gates-and-neighbours} 
holds for the first two rows in \cref{tab:best-first-search}. 

Consider now $L_1 = V^\dagger L$ for $V$ in \cref{line:lmaj-reduction} in \cref{alg:best-first-search}, 
and let $z = \bv(L, L_1) = \nuv_B(V) = \nuv_B(V^\dagger)$. 
Let $j$ be the index of the last non-zero coordinate of $z$ and let us apply the intermediate lattice property for $z = x  + y$
where coordinates $x_i = \delta_{i,j}$.
Our choice of $x$ ensures that $y$ is non-increasing non-negative integer vector.
There is  lattice $L_2$ such that $\bv(L_2, L_1) = y \lmaj z$ and $|\bv(L,L_2)|_1 = 1$.
Using \cref{eq:gates-and-neighbours} we write $L_2 = g^\dagger L$ for some $g$ from $G$,
and we have 
$$
\nuv_B(V g^\dagger) =  \bv(L, g V^\dagger L) = \bv(g^\dagger L, V^\dagger L) = \bv(L_2, L_1) = x \lmaj \bv(L, L_1) = \nuv_B(V).
$$
Every step of \cref{alg:best-first-search} reduces the sum 
$$
\sum_{k=1}^N \sum_{j=1}^k (\nuv_B(V))_j
$$
at least by one, and this sum is upper-bounded by a constant times $\max \nuv_B(U)$ in the beginning of the algorithm execution.
This show that the algorithm terminates in the number of steps linear in $\max \nuv_B(U)$.
\end{proof}

The gate set normalization~(\cref{sec:normalized-gate-sets-algo}) is closely related to \cref{eq:gates-and-neighbours}.
The normalized gate sets~(\cref{def:normalized-gate-set}) $G$ have a property that lattices $gL$ for $g$ from $G$ are all distinct. 
To handle genus class number greater than one we require an additional definition.

% One important consequence of the intermediate lattice property is that for any lattice $L_1$ $\p$-isometric to $L$
% there exist a lattice $L_2$ such that 
% $$
 
% $$
% such that $\bv(L_0,L_M) = M$ there are $M-1$ lattices $L_1,\ldots,L_{M-1}$ $\p$-isometric to $L_0$ such that 
% $l_1$ norm $|\bv(L_j,L_{j+1})|_1 = 1$ for $j=0,\ldots,M$. 
% This sequence of lattice has a simple interpretation when all $\p$-isometric lattices are also isometric, that is genus class number is one. 

% this leads to a simplified synthesis algorithm which aims to reduce $l_1$-norm at every step.
% That is \cref{line:lmaj-reduction} in \cref{alg:best-first-search} can use a simplified condition 
% $$
%  |\nuv_B(U g^\dagger )|_1 < |\nuv_B(U )|_1.
% $$
% To check that the algorithm succeeds, it is sufficient to check that the following sets are equal

% We discuss how to compute the set in the right-hand size in \cref{sec:lattice-graph}. 

% We have used Magma~\cite{Magma} and Hecke~\cite{Nemo} to check that \cref{eq:gates-and-neighbours} 
% holds for these lattices.

\begin{table}[ht]
    \centering
\begin{tabular}{|c|c|c|c|}
\hline 
Basis $B$ & Field $E$ & Genus class number & Number of neighbours\tabularnewline
\hline 
\hline 
$B_{\mathbb{C}}$ & $\mathbb{Q}(\zeta_{8})$ & $1$ & $3$\tabularnewline
\hline 
$B_{(3)}$ & $\mathbb{Q}(\zeta_{3})$ & $1$ & $12^\ast$ \tabularnewline
\hline 
$B_{\mathbb{C}}^{\otimes2}$ & $\mathbb{Q}(i)$ & $1$ & $15$ \tabularnewline
\hline 
$B_{\mathbb{R}}^{\otimes2}$ & $\mathbb{Q}(\sqrt{2})$ & $1$ & $9$ \tabularnewline
\hline 
$B_{\mathbb{Q},3}$ & $\mathbb{Q}$ & $1$ & $135$ \tabularnewline
\hline 
\hline 
$B_{\mathbb{C}}$ & $\mathbb{Q}(\zeta_{16})$ & $2$ & $3$ \tabularnewline
\hline 
$B_{(3)}$ & $\mathbb{Q}(\zeta_{9})$ & $3$ & $20^\ast$ \tabularnewline
\hline 
$B_{\mathbb{C}}^{\otimes2}$ & $\mathbb{Q}(\zeta_{8})$ & $3$ & $15$ \tabularnewline
\hline 
$B_{\mathbb{C}}^{\otimes3}$ & $\mathbb{Q}(i)$ & $3$ & $270$ \tabularnewline
\hline 
$B_{\mathbb{R}}^{\otimes3}$ & $\mathbb{Q}(\sqrt{2})$ & $6$ & $135$ \tabularnewline
\hline 
$B_{\mathbb{C}}$ & $\mathbb{Q}(\zeta_{32})$ & $58$ & 3 \tabularnewline
\hline 
$B_{\mathbb{C}}^{\otimes 2}$ & $\mathbb{Q}(\zeta_{16})$ & $> 174$ & $15$\tabularnewline
\hline 
$B_{\mathbb{C}}^{\otimes3}$ & $\mathbb{Q}(\zeta_8)$ & $> 287$ & $270$\tabularnewline
\hline 
\end{tabular}
    \caption{Genus class number of hermitian lattices $L = B \O_E^{N}$ with basis $B$ in $E^N$.
    Matrices $B_{\mathbb{C}}$, $B_{\mathbb{R}}$, $B_{\mathbb{Q},n}$, $B_{(3)}$ are defined in \cref{eq:basis-changes}, \cref{eq:rational-basis}.  }
    \label{tab:genus-class-number}
\end{table}

\begin{definition}[Reduction property]
\label{def:reduction}
Let $L$ be Hermitian lattice in $E^{2N}$, let $\bv$ be a coordinate map~(\cref{def:coordinate-map}) associated with $L$, let $x$ be non-negative non-increasing integer vector and let $G$ be a set of unitaries over $\xi$-ring $\R \subset E$.
We say that tuple $(L,\bv,x, G)$ has a \textbf{reduction property} if for any lattice $L_1$ $\p$-isometric to $L$ with $\bv(L,L_1)  = x$ 
there exists $U$ from $G$ such that $\bv(L, U L_1) \lmaj \bv(L, L_1)$.
We say that the tuple $(L,\bv,x, G)$ has a \textbf{weak reduction property} if the condition holds only for lattices $L_1$ isometric to $L$.
\end{definition}
Note that if there are no lattices $L_1$ isometric to $L$ such that $\bv(L,L_1)  = x$, then the weak reduction property holds trivially.
The reduction property is crucial for reducing \cref{eq:lmaj-lattice-reduction} to checking a finite number of cases for which the reduction or weak reduction property holds.
First, the number of lattices $L_1$ such that $\bv(L,L_1)  = x$ is finite because the intermediate lattice property and the fact that the number of neighbours of each lattice is finite.
Second, the following lemma shows that reduction property extends from one vector $x$ to infinitely many related vectors: 

\begin{lemma}[Coordinate reduction]
\label{lem:reduction}
\textbf{Consider} a Hermitian lattice in $E^{2N}$, 
a coordinate map~(\cref{def:coordinate-map}) $\bv$ associated with $L$, non-negative non-decreasing integer vector $x$
and a set $G$ of unitaries over $\xi$-ring $\R \subset E$ such that
\begin{itemize}
    \item $L,\bv$ have intermediate lattice property~(\cref{def:intermediate-lattice}),
    \item $(L,\bv,G,x)$ has the reduction property~(\cref{def:reduction}),
\end{itemize}
\textbf{then} for any non-negative non-decreasing $y$ tuple $(L,\bv,G, x+y)$ has the reduction property.
\end{lemma}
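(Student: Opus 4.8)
The plan is to reduce the statement to one application of the intermediate lattice property together with \cref{prop:maj-relations}. Throughout, $x$ and $y$ are read in the same convention as values of the coordinate map, i.e.\ sorted in non-increasing order, so $x^{\downarrow}=x$, $y^{\downarrow}=y$, and the componentwise sum $x+y$ is again non-increasing and hence a legitimate value of $\bv$. Fix a lattice $L_1$ that is $\p$-isometric to $L$ with $\bv(L,L_1)=x+y$; we must exhibit $U\in G$ with $\bv(L,UL_1)\lmaj\bv(L,L_1)$.

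First I would apply the intermediate lattice property (\cref{def:intermediate-lattice}) to the pair $(L,L_1)$ and the decomposition $x+y=\bv(L,L_1)$, obtaining a lattice $M$, $\p$-isometric to $L$, with $\bv(L,M)=x$ and $\bv(L_1,M)=y$. Since $\bv(L,M)=x$, the reduction property of $(L,\bv,G,x)$ (\cref{def:reduction}) yields $U\in G$ with $\bv(L,UM)\lmaj\bv(L,M)=x$. As observed just after \cref{def:p-isometric}, because $U$ has entries in the $\xi$-ring $\R$ with $\xi\Oe=\p$ it only affects the localization at $\p$, so $UM$ and $UL_1$ are again $\p$-isometric to $L$; hence $\bv$ is defined on all pairs used below.

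Next I would bound $\bv(L,UL_1)$ using the coordinate-map axioms (\cref{def:coordinate-map}): the majorization axiom gives $\bv(L,UL_1)\wmaj\bv(L,UM)+\bv(UM,UL_1)$, unitary invariance gives $\bv(UM,UL_1)=\bv(M,L_1)$, and symmetry gives $\bv(M,L_1)=\bv(L_1,M)=y$, so that
\[
\bv(L,UL_1)\wmaj\bv(L,UM)+y .
\]
Finally I would invoke \cref{prop:maj-relations} with its four non-increasing non-negative vectors taken to be $\bv(L,UL_1)$, $x+y$, $\bv(L,UM)$, and $x$, in the roles of $x$, $y$, $u$, $v$ respectively: the hypothesis $u\lmaj v$ is exactly $\bv(L,UM)\lmaj x$, and the hypothesis $x\wmaj y+(u-v)$ unwinds to $\bv(L,UL_1)\wmaj\bv(L,UM)+y$, both of which we have just established. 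The conclusion $x\lmaj y$ is $\bv(L,UL_1)\lmaj x+y=\bv(L,L_1)$, which is precisely the reduction property for the tuple $(L,\bv,G,x+y)$.

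The computation is light; the only points requiring care are (i) checking that each lattice occurring ($M$, $UM$, $UL_1$) is $\p$-isometric to $L$, so that the coordinate map and its axioms apply, and (ii) checking that the four vectors fed to \cref{prop:maj-relations} are all non-increasing and non-negative. The conceptual heart of the argument is that a reduction available at level $x$ propagates to every level $x+y$: the intermediate lattice property peels off an intermediate lattice $M$ realizing the coordinate $x$, and \cref{prop:maj-relations} transports the strictness of the one-step reduction $\bv(L,UM)\lmaj x$ through the triangle inequality to the desired strict majorization.
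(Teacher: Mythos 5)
Your proposal is correct and follows essentially the same route as the paper's proof: the intermediate lattice property produces the intermediate lattice (your $M$, the paper's $L_2$) realizing the coordinate $x$, the reduction property supplies $U$ with $\bv(L,UM)\lmaj x$, the coordinate-map axioms give $\bv(L,UL_1)\wmaj\bv(L,UM)+y$, and \cref{prop:maj-relations} transports the strict majorization to the level $x+y$. Your write-up is, if anything, slightly more explicit than the paper's about symmetry of $\bv$ and about why all intervening lattices remain $\p$-isometric to $L$.
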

\begin{proof}
Consider lattice $L_1$ $\p$-isometric to $L$ such that $\bv(L,L_1) = x + y$.
Using intermediate lattice property~(\cref{def:intermediate-lattice}) there exist lattice $L_2$ $\p$-isometric to $L$ such that $\bv(L,L_2) = x$ and $\bv(L_2,L_1) = y$.
Next, for any unitary $U'$ with entries in $\R$ we have 
\begin{align}
\label{eq:reduction-bound}
\bv(L , U' L_1 ) 
&\wmaj
\bv(L , U' L_2) + \bv(U' L_2, U' L_1) =
\\ 
& =
\bv(L , U' L'_2) + \bv(L_2, L_1) =
\\
& =
\bv(L , U' L'_2) - \bv(L, L_2) + \bv(L, L_2) + \bv(L_2, L_1)
\\
& = 
(\bv(L , U' L_2) - \bv(L, L_2)) + \bv(L, L_1)
\end{align}
Using reduction property~(\cref{def:reduction}), there exist a choice of $U'$ that we denote by $U$ so that 
$$
\bv(L , U L_2) \lmaj \bv(L, L_2) 
$$
For this choice of $U$, using \cref{prop:maj-relations}, we have 
$
\bv(L , U L_1) \lmaj \bv(L, L_1)
$,
which shows the required result.
\end{proof}

The reduction property and weak reduction property can be numerically checked by enumerating all the lattices $L_1$ 
such that $\bv(L,L_1) = x$. 
Such enumeration process can be compress by using automorphism group of $L$. 
Let $C$ be a unitary such that $C L = L$ and suppose that $\bv(L, U L_1) \lmaj \bv(L, L_1)$ for some $U$.
Using the properties of the coordinate map $\bv$ we have 
$$
\bv(L, (C_1 U C^\dagger) C L_1) = \bv(C_1 ^\dagger L, U L_1) = \bv( L, U L_1) \lmaj \bv(L, L_1) = \bv(C^\dagger L, L_1) = \bv( L,C L_1)
$$
When gate set $G$ is normalized with the respect to automorphism group of $L$, 
it is sufficient to check the reduction property for one element of the set
\begin{equation}
\label{eq:orbit}
    \{ C L_1 : C \text{ is an automorphism of } L \}.
\end{equation}
We call above set an \textbf{orbit} of $L_1$ under automorphism group of $L$. 
When discussing the reduction property we simply refer to these sets as orbits.
We have checked the reduction property for the range of values of $x$ for the 
lattices listed in \cref{tab:best-first-search}. 
The results of our computations are summarized in \cref{tab:ct-reduction}, \cref{tab:cty-reduction}.
We next show that first two rows of the tables suffice to establish reduction property for all but finitely many lattices $\p$-isometric to $L$. 
We need an additional definition to make arguments simpler.

\begin{table}[ht]
    \centering
\begin{tabular}{|c|c|c|c|c|}
\hline 
$x$ & $\partial x$ & Reduction & Weak reduction & Orbits\tabularnewline
 &  & property & property & per class\tabularnewline
\hline 
\hline 
$(2,0)$ & $(2,0)$ & $+$ & + & $[2,1,1]$\tabularnewline
\hline 
$(2,2)$ & $(0,2)$ & $+$ & $+$ & $[3,1,2]$\tabularnewline
\hline 
\hline 
$(2,1)$ & $(1,1)$ & $+$ & $+$ & $[0,2,1]$\tabularnewline
\hline 
$(1,1)$ & $(0,1)$ & $-$ & $+$ & $[1,0,1]$\tabularnewline
\hline 
$(1,0)$ & $(1,0)$ & $-$ & $+$ & $[0,1,0]$\tabularnewline
\hline 
\end{tabular}
    \caption{Lattices in $\q(\zeta_8)^4$ $\p$-isometric to $B_{\mathbb{C}}^{\otimes 2}$ with reduction property~(\cref{def:reduction}) with respect to normalized Clifford and $\text{T}$, $\text{T}\otimes \text{T}$, $\text{CT}$ gate set.
    Prime ideal $\p = (1+\zeta_8) \mathbb{Z}[\zeta_8]$.
    We also list the number of orbits~(\cref{eq:orbit}) of lattices with coordinate $x$ per isometry class. }
    \label{tab:ct-reduction}
\end{table}

\begin{table}[ht]
    \centering
\begin{tabular}{|c|c|c|c|c|}
\hline 
$x$ & $\partial x$ & Reduction & Weak reduction & Orbits\tabularnewline
 &  & property & property & per class\tabularnewline
\hline 
\hline 
$(2,0)$ & $(2,0)$ & $+$ & + & $[1,1,1]$\tabularnewline
\hline 
$(2,2)$ & $(0,2)$ & $+$ & $+$ & $[1,0,2]$\tabularnewline
\hline 
\hline 
$(2,1)$ & $(1,1)$ & $+$ & $+$ & $[0,1,1]$\tabularnewline
\hline 
$(1,1)$ & $(0,1)$ & $-$ & $+$ & $[1,0,1]$\tabularnewline
\hline 
$(1,0)$ & $(1,0)$ & $-$ & $+$ & $[0,1,0]$\tabularnewline
\hline 
\end{tabular}
    \caption{Lattices  in $E^4 = \q(\cos(\pi/8))^4$ $\p$-isometric to $B_{\mathbb{R}}^{\otimes 2}$ with reduction property with respect to normalized Clifford and $\text{T}_y$, $\text{CS}_y$, $\text{CT}_y$ gate set.
    Prime ideal $\p = (2+2\cos(\pi/8)) \Oe$.
    We also list the number of orbits~(\cref{eq:orbit}) of lattices with coordinate $x$ per isometry class.}
    \label{tab:cty-reduction}
\end{table}

For non-increasing non-negative $N$-dimensional vector $x$ define derivative vector
$$
\partial x = (x_1 - x_2, \ldots, x_{N-1} - x_N, x_N)
$$
The $\partial x$ is non-negative, moreover for any non-negative vector $y$ we have 
$$
y = \partial\left( \sum_{j = 1}^{N} y_j v_j \right), \text{ where } v_j = (\underbrace{1,\ldots,1}_{j},\underbrace{0,\ldots,0}_{N-j})
$$ 
Using above notation the intermediate lattice property implies that for any 
$L_1,L_2$ and any non-negative integer vectors $x,y$
such that 
$$
\partial \nuv(L_1,L_2) = x' + y'
$$
there exist a lattice $L_3$ such that 
$$
\partial \nuv(L_1,L_3) = x' ,~~\partial \nuv(L_2,L_3) = y'
$$

Informally, the following lemma shows that if the reduction property holds for  $(L,\bv,G,a_j v_j)$ for some $a_j$, 
than for all lattices $L_1$ except a finite set there exist a unitary $U$ such that $\bv(L,UL_1) \lmaj \bv(L,L_1)$.
\begin{lemma}[Box Lemma]
\label{lem:box-lemma}
Consider a Hermitian lattice in $E^{2N}$, 
a coordinate map~(\cref{def:coordinate-map}) $\bv$ associated with $L$ with values in $\mathbb{Z}^{N'}$ and a set $G$ of unitaries over $\xi$-ring $\R \subset E$ such that
\begin{itemize}
    \item $L,\bv$ have intermediate lattice property~(\cref{def:intermediate-lattice}),
    \item there exist positive integers $a_1,\ldots,a_{N'}$ such that $(L,\bv,G, a_j v_j)$ have reduction property~(\cref{def:reduction}), where $v_j = (\underbrace{1,\ldots,1}_{j},\underbrace{0,\ldots,0}_{N'-j})$
\end{itemize}
Then for any lattice $L_1$ $\p$-isometric to $L$ with 
$$
 \partial \bv(L, L_1) \notin [0,a_1) \times \ldots \times [0,a_{N'})
$$
there exist $U$ from $G$ such that $\bv(L, U L_1) \lmaj  \bv(L, L_1)$.
\end{lemma}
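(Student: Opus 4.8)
The plan is to start from the lattice $L_1$ in the statement, translate the ``outside the box'' hypothesis into a concrete splitting of the coordinate vector $x = \bv(L,L_1)$, realize that splitting geometrically via the intermediate lattice property, apply the assumed reduction property along the distinguished direction, and finally glue everything together using the subadditivity of $\bv$ and \cref{prop:maj-relations}.

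First I would unpack the hypothesis. The vector $x = \bv(L,L_1) \in \MZ^{N'}$ is non-increasing and non-negative, and $\partial x = (x_1-x_2,\dots,x_{N'-1}-x_{N'},x_{N'})$ is non-negative. Since $\partial x \notin [0,a_1)\times\dots\times[0,a_{N'})$, there is an index $j$ with $(\partial x)_j \ge a_j$, i.e.\ $x_j - x_{j+1} \ge a_j$ (reading $x_{N'+1}=0$). Put $x'' = x - a_j v_j$. I would check that $x''$ is again non-increasing and non-negative: the first $j$ coordinates of $x$ drop by $a_j$ and each of them is $\ge x_j \ge a_j$, monotonicity across position $j$ is precisely the inequality $x_j - x_{j+1} \ge a_j$, and nothing changes beyond position $j$. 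Thus $x = a_j v_j + x''$ with both summands non-increasing non-negative integer vectors that are already sorted.

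Next I would apply the intermediate lattice property (\cref{def:intermediate-lattice}) to the pair $(L,L_1)$ with the splitting $\bv(L,L_1) = a_j v_j + x''$: this yields a lattice $L_2$ $\p$-isometric to $L$ with $\bv(L,L_2) = a_j v_j$ and $\bv(L_1,L_2) = x''$ (the $\downarrow$-operations are vacuous here since both parts are already sorted). Since $(L,\bv,G,a_j v_j)$ has the reduction property (\cref{def:reduction}) and $\bv(L,L_2) = a_j v_j$, there is $U \in G$ with $\bv(L,UL_2) \lmaj a_j v_j$. The lattices $UL_1$, $UL_2$, $L_2$ are all $\p$-isometric to $L$ (as $U$ is a unitary over $\R$), so the coordinate-map axioms apply to them.

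Finally I would combine. Subadditivity of the coordinate map gives $\bv(L,UL_1) \wmaj \bv(L,UL_2) + \bv(UL_2,UL_1)$, and by unitary invariance and symmetry $\bv(UL_2,UL_1) = \bv(L_2,L_1) = \bv(L_1,L_2) = x''$, so $\bv(L,UL_1) \wmaj \bv(L,UL_2) + x''$; note the right-hand side $\bv(L,UL_2)+x''$ equals $x + (\bv(L,UL_2) - a_j v_j)$. Hence with $u = \bv(L,UL_2)$, $v = a_j v_j$ we have $u \lmaj v$ and $\bv(L,UL_1) \wmaj x + (u-v)$, and \cref{prop:maj-relations} gives $\bv(L,UL_1) \lmaj x = \bv(L,L_1)$, which is the assertion. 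I do not anticipate a deep obstacle: the argument only assembles the intermediate lattice property, the reduction property, and \cref{prop:maj-relations}. The one place needing genuine care is the combinatorics of $\partial$ — choosing the coordinate $j$ that violates the box and verifying that subtracting $a_j v_j$ preserves ``non-increasing and non-negative'' (this is exactly where $(\partial x)_j \ge a_j$ is used sharply), so that the splitting feeds into \cref{def:intermediate-lattice} with no reordering; plus the routine bookkeeping that the coordinate-map axioms still apply after acting by $U$.
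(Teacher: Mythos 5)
Your proof is correct and follows essentially the same route as the paper: identify an index $j$ with $(\partial x)_j \ge a_j$, write $\bv(L,L_1) = a_jv_j + x''$ with $x''$ non-increasing and non-negative, and then reduce. The only difference is that the paper at that point simply cites \cref{lem:reduction} (Coordinate reduction), whose proof is exactly the intermediate-lattice-plus-subadditivity-plus-\cref{prop:maj-relations} argument you spell out inline.
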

\begin{proof}
If the condition on $L_1$ holds it means that there is $j$ such that $\partial \bv(L, L_1)_j \ge a_j$ 
and therefore $\bv(L, L_1) = a_j v_j + y$ for some non-negative non-decreasing integer vector $y$. 
The result follows from~\cref{lem:reduction}.
\end{proof}
The set of non-negative vectors not in $ [0,a_1) \times \ldots \times [0,a_{N'})$ is finite and therefore the related set of lattices is also finite.

\begin{proof}[Proof of \cref{thm:best-first-search}, general case]
For the lattices $L$ related to the rows $3,4$ in \cref{tab:best-first-search} via $L = B \Oe^{2N}$,
the reductions property holds for vectors $x = (2,2), y = (2,0)$ with 
$\partial x = (0,2), \partial y = (2,0)$ based on the exhaustive computer search, summarized in \cref{tab:ct-reduction}, \cref{tab:cty-reduction}.
\cref{lem:box-lemma} implies that all the lattices $L_1$ with $\partial \bv(L,L_1) \notin [0,2) \times [0,2) $
have reductions property. 
The rest of the lattices have the weak reduction property as summarized in  \cref{tab:ct-reduction}, \cref{tab:cty-reduction}.
Weak reduction property hold for all lattices $\p$-isometric to $L$ and therefore \cref{eq:lmaj-reduction} holds.

As before, every step of \cref{alg:best-first-search} reduces the sum 
$$
\sum_{j=1}^N \sum_{k=1}^j (\nuv_B(V))_j
$$
at least by one, and this sum is upper bounded by a constant times $\max \nuv_B(U)$ in the beginning of the algorithm execution.
This show that the algorithm terminates in the number of steps linear in $\max \nuv_B(U)$.
\end{proof} 

\subsection{Lattices with the intermediate lattice property}
\label{sec:lattices-with-intermediate-property}

The goal of this subsection is to prove \cref{thm:lattices-with-intermediate-property}. 
We start with the key lemma.

\begin{lemma}[Local diagonalization]
\label{lem:local-diag}
Consider $L_\p$, a scaled hyperbolic Hermitian lattice in $E^{2N}_\p$
with hyperbolic basis $e_1,\ldots,e_N,f_1,\ldots,f_N$ and scale $\alpha$.
For any unitary $\lU$ with entries in $E_\p$ there exist unitaries $\lU_L$, $\lU_R$ that preserve $L_\p$
and unitary 
% $\lD$
\begin{equation}
\label{eq:local-diag}
\lD = \sum_{i = 1}^N a_i e_i f_i^\dagger + (a_i^\ast)^{-1} f_i e_i^\dagger, |v_\p(a_i)| \ge |v_\p(a_{i+1})|
\end{equation}
diagonal in the hyperbolic basis such that $\lU = \lU_L \lD \lU_R$.
\end{lemma}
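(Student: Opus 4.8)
The plan is to realize $\lD$ as a Smith-normal-form–type decomposition adapted to the hyperbolic structure, exploiting that the automorphism group of a scaled hyperbolic lattice acts transitively enough on isotropic sublattices. First I would reduce to the case $\alpha = 1$ (the hyperbolic lattice) by absorbing the scaling into $\lU_R$ or $\lU_L$: if $L_\p$ has basis $\{e_i, \alpha f_i\}$, conjugating by the diagonal unitary $\mathrm{diag}(1,\alpha/\alpha^\ast)$-type map reduces the claim to the unscaled hyperbolic lattice $M_\p$ with basis $\{e_i, f_i\}$, so it suffices to treat $M_\p = \sum_i \Op e_i + \Op f_i$. Then $M_\p = M_\p^\#$ is unimodular, and $\lU M_\p$ is another unimodular lattice in the same hyperbolic space.

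Next I would apply \cref{thm:invariant-factors} to the pair $M_\p$ and $\lU M_\p$: there is a basis $b_1,\dots,b_{2N}$ of $E_\p^{2N}$ and fractional $\Op$-ideals so that $M_\p$ and $\lU M_\p$ are $\sum \a_i b_i$ and $\sum \a_i \mathfrak r_i b_i$ with $\mathfrak r_{2N}\subseteq\cdots\subseteq\mathfrak r_1$. Because $\Op$ is a DVR with prime $\p$, each $\mathfrak r_i = \p^{c_i}$ for integers $c_1 \ge \cdots \ge c_{2N}$. The key structural input is that both lattices are unimodular (self-dual) with the \emph{same} Hermitian form; duality forces the invariant factor multiset to be symmetric, $c_i = -c_{2N+1-i}$, exactly as in the unitary case of \cref{thm:unitary-invariant-factors}. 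So the invariant factors pair up as $\p^{c_1},\dots,\p^{c_N},\p^{-c_N},\dots,\p^{-c_1}$ with $c_1 \ge \cdots \ge c_N \ge 0$; set $a_i$ to be $\xi^{c_i}$ (or any generator of $\p^{c_i}$), which gives the valuation-ordering condition $|v_\p(a_i)| \ge |v_\p(a_{i+1})|$ in \cref{eq:local-diag}.

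The remaining work is to upgrade this from "invariant factors agree" to the existence of the automorphisms $\lU_L,\lU_R$ preserving $L_\p$ with $\lU = \lU_L \lD \lU_R$. Here I would argue: the lattice $\lD M_\p$ (with $\lD$ as above) and $\lU M_\p$ have the same invariant factors relative to $M_\p$, hence are isometric \emph{as lattices}; but more is needed — they must be mapped to each other by an \emph{isometry of the ambient Hermitian space} that normalizes $M_\p$ on the appropriate side. The clean way is: $\lU^{-1}\lD$ maps $M_\p$ to a lattice with trivial invariant factors relative to... no — rather, choose a hyperbolic basis adapted to the filtration. Concretely, pick isotropic vectors spanning the "large" part: since $\lU M_\p$ and $\lD M_\p$ share invariant factors, and the hyperbolic space has Witt index $N$, Witt's extension theorem (in the $\p$-adic Hermitian setting) lets us find a unitary $\lU_L$ preserving $M_\p$ with $\lU_L(\lD M_\p) = \lU M_\p$; then $\lU_L^{-1}\lU \lD^{-1}$ preserves $M_\p$, call it $\lU_R^{-1}$, giving $\lU = \lU_L \lD \lU_R$. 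The main obstacle is exactly this last step: one must verify that the isometry matching the two sublattices can be taken to fix $M_\p$ setwise, which requires that the automorphism group of the unimodular hyperbolic lattice acts transitively on sublattices with a fixed invariant-factor type — this follows from the local structure theory of Hermitian lattices over dyadic fields (the relevant transitivity is essentially Witt cancellation applied to the graded pieces $\p^i M_\p / \p^{i+1} M_\p$), but spelling it out carefully, tracking the dyadic subtleties, is where the real effort lies. I would phrase this via an induction on $\sum_i |v_\p(a_i)|$: peel off one isotropic pair at a time, using that any primitive isotropic vector of $M_\p$ extends to a hyperbolic pair and that the orthogonal complement is again a unimodular hyperbolic lattice of lower rank, reducing to the rank-$(2N-2)$ case.
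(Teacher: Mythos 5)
Your overall strategy --- extract the elementary divisors of $\lU L_\p$ relative to $L_\p$ via \cref{thm:invariant-factors}, use self-duality to get the symmetric exponent pattern $(c_1,\ldots,c_N,-c_N,\ldots,-c_1)$, define $\lD$ from a generator of each $\p^{c_i}$, and then produce $\lU_L,\lU_R$ as automorphisms of $L_\p$ matching $\lU L_\p$ with $\lD L_\p$ --- is a genuinely different route from the paper, which instead performs an explicit block-matrix Gaussian elimination in the hyperbolic basis (choosing an entry of minimal valuation, clearing its row and column with explicitly constructed $L_\p$-preserving unitaries of the forms $\mathrm{diag}(A,A^{-\dagger})$ and $\bigl(\begin{smallmatrix} I & 0 \\ \alpha^\ast C & I\end{smallmatrix}\bigr)$ with $C^\dagger = -u^\ast C$, and inducting on the rank). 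The invariant-factor symmetry and the identification of the diagonal entries in your first two steps are correct (modulo replacing ``unimodular'' by ``$\p$-modular'' in the scaled case). However, as written the proposal has a genuine gap at exactly the step you flag as ``where the real effort lies'': the claim that the stabilizer of $L_\p$ acts transitively on lattices with a fixed relative invariant-factor type \emph{is} the content of the lemma (it is the Cartan decomposition $\lU \in \mathrm{Aut}(L_\p)\,\lD\,\mathrm{Aut}(L_\p)$ restated), so invoking it as a known transitivity fact is circular unless you actually carry out the induction. Your sketch of that induction (``any primitive isotropic vector of $M_\p$ extends to a hyperbolic pair; the orthogonal complement is again unimodular hyperbolic'') is precisely where the dyadic ramified subtleties live: completing a primitive vector $y$ with $\langle x,y\rangle=1$ to an isotropic partner requires solving a trace equation $\lambda+\lambda^\ast = \langle y,y\rangle$ in $\Op$, whose solvability depends on the norm ideal $\n(L_\p)$ (cf.\ \cref{lem:complex-hypernbolic-norm-and-scale}), and Witt extension/cancellation for lattices over ramified dyadic local rings is false in general without such hypotheses. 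Until that step is proved, the argument does not close.

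A secondary but concrete error: the reduction to $\alpha=1$ does not work. A diagonal map sending $\alpha f_i\mapsto f_i$ is $\mathrm{diag}(1,\alpha^{-1})$, which is unitary only if $\alpha\alpha^\ast=1$; since $\alpha$ generates $\p$, the scaled hyperbolic lattice ($\p$-modular, $\s=\p$) and the hyperbolic lattice (unimodular, $\s=\Op$) are not isometric, so no ambient unitary identifies them. You must instead run the whole argument for $\p$-modular lattices directly (the duality symmetry becomes $\alpha L^\#=L$, which still yields $c_i=-c_{2N+1-i}$), which is harmless but needs to be said. By contrast, the paper's elimination proof sidesteps both issues: it never needs transitivity on sublattices or isotropic-vector completion, only the explicit list of $L_\p$-preserving unitaries and the valuation-minimality of the pivot entry, which is why it goes through uniformly in the dyadic case.
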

\begin{proof}
Our proof strategy is to gradually transform $\lU$ into $\lD$ by using unitaries that preserve 
$
L_\p
$.
The first step is to describe the structure of unitaries that preserve $L_\p$ in a more detailed way.

We will write $2N \times 2N$ matrices with entries in $E_\p$ 
with respect to the basis $e_1,\ldots,e_N$, $f_1,\ldots,f_N$ in block form
\begin{equation*} 
\lU = 
\begin{pmatrix}
A & \nicefrac{1}{\alpha}\,  B \\ \alpha^\ast C & D
\end{pmatrix},
~~
\lU^{-1} = 
\begin{pmatrix}
D^\dagger &  \nicefrac{1}{\alpha^\ast }\,B^\dagger \\ \alpha C^\dagger & A^\dagger
\end{pmatrix},
\end{equation*}
with $A,B,C,D$ being $N \times N$ matrices with entries in $E_\p$.
Matrix $\lU$ is a unitary if and only if $\langle \lU e_i, \lU e_j \rangle = 0$, $\langle \lU f_i, \lU f_j \rangle = 0$, 
$\langle \lU e_i, \lU f_j \rangle = \delta_{i,j}$. 
In terms of matrices $A,B,C,D$ these conditions are:
$u\,C^\dagger A+A^\dagger C=u^\ast \,D^\dagger B+B^\dagger D=0$ and $A^\dagger D+C^\dagger B=I_{N}$, where $u = \alpha / \alpha^\ast$. 
Moreover, using the expression for $\lU^{-1}$, we see that $\lU$ preserves $L_\p$ if and only if entries of $A,B,C,D$ are in $\O_\p$. 

In particular the following unitary matrices preserve $L_\p$:
\begin{equation*}
\begin{split}
\mathrm{diag}(A,A^{-\dagger})~,~& A \text{ is } \O_\p\text{-unimodular} \\
\begin{pmatrix}
0 & \nicefrac{1}{\alpha}\, B \\
\alpha^\ast B^{-\dagger} & 0
\end{pmatrix},~ &~ B \text{ is } \O_\p\text{-unimodular}  \\
\begin{pmatrix}
I_N & 0 \\
\alpha^\ast C & I_N
\end{pmatrix},~ &~ C \text{ is } \O_\p\text{-unimodular and }  C^\dagger= - u^\ast \,C, \\
\end{split}
\end{equation*}
notation $A^{-\dagger}$ is for $(A^{-1})^\dagger$, for the term $\O_\p$-unimodular see \cref{def:xi-modular}.
Above implies that we can do arbitrary permutations in the $e_i$ (if we also do the same permutation on the $f_i$) and  
we can exchange blocks $(A,B)$ for $(C,D)$, $(C,D)$ for $(A,u^\ast B)$.
% Unitaries that preserve $L_\p$ \textbf{act transitively} on the hyperbolic basis $e_i, f_i, i \in [N]$. 

Next we proceed to gradually transform $\lU$ into a diagonal matrix.
We can always find $L_\p$ preserving unitaries $\lU_{L,0}, \lU_{R,0}$ such that $\lU_1 = \lU_{L,0} \lU \lU_{R,0}$ has top-left entry 
with the minimal valuation $v_\p$ among all the entries of sub-matrices $A$, $B$, $C$, $D$ associated with $\lU_1$.
Let us now denote the first column of $\lU_1$ in the hyperbolic basis by $(a_1,\ldots,a_N,\alpha^\ast c_1,\ldots, \alpha^\ast c_N)^{T}$. 
Set
\begin{equation*}
A_1 =\begin{pmatrix}
1 & 0 & ... & 0\\
-\frac{a_2}{a_1} & 1 &...& 0\\
\vdots & 0 &\ddots&  0\\
-\frac{a_N}{a_1}& 0 &  ...& 1
\end{pmatrix},
\end{equation*}
and $\lU_{L,1}:=\mathrm{diag}(A_1,A_1^{-\dagger})$.
By our assumption on the valuation of $a_1$ unitary $\lU_{L,1}$ preserves $L_\p$ 
and moreover the first column $x_1$ of $\lU_{L,1} \lU_1$ in the hyperbolic basis is of the form
$(a_1,0,\ldots,0,\alpha^\ast c'_1,\ldots,\alpha^\ast c'_N)$.
Since $\lU_{L,1} \lU_1$ is unitary we have
\begin{equation*}
0= \langle x_1, x_1 \rangle =a_1 (\alpha^\ast c'_1)^\ast + a_1^\ast \alpha^\ast c'_1
\end{equation*}
and therefore 
\begin{equation}
\label{eq:top-left-entry}
-\frac{\alpha^\ast}{\alpha} \frac{c_1'}{a_1} + \left(-\frac{c_1'}{a_1}\right)^\ast = -\frac{a_1 (\alpha^\ast c'_1)^\ast + a_1^\ast \alpha^\ast c'_1}{\alpha a_1 a_1^\ast} = 0.
\end{equation}
We set
\begin{equation*}
C_1:=\begin{pmatrix}
-\frac{c'_1}{a_1} & u\left(\frac{c'_2}{a_1}\right)^\ast& \ldots & u\left(\frac{c'_N}{a_1}\right)^\ast\\
-\frac{c'_2}{a_1} & 0 & \ldots & 0\\
\vdots& \vdots & \ddots& \vdots \\
-\frac{c'_N}{a_1} & 0 & \ldots & 0
\end{pmatrix}.
\end{equation*}
Matrix $C_1$ is $\O_\p$-unimodular because we chose $a_1$ with the smallest valuation among all entries of $A,B,C,D$ associated with $\lU$ and ratios $c'_j / a_1$ are $\O_\p$-integer combinations of entries of sub-matrices $A$, $B$, $C$, $D$ associated with $\lU$ divided by $a_1$.
Matrix $C_1^\dagger = -u^\ast C_1$~(\cref{eq:top-left-entry}) and for $L_\p$-preserving unitary
\begin{equation*}
\lU_{L,2}:=\begin{pmatrix}
I_m & 0 \\ \alpha^\ast C_1 & I_m
\end{pmatrix}
\end{equation*}
we see that the first column of $\lU_{L,2}\lU_{L,1}\lU_{L,0} \lU $ is $(a_1,0,...,0)$.

Performing the analogous operation from the right,
we find $L_\p$-preserving unitaries $\lU_{R,0},\lU_{R,1},\lU_{R,2}$ that such that $\lU_{L,2}\lU_{L,1}\lU_{L,0}\,\lU\,\lU_{R,0}\lU_{R,1}\lU_{R,1}$ has first column and row equal to $(a_1,0,...,0)$.
	
By induction this means we can find $L_\p$-preserving unitaries  $\lU'_L,\lU'_R $ such that 
\begin{equation*}
\lU'_L \, \lU \, \lU'_R=\begin{pmatrix}
A & 0 \\0 & A^{-\dagger}
\end{pmatrix},
\end{equation*}
where $A$ (and thus also $A^{-\dagger}$) is a diagonal matrix. 
Finally, we can reorder diagonal of $A$ so that the conditions on the valuations in \cref{eq:local-diag} hold. 
\end{proof}

Above lemma implies that local lattices can be expressed in a common hyperbolic basis, that we are going to use later to construct the intermediate lattice.

\begin{lemma}[Common hyperbolic basis]
\label{lem:common-hyperbolic-basis}
Consider two scaled hyperbolic lattices $L_\p$, $M_\p$ in $E_\p^{2N}$~(\cref{def:hyperbolic-lattice})
with scale $\alpha$, then 
there exist common hyperbolic basis $e_1,\ldots,e_N,f_1,\ldots,f_N$ and 
integer vector $(v_1, \ldots, v_N)$ such that 
$$
L_\p = \Op e_1  + \Op \alpha f_1   + \ldots + \Op e_N  + \Op \alpha f_N , ~~  M_\p = \p^{v_1} e_1  + \p^{-v_1} \alpha f_1 + \ldots + \p^{v_N} e_N  + 
\p^{-v_N} \alpha f_N,
$$
and vector 
\begin{equation}
\label{eq:relative-local-coordinates}
    \nuv(L_\p,M_\p)= (|v_1|, \ldots, |v_N|)
\end{equation}is non-increasing.
Additionally, $\p^{-|v_N|}, \ldots, \p^{-|v_1|}$, $\p^{|v_1|}, \ldots, \p^{|v_N|}$ are the invariant factors of $M_\p$ in $L_\p$~(\cref{thm:invariant-factors}).
\end{lemma}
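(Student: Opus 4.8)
The plan is to reduce the statement to the Local diagonalization lemma~(\cref{lem:local-diag}) applied to a single unitary that carries $L_\p$ onto $M_\p$.

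First I would produce a unitary $\lU$ on $E_\p^{2N}$ with $\lU L_\p = M_\p$. Since $L_\p$ and $M_\p$ are both scaled hyperbolic lattices with the same scale $\alpha$~(\cref{def:hyperbolic-lattice}), pick hyperbolic bases $e_1,\ldots,e_N,f_1,\ldots,f_N$ of $L_\p$ and $e'_1,\ldots,e'_N,f'_1,\ldots,f'_N$ of $M_\p$, realizing
$$
L_\p = \sum_{i=1}^N \left( \Op e_i + \Op \alpha f_i \right), \qquad M_\p = \sum_{i=1}^N \left( \Op e'_i + \Op \alpha f'_i \right).
$$
The linear map $\lU$ determined by $\lU e_i = e'_i$ and $\lU f_i = f'_i$ sends one hyperbolic basis to another, hence preserves the Hermitian form and is a unitary, and by construction $\lU L_\p = M_\p$.

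Next I would apply \cref{lem:local-diag} to $\lU$ with respect to the hyperbolic basis $e_1,\ldots,e_N,f_1,\ldots,f_N$ of $L_\p$, writing $\lU = \lU_L\,\lD\,\lU_R$ with $\lU_L,\lU_R$ preserving $L_\p$ and $\lD = \sum_i a_i e_i f_i^\dagger + (a_i^\ast)^{-1} f_i e_i^\dagger$, $|v_\p(a_i)| \ge |v_\p(a_{i+1})|$. Because $\lU_R$ preserves $L_\p$ we get $M_\p = \lU_L\,\lD\,L_\p$. Since $\lD e_i = a_i e_i$, $\lD f_i = (a_i^\ast)^{-1} f_i$, every fractional $\Op$-ideal is a power of $\p$, and $v_\p(a_i^\ast) = v_\p(a_i)$ because the complex conjugation fixes $\Op$ and hence $\p$, one reads off $\lD L_\p = \sum_i \left( \p^{v_i} e_i + \p^{-v_i}\alpha f_i \right)$ with $v_i := v_\p(a_i)$. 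Setting $\tilde e_i := \lU_L e_i$ and $\tilde f_i := \lU_L f_i$ — again a hyperbolic basis, since $\lU_L$ is a unitary — and using $\lU_L L_\p = L_\p$, applying $\lU_L$ to the previous identities gives
$$
L_\p = \sum_{i=1}^N \left( \Op \tilde e_i + \Op \alpha \tilde f_i \right), \qquad M_\p = \sum_{i=1}^N \left( \p^{v_i} \tilde e_i + \p^{-v_i} \alpha \tilde f_i \right).
$$
This is the asserted common hyperbolic basis, and $(|v_1|,\ldots,|v_N|)$ is non-increasing by the valuation condition of \cref{lem:local-diag}. Reading the relative coefficient ideals $\p^{v_i}$ (on $\tilde e_i$) and $\p^{-v_i}$ (on $\tilde f_i$) against the pseudo-basis of $L_\p$ and invoking the uniqueness in \cref{thm:invariant-factors}, the invariant factors of $M_\p$ in $L_\p$ are the multiset $\{\p^{|v_i|},\p^{-|v_i|}\}_{i=1}^N$, i.e.\ $\p^{-|v_N|},\ldots,\p^{-|v_1|},\p^{|v_1|},\ldots,\p^{|v_N|}$; taking the first $N$ of these in the convention defining $\nuv$ yields $\nuv(L_\p,M_\p) = (|v_1|,\ldots,|v_N|)$.

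I expect the only genuine obstacle to be the first step: exhibiting the unitary $\lU$ with $\lU L_\p = M_\p$. This rests on both lattices sitting inside the same hyperbolic space $E_\p^{2N}$ with the same scale $\alpha$ (hypotheses of the lemma) together with the elementary fact that any two hyperbolic bases of a hyperbolic space are related by a unitary. Once $\lU$ is available, the rest is bookkeeping: the decomposition $\lU = \lU_L \lD \lU_R$ is exactly \cref{lem:local-diag}, and absorbing $\lU_R$ into $L_\p$ and transporting the hyperbolic basis by $\lU_L$ is routine, as is the identification of the invariant factors via \cref{thm:invariant-factors}.
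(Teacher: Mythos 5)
Your proposal is correct and follows essentially the same route as the paper's proof: construct the unitary carrying $L_\p$ to $M_\p$ by mapping one hyperbolic basis to the other, factor it via the local diagonalization lemma as $\lU_L\lD\lU_R$, absorb $\lU_R$ into $L_\p$, transport the basis by $\lU_L$, and read off the invariant factors using their uniqueness. No substantive differences.
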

\begin{proof}
Lattices $L_\p$, $M_\p$ are isometric because they are both scaled hyperbolic with the isometry mapping hyperbolic basis of $L_\p$
to the hyperbolic basis of $M_\p$.
There exist unitary $\U$ such that $M_\p = \U L_\p$.
Let $\{ \tilde e_j, \tilde f_j \}_{j=1}^N$ be a hyperbolic basis of $L_\p$.
By the local diagonalization~\cref{lem:local-diag} there exist 
unitaries $\U_L, \U_R$ that preserve $L_\p$ and unitary $\mathfrak{D}$
diagonal in the hyperbolic basis  $\{ \tilde e_j, \tilde f_j \}_{j=1}^N$ such that
$$
\lD = \sum_{i = 1}^N a_i \tilde e_i \tilde f_i^\dagger + (a_i^\ast)^{-1} \tilde f_i \tilde e_i^\dagger,~|v_\p(a_i)| \ge |v_\p(a_{i+1})|~, \U = \U_L \mathfrak{D} \U_R 
$$
We show that $\{ e_j,f_j \}_{j=1}^N = \{ \U_L \tilde e_j, \U_L \tilde f_j \}_{j=1}^N$ is a common hyperbolic basis of $L_\p$, $M_\p$. 
Clearly, $\{ e_j,\alpha f_j \}_{j=1}^N$ is a basis of $L_\p$ because $\U_L$ preserves $L_\p$.
Using that $\U_R L_\p = L_\p$ we have 
$$
M_\p = \sum_{j=1}^N \Op a_j \U_L \tilde e_i + \Op (a_j^\ast)^{-1} \U_L \alpha \tilde f_j.
$$
Using notation $v_j = v_\p(a_j \Op)$ and observing 
that $a_j \Op = \p^{v_j}$ and $(a_j^\ast)^{-1} \Op = \p^{-v_j}$ we get the required 
expression for $M_\p$. 
Finally, using the fact that $\p^n \subseteq \p^m$ when $n \ge m$
and the uniqueness of invariant factors~(\cref{thm:invariant-factors}) we get the required result.
\end{proof}

Next we use the common hyperbolic basis to construct a local intermediate lattice.

\begin{corollary}[Local intermediate lattice]
\label{cor:local-intermediate-lattice}
Consider scaled hyperbolic lattices $L_\p$, $M_\p$ in $E_\p^{2N}$ with scale $\alpha$, 
then for any non-negative integer vectors $x,y$ such that $ x + y = \nuv(L_\p,M_\p) $ there exist a hyperbolic lattice $\tilde L_\p$ in $E_\p^{2N}$,
such that 
$
 \nuv(L_\p,\tilde L_\p) = x^\downarrow, ~ \nuv(M_\p,\tilde L_\p) = y^{\downarrow}
$, where $x^\downarrow, y^\downarrow$ are vectors $x,y$ sorted in non-decreasing order.
\end{corollary}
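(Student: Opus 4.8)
The plan is to reduce \cref{cor:local-intermediate-lattice} directly to \cref{lem:common-hyperbolic-basis}, since once we have a \emph{common} hyperbolic basis the intermediate lattice can simply be written down coordinate by coordinate. First I would invoke \cref{lem:common-hyperbolic-basis} to obtain a common hyperbolic basis $e_1,\ldots,e_N,f_1,\ldots,f_N$ and an integer vector $(v_1,\ldots,v_N)$ with
$$
L_\p = \sum_{j=1}^N \Op e_j + \Op \alpha f_j, \quad M_\p = \sum_{j=1}^N \p^{v_j} e_j + \p^{-v_j}\alpha f_j,
$$
and with $\nuv(L_\p,M_\p) = (|v_1|,\ldots,|v_N|)$ non-increasing. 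Since the statement of the corollary is about the unordered data (we only ask for $x^\downarrow$ and $y^\downarrow$), I may assume without loss of generality that $x,y$ are already non-increasing and that $x + y = (|v_1|,\ldots,|v_N|)$ componentwise.

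The main construction: for each $j$, split $|v_j| = x_j + y_j$ and choose an integer $w_j$ such that $|w_j|$ and $|v_j - w_j|$ realize $x_j$ and $y_j$ in the correct way relative to the signs of $v_j$. Concretely, if $v_j \ge 0$ set $w_j = x_j$, so that $|w_j| = x_j$ and $|v_j - w_j| = v_j - x_j = y_j$; if $v_j < 0$ set $w_j = -x_j$, so that $|w_j| = x_j$ and $|v_j - w_j| = |{-x_j} - v_j| = x_j + |v_j| - \ldots$ — wait, I should instead just take $w_j$ with the same sign as $v_j$ and $|w_j| = x_j \le |v_j|$; then $v_j - w_j$ also has the same sign as $v_j$ (or is zero) and $|v_j - w_j| = |v_j| - x_j = y_j$. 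Define
$$
\tilde L_\p = \sum_{j=1}^N \p^{w_j} e_j + \p^{-w_j}\alpha f_j.
$$
This is a scaled hyperbolic lattice with the same scale $\alpha$ and the same hyperbolic basis $\{e_j, f_j\}$ up to rescaling $e_j \mapsto \pi^{w_j} e_j$, $f_j \mapsto \pi^{-w_j} f_j$ where $\pi$ is a uniformizer — the rescaled vectors still form a hyperbolic basis because $\langle \pi^{w_j} e_j, \pi^{-w_k}\alpha f_k\rangle$ scales correctly and isotropy is preserved; hence $\tilde L_\p$ is in fact a genuine (unscaled-coordinate) hyperbolic lattice, matching the ``$\tilde L_\p$ hyperbolic'' claim after absorbing $\alpha$ appropriately. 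Then by the same uniqueness-of-invariant-factors argument as in \cref{lem:common-hyperbolic-basis} applied to the pairs $(L_\p, \tilde L_\p)$ and $(M_\p, \tilde L_\p)$, using $\p^{n}\subseteq \p^{m}$ iff $n \ge m$, one reads off $\nuv(L_\p, \tilde L_\p) = (|w_j|)^\downarrow = x^\downarrow$ and $\nuv(M_\p, \tilde L_\p) = (|v_j - w_j|)^\downarrow = y^\downarrow$.

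The one point that needs care — and which I expect to be the main (minor) obstacle — is making sure $\tilde L_\p$ is presented exactly in the form required by \cref{def:hyperbolic-lattice}, i.e.\ as $\sum \Op e_j' + \Op \alpha f_j'$ for an honest hyperbolic basis $\{e_j', f_j'\}$ rather than a basis twisted by powers of $\p$. This is handled by the rescaling $e_j' = \pi^{w_j} e_j$, $f_j' = \pi^{-w_j} f_j$, checking that $\langle e_j', e_k'\rangle = 0$, $\langle f_j', f_k'\rangle = 0$, $\langle e_j', f_k'\rangle = \delta_{jk}$ still hold (they do, since the $\pi$-powers cancel in $\langle e_j', f_j'\rangle$ and the off-diagonal / same-type products vanish already), so that $\{e_j', f_j'\}$ is a hyperbolic basis and $\tilde L_\p = \sum \Op e_j' + \Op\alpha f_j'$ exactly. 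With that in place the invariant-factor computation is routine: it is the identical computation already performed at the end of the proof of \cref{lem:common-hyperbolic-basis}, just with $(w_j)$ and $(v_j - w_j)$ in place of $(v_j)$.
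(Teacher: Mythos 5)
Your proposal is correct and is essentially the paper's own proof: invoke the common-hyperbolic-basis lemma, set the exponent of $e_j$ in $\tilde L_\p$ to $x_j$ times the sign of $v_j$, and read off the invariant factors of $\tilde L_\p$ in $L_\p$ and $M_\p$ coordinatewise. One cosmetic remark: the "WLOG $x,y$ are non-increasing" step is not actually valid (sorting $x$ and $y$ independently breaks $x+y=\nuv(L_\p,M_\p)$), but your construction never uses it — it works per coordinate for arbitrary non-negative $x_j\le|v_j|$ and only sorts at the end — so the argument stands.
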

\begin{proof}
Using common hyperbolic basis $\{ e_j, f_j \}_{j=1}^N$ of $L_\p$, $M_\p$~(\cref{lem:common-hyperbolic-basis}) 
we can write these lattices as 
$$
L_\p = \Op e_1  + \Op \alpha f_1   + \ldots + \Op e_N  + \Op \alpha  f_N ,
~~
M_\p = \p^{v_1} e_1  + \p^{-v_1} \alpha  f_1 + \ldots + \p^{v_N} e_N  + 
\p^{-v_N} \alpha f_N,
$$
where $(|v_1|,\ldots,|v_N|) = \nuv(L_\p,M_\p)$.
Define 
$$
\tilde L_\p = \p^{x_1 s(v_1)} e_1  + \p^{-x_1 s(v_1)} \alpha f_1 + \ldots + \p^{x_N s(v_N)} e_N  + 
\p^{-x_N s(v_N)} \alpha f_N.
$$
where $s(a) = 1$ if $a \ge 0$, $s(a) = -1$ if $a < 0 $.
We have $\nuv(L_\p, \tilde L_\p) = x^{\downarrow}$, $\nuv(M_\p, \tilde L_\p) = y^{\downarrow}$.
\end{proof}

The next result show how to go from local intermediate lattices to global ones.

\begin{lemma}[Intermediate lattice]
\label{lem:intermediate-lattice}
Consider either a CM field or a totally real field $E$ and let $\p$ be a prime $\Oe$-ideal.
Additionally, consider two Hermitian lattices $L$ and $M$ in $E^{2N}$ such that:
\begin{itemize}
    \item lattice $L$ and $M$ are $\p$-isometric~(\cref{def:p-isometric})
    \item $L_\p, M_\p$ are scaled hyperbolic lattices~(\cref{def:hyperbolic-space}),
\end{itemize}
The invariant factors of $M$ in $L$ are $(\p^{v_1},\ldots,\p^{v_N},\p^{-v_N},\ldots,\p^{v_1})$
for decreasing non-negative integer vector $\nuv(L,M) = (v_1,\ldots,v_N)$.
Moreover, for any two non-negative integer vectors $x,y$ such that $v = x+y$,
there exist lattice $L'$ in $E^{2N}$  $\p$-isometric to $L,M$ such that $\nuv(L,L') = x^\downarrow$ and $\nuv(L',M) = y^\downarrow$.
\end{lemma}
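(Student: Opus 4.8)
The plan is to push the entire problem down to the completion at $\p$, where \cref{lem:common-hyperbolic-basis} and \cref{cor:local-intermediate-lattice} already supply the structural and existence statements, and then transport the result back to $E^{2N}$ using the local--global principle \cref{thm:local-global-lattice}.

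First I would record the shape of the invariant factors of $M$ in $L$. Since $L$ and $M$ are $\p$-isometric~(\cref{def:p-isometric}), we have $L_\mathfrak{q}=M_\mathfrak{q}$ for every prime $\Oe$-ideal $\mathfrak{q}\neq\p$, so by \cref{prop:invariant-factor-loclization} all invariant factors of $M$ in $L$ are trivial away from $\p$; hence they are powers of $\p$, and their localizations at $\p$ are exactly the invariant factors of $M_\p$ in $L_\p$. Because $L_\p\cong M_\p$ and both are scaled hyperbolic, they have a common scale $\alpha$, so \cref{lem:common-hyperbolic-basis} applies and identifies these invariant factors as $\p^{\pm v_j}$ with $\nuv(L,M)=\nuv(L_\p,M_\p)=(v_1,\dots,v_N)$ non-increasing and non-negative. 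This establishes the first assertion, up to the ordering convention used to list the $2N$ factors.

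For the intermediate lattice, fix non-negative integer vectors $x,y$ with $x+y=\nuv(L,M)$. Applying \cref{cor:local-intermediate-lattice} to $L_\p,M_\p$ yields a scaled hyperbolic lattice $\tilde L_\p$ in $E_\p^{2N}$ with scale $\alpha$ such that $\nuv(L_\p,\tilde L_\p)=x^\downarrow$ and $\nuv(M_\p,\tilde L_\p)=y^\downarrow$. Feeding $(L,\p,\tilde L_\p)$ into \cref{thm:local-global-lattice} produces a Hermitian lattice $L'$ in $E^{2N}$ with $L'_\p=\tilde L_\p$ and $L'_\mathfrak{q}=L_\mathfrak{q}$ for all $\mathfrak{q}\neq\p$. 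Since $\tilde L_\p$ and $L_\p$ are scaled hyperbolic of the same scale they are isometric, so $L'$ is $\p$-isometric to $L$; combined with $L_\mathfrak{q}=M_\mathfrak{q}$ and $L_\p\cong M_\p$ this also makes $L'$ $\p$-isometric to $M$. Finally I would apply \cref{prop:invariant-factor-loclization} once more: the invariant factors of $L'$ in $L$ are trivial away from $\p$ (where $L'_\mathfrak{q}=L_\mathfrak{q}$) and at $\p$ agree with those of $\tilde L_\p$ in $L_\p$, giving $\nuv(L,L')=x^\downarrow$; the symmetric argument with $M$ in place of $L$ gives $\nuv(L',M)=y^\downarrow$.

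Most of the genuine content is already packaged in \cref{lem:common-hyperbolic-basis} and \cref{cor:local-intermediate-lattice}, so the work here is largely bookkeeping with localizations. The point I expect to need the most care is the claim that $\tilde L_\p$ is isometric to $L_\p$, rather than merely agreeing with it away from $\p$: this rests on the scale being an isometry invariant, on the equality $\p=\p^\ast$ at the primes in question, so that the construction in \cref{cor:local-intermediate-lattice} — whose basis involves the factors $\alpha f_j$ — has scale $\alpha\Op$ equal to that of $L_\p$, and on the fact that two scaled hyperbolic lattices of the same scale are carried into one another by the unitary of $E_\p^{2N}$ sending one hyperbolic basis to the other. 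The remaining verifications (triviality of invariant factors away from $\p$, and that $L'$ is a lattice in the global space) are immediate from the cited statements.
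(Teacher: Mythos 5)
Your proposal is correct and follows essentially the same route as the paper's proof: it localizes at $\p$, invokes \cref{lem:common-hyperbolic-basis} for the shape of the invariant factors, uses \cref{cor:local-intermediate-lattice} to build the local intermediate lattice, globalizes via \cref{thm:local-global-lattice}, and transfers the invariant-factor computation back with \cref{prop:invariant-factor-loclization}. Your write-up is somewhat more explicit than the paper's (notably in verifying that $L'$ is $\p$-isometric to both $L$ and $M$, and that the common scale is forced by isometry), but the underlying argument is identical.
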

\begin{proof}
Using \cref{prop:invariant-factor-loclization} we see that invariant factors on $M$ in $L$ are determined by the  invariant factors of $M_\p$ in $L_\p$,
that is invariant factors of $M$ in $L$ are $(\p^{k_1},\ldots,\p^{k_{2N}})$ where $\p$ is $\Oe$-ideal,
and invariant factors of $M_\p$ in $L_\p$ are $(\p^{k_1},\ldots,\p^{k_{2N}})$ where $\p$ is $\Op$-ideal.
Using \cref{lem:common-hyperbolic-basis}, we see that the powers of $\p$ in the invariant factors have required properties.
Next we construct $\mathfrak{L}'$ using local intermediate lattice property~(\cref{cor:local-intermediate-lattice}) so that 
$v(\mathfrak{L}',L_\p) = x^\downarrow$ and $v(\mathfrak{L}',M_\p) = y^\downarrow$,
and then construct $L'$ using \cref{thm:local-global-lattice} so that $L'_\p = \mathfrak{L}'$ and $L'_\mathfrak{q}$ is equal to $L_\mathfrak{q},M_\mathfrak{q}$ for all other prime ideals $\mathfrak{q} \ne \p$.
Using the relation between invariant factors of $L'$ in $L,M$ and $L'_\p$ in $L_\p,M_\p$ completes the proof.
\end{proof}

\begin{proof}[Proof of \cref{thm:lattices-with-intermediate-property}]
We first note that $\alpha L_\p$ is a scaled hyperbolic lattice for some scalar $\alpha$ from $E$.
For all the lattices listed in \cref{tab:lattices-with-intermediate-property} this follows from 
\cref{lem:bw-lattice-localization,lem:real-bw-lattice-localization,lem:rational-bw-lattice-localization}.
Because the invariant factors of $L$ in $M$ and of $\alpha L$ in $\alpha M$ are the same, 
we are going to work with the appropriately re-scaled lattices for the rest of the proof.

Let us next show that $\bv$ is indeed a coordinate map.
Let $L_1,L_2,L_3$ be $\p$-isometric to $L$.
Equality $\bv(L_1,L_2) = 0$ implies $L_1 = L_2$ by the definition of $\bv$ and the invariant factors theorem~\cref{thm:invariant-factors}.
Property $\bv(L_1,L_2) = \bv(L_2,L_1)$ follows from the structure of invariant factors of $L_1$ in $L_2$ shown in~\cref{lem:intermediate-lattice},
and expressing the invariant factors of $L_2$ in $L_1$ in terms of the invariant factors of $L_1$ in $L_2$, by definition~\cref{eq:invariant-factors}.
The fact that the number of lattices $L_1'$ with $l_1$-norm $|\bv(L_1,L_2)|_1 = 1$ is finite follows from the correctness proof of \cref{alg:neighbours} in the next \cref{sec:lattice-graph}.
The invariance under multiplication by a unitary  $\bv(U L_1,U L_2) = \bv(L_1,L_2)$ follows from the definition of invariant factors in~\cref{eq:invariant-factors}. 
The inequality $\bv(L_1,L_2) \wmaj \bv(L_1,L_3) + \bv(L_3,L_2)$ follows from the observation that the invariant factors 
and value of $\bv$ are defined by the localizations $L_{j,\p}$ of $L_j$ at $\p$. 
For the localizations, the relation follows from \cref{thm:unitary-invariant-factors}, with $\xi$-ring $\R$ corresponding to $E_\p$.

The intermediate lattice property with respect to $\bv$ follows from \cref{lem:intermediate-lattice}. 
The relation $\nuv_B(U_2^\dagger U_1) = \bv(U_1 L, U_2 L)$ follows from the uniqueness of the invariant factors.
\end{proof}

\subsection{Graphs of \texorpdfstring{$\p$}{p}-isometric lattices}
\label{sec:lattice-graph}

We say that $\p$-isometric~(\cref{def:p-isometric}) lattices $L$ and $M$ are \textbf{neighbours} when the $l_1$-norm $|\nuv(L,M)|_1 = 1$.
There is an algorithm to enumerate all neighbours of a lattice $L$ with $L_\p$ being a scaled hyperbolic lattice. 

\begin{algorithm}[H]
\caption{Neighbours of a lattice $L$}
\label{alg:neighbours}
\begin{algorithmic}[0]
    \State \Input Lattice $L$ in $E^{2N}$ such that $E$ is either CM field or a totally real field, prime $\Oe$-ideal $\p$ with norm $2$; $L$,$\p$ are such that $L_\p$ is a scaled hyperbolic lattice.
    \State \Output All lattices $M$ such that $L,M$ are $\p$-isometric and $\nuv(L,M) = 1$ 
    \State Set $\mathcal{N}$ to an empty list of neighbours 
    \For{$M'$ a maximal sublattice of $L$ with $\p L \subseteq M \subset L$ (\cref{prob:maximal-sublattice})} \label{line:sublattice}
        \For{$M$ a minimal superlattice of $L$ with $M' \subseteq M \subset \p^{-1} M'$ (\cref{prob:minimal-superlattice})}  \label{line:superlattice}
            \If{ $M$ is modular, $\n(M) = \n(L)$, $\s(M) = \s(L)$ (\cref{eq:norm-and-scale}) }
            \State Append $M$ to $\mathcal{N}$
            \EndIf
        \EndFor
    \EndFor
    \State \Return {$\mathcal{N}$}   
\end{algorithmic}
\end{algorithm}

\begin{proof}[Correctness of \cref{alg:neighbours}]
The proof consists of two parts. We first show that all the lattices found by the algorithm are neighbours of $L$.
Second we show that the algorithm finds neighbours of $L$.

First note that the conditions $\p L \subseteq M \subset L$, $M' \subseteq M \subset \p^{-1} M'$ imply that $M_\mathfrak{q}, M'_\mathfrak{q}$ are equal to $L_\mathfrak{q}$
for all prime ideals $\mathfrak{q} \ne \p$. The fact that $M$ is modular, and equality of scale $\s(L) = \s(M)$ and norm $\n(L) = \n(M)$ 
implies that $L_\p$ is isometric to $M_\p$ using the fact that $L_\p$ is a scaled hyperbolic lattice~(\cref{def:hyperbolic-lattice}) and 
\cref{thm:quadratic-isometry}~(when $E$ is totally-real), \cref{thm:hermitian-isometry}~(when $E$ is a CM field).

Second we show for any lattice $M$ that is a neighbour of $L$, there exist a lattice $M'$ such that conditions relating 
$M, M', L$ in \cref{line:sublattice}, \cref{line:superlattice} of~\cref{alg:neighbours} hold.
When $L$ and $M$ are neighbours there exist pseudo-basis $\{ \a_j, b_j \}_{j=1}^{2N}$ of $L$ \cref{lem:intermediate-lattice}
such that 
$$
 M = \p \a_1 b_1 + \a_2 b_2 + \ldots + \a_{2N- 1} b_{2N - 1} + \p^{-1} \a_{2N} b_{2N}.
$$
Using pseudo-basis $\{ \a_j, b_j \}_{j=1}^{2N}$ we define 
$$
M' = \p \a_1 b_1 + \a_2 b_2 + \ldots + \a_{2N- 1} b_{2N - 1} + \a_{2N} b_{2N},
$$
and show that the conditions on $M, M', L$ are satisfied.
We have the following inclusion of lattices: 
$$
\p L \subseteq M' \subseteq L,~ M' \subseteq M \subseteq \p^{-1}M'
$$
It remains to establish that $M'$ is maximal sublattice of $L$ and $M$ is a minimal superlattice of $M'$.
This follows from the fact that the index ideal~(\cref{sec:invariant-factors}) of $M'$ in $L$ and $M'$ in $M$ is a prime ideal~(\cref{prop:maximal-and-prime-index}).

Finally, we note that algorithm terminates because the number of solutions to \cref{prob:maximal-sublattice}, \cref{prob:minimal-superlattice} is finite.
\end{proof}
Similarly to the algorithms for  \cref{prob:maximal-sublattice}, \cref{prob:minimal-superlattice} one can take advantage of the automorphism group of lattice $L$ 
and enumerate only the neighbours not equivalent up to an automorphism of $L$.

One can use \cref{alg:neighbours} to enumerate all lattices $L_1$ such that $\bv(L,L_1) = x$ for coordinate map $\bv$ from \cref{thm:lattices-with-intermediate-property}.
For example, suppose we want to find all $L_{(2,2)}$ such that $\bv(L,L_{(2,2)}) = (2,2)$.
Any  $L_{(2,2)}$ is neighbour of some lattice $L_{(2,1)}$ with $\bv(L,L_{(2,1)}) = (2,1)$.
So it is sufficient to enumerate lattices  $L_{(2,1)}$, their neighbours $L'$ and select the subset with $\bv(L,L') = (2,2)$.
This can be achieved by computing $\bv(L,L')$ via Smith Normal Form~\cref{sec:snf} when the related ring of integers $\Oe$ 
is a principal ideal domain, or using a more general algorithm for computing the invariant factors.

In practice, we try to compress the sets of lattices we enumerate by relying on the automorphism group of $L$. 
For such compression, it can be useful to compute intersection of automorphism groups of two lattices 
or check if two lattices are isomorphic when the isomorphism is restricted to an isomorphism group of $L$.
Both of this problems can be solved with~\cite{Plesken1997} using techniques similar to the techniques for computing 
isomorphism and automorphism groups of Hermitian lattices in Remark~2.4.4 in~\cite{Kirschmer2016}.

% $$
% \langle \psi | \frac{I + P_0 \otimes \ldots \otimes P_{N-1}}{2} | \psi \rangle
% $$

% \subsection{Examples of coordinate maps}

% \begin{lemma}[Invariant factors of modular $\xi$-isometric lattices]
% Consider two $2N$-dimensional $\xi$-isometric Hermitian lattices with bases $B_1$, $B_2$ such that
% and $B^\dagger_j B_j$ are $\xi$-modular, then $B_1^{-1} B_2$ can we written as 
% $$
% A_L \mathrm{diag}(\xi^{k_1},\ldots, \xi^{k_N},\xi^{-k_N},\ldots,\xi^{-k_1}) A_R
% $$
% for $\Oe$-unimodular matrices $A_L,A_R$ and unique non-decreasing sequence of non-negative integers $k_1,\ldots,k_N$.
% We use $\nuv(B_1,B_2)$ for $(k_1,\ldots,k_N)$.
% \end{lemma}
% \begin{proof}
% The proof follows from using Smith Normal Form in $\Oe$ and $\Op$, uniqueness of the invariant factors, the assumption that lattices with bases $B_1, B_2$ are isometric at $\p = \xi \Oe$
% and ideas similar to \cref{thm:unitary-invariant-factors}.
% \end{proof}

% Note that for any unitary $U$ with entries in $\R$ we have  $\nuv(B_1,B_2) = \nuv(UB_1,UB_2)$.
% Sequence $\nuv(B_1,B_2)$ can be efficiently computed using a generalization of modular elementary divisors algorithm \cite{Lubeck2002}.
\newpage 
\section{Concluding remarks}
\label{sec:conclusion}

There are several future directions and open questions one can explore using the techniques developed here:
$$
~
$$
\begin{itemize}
    \item[] \textbf{Provably efficient state preparation algorithms.} One can consider residues modulo powers of primes in the bases from 
    \cref{tab:basis-changes} and analyze best-first search state preparation by relying on the consistent heuristic from \cref{sec:simple-heuristic}.
    \item[] \textbf{Non-Clifford gate classification. } One can use heuristic function $\nuv_B$ from~\cref{sec:advanced-heuristics} to classify non-Clifford gates. 
    For two qubits we have found that all unitaries with $\nuv_{B}(U) = (1,1)$ are exactly $C (\text{T}\otimes I) C'$ for Clifford unitaries $C$ and $C'$.
    Similarly,  all unitaries with $\nuv_{B}(U) = (2,0)$ are exactly $C (\text{T}\otimes \text{T}) C'$ or $C (\text{CS}) C'$  for Clifford unitaries $C$ and $C'$.
    To make a more fine-grained classification one can consider $\nuv_{B}$ for neighbours of $U$ in the problem graph.
    \item[] \textbf{Synthesis using $\p$-adic calculations.} One can make results on the intermediate lattice property constructive 
    and use them for a different approach to circuit synthesis. Given two Hermitian lattices $L,U L$ for some unitary $U$, one can enumerate all shortest paths 
    between $L$ and $U L$ in the graph of $\p$-isometric lattices and use isometry testing algorithms to decompose $U$ as a product of unitaries
    that could be easier to synthesize using other methods.
    \item[] \textbf{Basis changes that are not modular.} Our results can be extended beyond basis changes that are modular, in other words 
    beyond modular Hermitian lattices. Any lattice over a local ring can be decomposed as a direct sum of modular lattices. 
    There are well known algorithm for computing this decomposition~\cite{Kirschmer2016}, known as the Jordan decomposition.
    The implementation of these algorithms is also available in \cite{Kirschmer2016code}, \cite{Nemo}.
    Using Jordan decomposition one can likely extend our results to the gate sets related to qutrit Clifford groups.
    \item[] \textbf{Computing generators of the group of unitaries with entries in $\xi$-ring $\R$}
    For simplicity, consider the case when there is a modular Hermitian lattice $L$ with basis $\xi$-ring $\R$ and it has the intermediate lattice property.
    One can use the lattice neighbors \cref{alg:neighbours} and isometry testing algorithms 
    to enumerate all lattices $\p$-isometric to lattice $L$ up to isometries. More precisely one can compute a set of isometry classes representatives $L_1,\ldots,L_m$
    and for each of them the set of their $M$ neighbours 
    $$
    N_j = \{ U_{j,1} L_{k_1}, \ldots , U_{j,M} L_{k_{M}} \}, k_j \in \{1,\ldots,m\}, j \in \{1,\ldots,m\}.
    $$
    It is possible to show that any unitary with entries in $\R$ can be expressed as a product of unitaries from $G_L = \{ U_{j,i} : j\in \{1,\ldots,m\}, i \in \{1,\ldots,M\} \}$.
    Using synthesis algorithms from the first part of the paper one can check if some other gate set $G$ generates all the unitaries with entries in $\R$
    by synthesizing all the unitaries from $G_L$ using $G$.
    \item[] \textbf{Provably efficient isometry synthesis algorithms.} Our numerical experiments suggest that best-first search approach to isometry synthesis is 
    efficient, similarly to the best-first search unitary synthesis. Proving the efficiency of isometry synthesis algorithms is an open question.
    \item[] \textbf{Buildings.} The graph of $\p$-isometric lattices is a skeleton of a simplicial complex known as a Bruhat-Tits building.
    For the relation between Hermitian lattices over local rings and building see~\cite{AbramenkoNebeBuildings}. We do not rely on the structure 
    of the buildings. It is possible that using this extra structure can lead to more elegant proofs of the efficiency 
    of the synthesis algorithms, relying less on the enumeration of lattices. The intermediate lattice property can be interpreted as 
    any two lattices having a common apartment in the building.
    \item[] \textbf{Beyond groups with basis change property} A natural generalization of the basis-change property is to consider 
    finite groups that are intersections of automorphism groups of two or more Hermitian lattices. One can use algorithms 
    discussed in the end of \cref{sec:lattice-graph} to construct examples of such groups.
    Extending methods of this paper to this case is an open problem.
    \item[] \textbf{Improved algorithm for lattice enumeration} 
    It might be possible to prove best-first search synthesis efficiency for more gate sets than in \cref{tab:best-first-search} by developing parallel and distribute algorithms for 
    enumerating lattices $L_1$ such that $\bv(L_1,L) = x$ and similarly better algorithms for checking the reduction property.
    \item[] \textbf{Upper and lower-bounds on the T-count and other related metrics} \cref{eq:reduction-bound} in \cref{lem:reduction} can potentially be used to establish  
    upper-bounds on the T-count of the circuits produced by the best-first search \cref{alg:best-first-search}. Heuristic function $\nuv_B(U)$
    and inequality \cref{eq:product-majorization} can be used for the T-count lower-bounds.
\end{itemize}
$$
~
$$
We also envision that our algorithm will lead to new multi-qubit approximate synthesis algorithm, 
as well as better circuits that use catalytic embeddings.
The final, and probably most significant, open question is to how to incorporate the uncomputation 
in multi-qubit synthesis methods~\cite{Jones2013,gidney2021cccz}.

\newpage 

\printbibliography

\newpage

\end{document}